\newtheorem{thm}{Theorem}
\newtheorem{defin}{Definition}
\newtheorem{lem}{Lemma}
\newtheorem{assum}{Assumption}
\newtheorem{rem}{Remark}
\newtheorem{con}{Condition}
\newtheorem{prop}{Proposition}
\newtheorem{Ex}{Instance}
\begin{document}

\title[WGoM]{Mixed membership estimation for categorical data with weighted responses}

%%=============================================================%%
%% Prefix	-> \pfx{Dr}
%% GivenName	-> \fnm{Joergen W.}
%% Particle	-> \spfx{van der} -> surname prefix
%% FamilyName	-> \sur{Ploeg}
%% Suffix	-> \sfx{IV}
%% NatureName	-> \tanm{Poet Laureate} -> Title after name
%% Degrees	-> \dgr{MSc, PhD}
%% \author*[1,2]{\pfx{Dr} \fnm{Joergen W.} \spfx{van der} \sur{Ploeg} \sfx{IV} \tanm{Poet Laureate}
%%                 \dgr{MSc, PhD}}\email{iauthor@gmail.com}
%%=============================================================%%
\author*[1]{\fnm{Huan} \sur{Qing}}\email{qinghuan@u.nus.edu}
\affil[1]{\orgdiv{School of Economics and Finance}, \orgname{Chongqing University of Technology}, \city{Chongqing}, \postcode{400054}, \state{Chongqing}, \country{China}}

%%==================================%%
%% sample for unstructured abstract %%
%%==================================%%

\abstract{The Grade of Membership (GoM) model, which allows subjects to belong to multiple latent classes, is a powerful tool for inferring latent classes in categorical data. However, its application is limited to categorical data with nonnegative integer responses, as it assumes that the response matrix is generated from Bernoulli or Binomial distributions, making it inappropriate for datasets with continuous or negative weighted responses. To address this, this paper proposes a novel model named the Weighted Grade of Membership (WGoM) model. Our WGoM is more general than GoM because it relaxes GoM's distribution constraint by allowing the response matrix to be generated from distributions like Bernoulli, Binomial, Normal, and Uniform as long as the expected response matrix has a block structure related to subjects' mixed memberships under the distribution. We show that WGoM can describe any response matrix with finite distinct elements. We then propose an algorithm to estimate the latent mixed memberships and other WGoM parameters. We derive the error bounds of the estimated parameters and show that the algorithm is statistically consistent. We also propose an efficient method for determining the number of latent classes $K$ for categorical data with weighted responses by maximizing fuzzy weighted modularity. The performance of our methods is validated through both synthetic and real-world datasets. The results demonstrate the accuracy and efficiency of our algorithm for estimating latent mixed memberships, as well as the high accuracy of our method for estimating $K$,  indicating their high potential for practical applications.}
\keywords{Categorical data, latent class analysis, mixed membership models, spectral method, fuzzy weighted modularity}

%%\pacs[JEL Classification]{D8, H51}
\pacs[MSC Classification]{62H30; 91C20; 62P15}
\maketitle
\section{Introduction}\label{sec1}
Many systems take the form of categorical data: sets of subjects (individuals) and items joined together by categorical responses. Categorical data are widely collected across diverse fields, such as social science, psychological science, behavioral science, biological science, and transportation
\citep{landis1977measurement,liang1992multivariate,pokholok2005genome,jaeger2008categorical,agresti2012categorical,oser2013online,huang2014feature,AOS1610,ovaskainen2016using,skinner2019analysis,brown2022intermittent}. Categorical data usually has a finite number of values, where each value represents an outcome from a finite number of possible choices, such as age group, educational level, and preference level.

Categorical data can be represented mathematically using a $N$-by-$J$ response matrix $R$, where $N$ indicates the number of subjects and $J$ indicates the number of items. Each element $R(i,j)$ constitutes the observed response of the $i$-th subject to the $j$-th item. When the response is binary, the elements of $R$ are either 0 or 1. The binary responses can be agree/disagree (or yes/no) responses in a psychological test, correct/wrong responses in an educational assessment, spam/legitimate emails in an email classification project, and so on. When the response is categorical, the elements of $R$ are commonly nonnegative integers. The categorical responses can be different types of accommodation such as house/condominium/apartment in an accommodation recommendation, different strengths of agreement such as strongly disagree/disagree/ moderate/agree/strongly agree in a psychological test, and so on. In this paper, the sorting aspect of $R(i,j)$ is an important concept that captures the ordering or ranking of different response values. This sorting aspect is particularly relevant in modeling categorical data with weighted responses, where the values of $R(i,j)$ can have various interpretations depending on the context. In the example of accommodation recommendations, though the values of responses (house, condominium, apartment) are not sorted numerically, $R(i,j)$ can take values from $\{0,~1,~2,~3\}$ with $0$, $1$, $2$, and $3$ denoting no-response, house, condominium, and apartment, respectively. These numerical values $\{1,~2,~3\}$ are used to differentiate between various types of accommodation. For the psychological test example, $R(i,j)$ can take values from $\{0,~1,~2,~3,~4,~5\}$ with $0$, $1$, $2$, $3$, $4$, and $5$ denoting no-response, strongly disagree, disagree, moderate, agree, and strongly agree, respectively. In this example, the numerical values $\{1,~2,~3,~4,~5\}$ represent different strengths or levels of agreement, and the sorting aspect is implicitly embedded in the ordering of these agreement levels. Additional examples of categorical data can be found in Agresti's book \citep{agresti2012categorical}.

Based on categorical data, researchers in the aforementioned fields are typically interested in identifying latent classes (i.e., underlying groups) of subjects such that subjects within the same class share similar characteristics \citep{klonsky2008identifying,lanza2013latent,nylund2018ten,ulbricht2018use,weller2020latent,sinha2021practitioner}. For example, latent classes may represent different types of mental states like schizophrenia/depression/neurosis/normal in a mental disorder test, different ability levels like high/medium/low/basic/unqualified levels in an educational assessment, different political ideologies like liberal/neutral/conservative parties in a political survey, and different types of personalities like openness/conscientiousness/extraversion/agreeableness/neuroticism in a psychological test. Here, test items are used in assessments to measure specific skills, abilities, knowledge, or traits, and they usually have clear, objective answers or scoring criteria (like multiple-choice or true/false questions). Survey items, on the other hand, are used in surveys to gather information or opinions and may include more subjective or open-ended questions. The main difference between these two items lies in the purpose and nature of the questions in each type of data collection instrument.

The latent class model (LCM) \citep{goodman1974exploratory} is a popular statistical model for identifying latent classes for categorical data in the aforementioned fields. In recent years, Bayesian inference techniques \citep{garrett2000latent,asparouhov2011using,white2014bayeslca,li2018bayesian}, maximum likelihood estimation (MLE) methods \citep{van1996estimating,bakk2016robustness,chen2022beyond,gu2023joint}, and tensor-based algorithms \citep{anandkumar2014tensor,zeng2023tensor} are used to estimate latent classes for data generated from LCM. One main limitation of LCM is that it assumes that each subject belongs to a single class and there is no overlap between different classes. However, for most real-world data, a subject may belong to multiple classes. For example, a conscientiousness subject in a psychological test may also be openness; a neutral subject in a political survey can be thought of as having mixed memberships such that it belongs to the liberal class and the conservative class simultaneously but with different weights for each class when we assume that there are two latent classes \citep{jin2024mixed}. The Grade of Membership (GoM) model \citep{woodbury1978mathematical} was proposed as a generalization of LCM by allowing each subject to belong to more than one class. GoM is also known as ``admixture" or ``topic" or Latent Dirichlet Allocation model \citep{blei2003latent,al2019inference,dey2017visualizing,li2021grade,ke2022using,jainarayanan2023pseudotime}. GoM assumes that each subject has a membership vector (i.e., probability mass function) that sums to one. The $k$-th value of the membership vector denotes the probability (i.e., extent) that a subject partially belongs to the $k$-th class. Thus, GoM is more flexible than LCM. Bayesian inference methods \citep{erosheva2007describing,Gormley2009AGO,gu2023dimension}, MLE approach \citep{tolley1992large}, variational Expectation-Maximization algorithm \citep{beal2003variational}, and spectral methods \citep{chen2023spectral,qing2024finding} are designed to infer the latent mixed memberships for data generated from GoM. A detailed comparison between GoM and LCM can be found in \citep{erosheva2005comparing}.

Unfortunately, the Grade of Membership model mentioned above has a significant limitation: it only functions for categorical data with binary responses or nonnegative integer responses, and it does not apply to data with weighted responses, where the responses can be continuous or even negative. Importantly, these weighted responses are not sample weights typically used in statistical contexts to adjust for differences in the sampling probability of observations. Instead, they represent meaningful and inherent characteristics of the data, such as levels of agreement, trust relationships, or ratings on a discrete or continuous scale. Due to this limitation, the GoM model is unable to model datasets that contain weighted responses. Such data is frequently collected in real-world situations and ignoring weighted responses may result in losing potentially valuable information about the unobserved structures \citep{newman2004analysis}. For example, in the Wikipedia elections data \citep{leskovec2010governance}, which is a dataset of users from the English Wikipedia that voted for or against each other in admin elections, both subjects and items represent individual users, and the responses can be positive (i.e., voted for) and negative (i.e., voted against), i.e., each element of $R$ is either $-1$ or 1 (we call such $R$ signed response matrix in this paper). In the trust data of Advogato \citep{massa2009bowling}, both subjects and items represent users of the online community platform Advogato, and the responses represent trust relationships with three different levels: apprentice (0.6), journeyer (0.8), and master (1), i.e., the responses are already sorted numerically and each element of $R$ ranges in $\{0.6,~0.8,~1\}$. In Jester 100 \citep{goldberg2001eigentaste}, which is a dataset that records information about how users rated a total amount of 100 jokes, subjects are users and items represent jokes, and the rating values are continuous values between $-10$ and $10$, i.e., the responses are inherently sorted numerically and each entry of $R$ ranges in $[-10,~10]$. In Bitcoin Alpha \citep{kumar2016edge}, which is a user-user trust/distrust dataset from the Bitcoin Alpha platform, the responses represent trust or distrust sorted numerically on a scale from $-10$ to 10, i.e., all elements of $R$ take value from the set $\{-10,~ -9,~-8,~\ldots,~8,~9,~10\}$. In Dutch college \citep{van1999friendship}, which is a dataset that records ratings of friendship between 32 university freshmen, the responses range from $-1$ for risk of getting into conflict to 3 for best friend, i.e., the responses are sorted numerically and the elements of $R$ take value in $\{-1,~0,~1,~2,~3\}$. In the FilmTrust ratings data \citep{guo2016novel}, subjects (users) rate items (films), and the responses (ratings) range in $\{0.5,~1,~1.5,~2,~2.5,~3,~3.5,~4\}$. These datasets are collected by \citep{kunegis2013konect} and can be obtained from the following URL link:  \url{http://konect.cc/networks/}. GoM fails to model all the aforementioned data with weighted responses. Therefore, it is desirable to develop a more general and more flexible statistical model for weighted response data, where the responses can be continuous and negative. With this motivation, the present paper executes the following key advancements:
\begin{itemize}
\item To model categorical data with weighted responses, we propose a novel generative model, the Weighted Grade of Membership (WGoM) model. It achieves the goal of modeling weighted response categorical data by allowing the observed response matrix $R$ to be generated from an arbitrary distribution provided that the expectation of $R$ enjoys a structure reflecting the membership of each subject. For example, under WGoM, $R$ with a latent class structure can be generated from Bernoulli, Binomial, Uniform, and Normal distributions, etc. Especially, any observed response matrix $R$ with a finite number of distinct elements, where the responses can be any real value, can be modeled by our WGoM, and this is guaranteed by our Theorem \ref{ExistDisF}. Compared with GoM, our model WGoM releases GoM's distribution restriction on $R$ and GoM is a special case of our WGoM.
\item To facilitate a broad application of our model WGoM, we propose an efficient and easy-to-implement method for estimating subjects' memberships and the other WGoM parameters. We derive the theoretical convergence rates of our method and show that the method yields consistent estimation. We also provide some instances for further analysis and find that our method may behave differently when the observed response matrices are generated from different distributions under WGoM. We also introduce a method for inferring the number of latent classes $K$ in weighted response categorical data generated from the proposed model. This method determines $K$ by maximizing fuzzy weighted modularity, utilizing the estimated memberships.
\item To verify our theoretical analysis and assess the accuracy and effectiveness of the proposed methods, we conduct extensive experiments and apply our methods to real-world datasets with meaningful results.
\end{itemize}
%\subsection{More related work}
\subsection{Notation and organization}
The following notations will be used throughout this paper. For any positive integer $m$, we denote $[m]=\{1,~2,~\ldots,~m\}$ and define $I_{m\times m}$ as the $m\times m$ identity matrix. For any scale $a$, $|a|$ denotes its absolute value. For any vector $x$, $x'$ denotes its transpose and $\|x\|_{q}$ denotes its $l_{q}$-norm for any $q>0$. For any matrix $M$, we use $M'$, $M(i,:)$, $M(:,j)$, $M(i,j)$, $M(S,:)$, $M^{-1}$, $\|M\|$, $\|M\|_{F}$, $\|M\|_{2\rightarrow\infty}$, $\mathrm{rank}(M)$, $\sigma_{k}(M)$, $\lambda_{k}(M)$, $\kappa(M)$, and $\mathrm{max}(0, M)$ to denote its transpose, $i$-th row, $j$-th column, $ij$-th entry, submatrix formed by rows in the set $S$, inverse when $M$ is nonsingular, spectral norm, Frobenius norm, maximum $l_{2}$-norm among all rows, rank, $k$-th largest singular value, $k$-th largest eigenvalue ordered by magnitude when $M$ is a square matrix, conditional number, and nonnegative part, respectively. Here, $\|M\|=\sigma_{1}(M)$, $\|M\|_{F}=\mathrm{trace}(M'M)$ (i.e., the trace of the square matrix $M'M$), and  $\|M\|_{2\rightarrow\infty}=\mathrm{max}_{i}\|M(i,:)\|_{2}$. Let notations $\mathbb{R}$ and $\mathbb{N}$ denote the sets of real numbers and nonnegative integers, respectively. For any random variable $Y$, $\mathbb{E}(Y)$ denotes its expectation and $\mathbb{P}(Y=a)$ denotes the probability that $Y$ equals to $a$. Let $\mathcal{X}^{n\times m}$ denote the space of all $n\times m$ matrices, whose entries are taken from the set $\mathcal{X}$. $e_{i}$ represents a vector whose $i$-th entry is 1 and all the other entries are 0. For any values $a$ and $b$, the big $O$ notation $a = O(b)$ signifies that these two values are of the same order, meaning they increase or decrease at the same rate.

The rest of this paper is organized as follows. In Section \ref{sec2}, we formally introduce our model and demonstrate its generality and identifiability. In Section \ref{sec3}, we propose the algorithm, explain its rationality, and provide its computational complexity. In Section \ref{sec4}, we establish the theoretical result and explain its generality by providing several instances. In Section \ref{sec5}, we estimate the number of latent classes by maximizing the fuzzy weighted modularity. Sections \ref{sec6}-\ref{sec7} contain the numerical and empirical results, respectively.  Section \ref{sec8} concludes this paper with a discussion. The main notations used in this paper, detailed technical proofs, and extra instances are included in the Appendix.
\section{Weighted Grade of Membership model}\label{sec2}
In the weighted response setting presented in this paper, let $R \in \mathbb{R}^{N \times K}$ be the \emph{observed weighted response matrix}, such that $R(i,j)$ denotes the weighted response of subject $i$ to the $j$-th item for all $i\in[N]$, $j\in[J]$. Here, the response $R(i,j)$ can take any real value in our weighted response setting.

Our Weighted Grade of Membership (WGoM) model can be described by two modeling facets: the population aspect and the individual aspect. On the population aspect, WGoM assumes that there are $K$ latent classes that describe response patterns. Throughout this paper, the number of latent classes, $K$, is assumed to be known. Let $\Theta \in \mathbb{R}^{J \times K}$ be the item parameter matrix, with elements that can take any real value. To render our model identifiable, we require $\mathrm{rank}(\Theta) = K$. For $k \in [K]$, our model WGoM assumes that $\Theta(j,k)$ captures the conditional response expectations under an arbitrary distribution $\mathcal{F}$. Specifically, for $j\in [J]$, $k\in[K]$, WGoM assumes Equation (\ref{Thetajk}) below holds under an arbitrary distribution $\mathcal{F}$.
\begin{align}\label{Thetajk}
\mathbb{E}(R(i,j)|\mathrm{subject~}i~\mathrm{belongs~to~the~}k\mathrm{-th~latent~class})=\Theta(j,k).
\end{align}

On the individual aspect, since a subject may belong to multiple latent classes with different weights in this paper, we let $\Pi$ be the $N\times K$ \emph{membership matrix} such that $\Pi(i,k)$ denotes the extent that $i$-th subject partially belongs to the $k$-th latent class for $i\in[N]$, $k\in[K]$. $\Pi$ should satisfy the following condition:
\begin{align}\label{DefinePI}
&\Pi(i,k)\geq 0, \sum_{l=1}^{K}\Pi(i,l)=1 \mathrm{~for~}i\in[N],~k\in[K].
\end{align}
In Equation (\ref{DefinePI}), we know that $\|\Pi(i,:)\|_{1}=1$ should be satisfied because the summation of all probabilities for subject $i$ belonging to the $K$ latent classes is 1. For $i\in[N]$, call subject $i$ a ``pure" subject if one element of $\Pi(i,:)$ is 1 and all others $(K-1)$ elements are 0; otherwise, call it a ``mixed" subject. To guarantee that the model is identifiable (i.e., the model parameters $\Pi$ and $\Theta$ can be exactly recovered when $R$ 's expectation is given under the WGoM model, we need the following condition:
\begin{con}\label{pure}
Each of the $K$ latent classes has at least one pure subject.
\end{con}
Condition \ref{pure} imposes a mild restriction on the mixed membership matrix $\Pi$, as it merely requires that each latent class has at least one representative subject. Note that when Equation (\ref{DefinePI}) and Condition \ref{pure} hold, the rank of $\Pi$ is $K$ by basic algebra. For convenience, define $\mathcal{I}$ as the index of subject corresponding to $K$ pure subjects, one from each latent class, i.e., $\mathcal{I}=\{s_{1},~s_{2},~\ldots,~s_{K}\}$ with $s_{k}$ being a pure subject in the $k$-th latent class for $k\in[K]$. Similar to \citep{mao2021estimating}, without loss of generality, reorder the subjects so that $\Pi(\mathcal{I},:)=I_{K\times K}$.

For subject $i\in[N]$, given its membership score $\Pi(i,:)$ and the item parameter matrix $\Theta$, our WGoM assumes that for an arbitrary distribution $\mathcal{F}$, the conditional response expectation of the $i$-th subject to the $j$-th item is
\begin{align}\label{Rij}
\mathbb{E}(R(i,j)|\Pi(i,:),\Theta)=\sum_{k=1}^{K}\Pi(i,k)\Theta(j,k)\equiv\Pi(i,:)\Theta'(j,:).
\end{align}

Equation (\ref{Rij}) means that the conditional response expectation of $R(i,j)$ is a convex linear combination of the item parameter $\Theta(j,k)$ weighted by the probability $\Pi(i,k)$. We summarize our WGoM model using the following formal definition.
\begin{defin}\label{WGoM}
Let $R\in\mathbb{R}^{N\times J}$ be the observed weighted response matrix. Let $\Pi\in[0,1]^{N\times K}$ be the membership matrix satisfying Equation (\ref{DefinePI}) and Condition \ref{pure},  $\Theta\in\mathbb{R}^{J\times K}$ be the item parameter matrix with rank $K$, and $R_{0}=\Pi\Theta'$. For $i\in[N]$, $j\in[J]$, our Weighted Grade of Membership (WGoM) model assumes that for an arbitrary distribution $\mathcal{F}$, $R(i,j)$ are independently distributed according to $\mathcal{F}$ with expectation
\begin{align}\label{RFR0}
\mathbb{E}(R(i,j))=R_{0}(i,j).
\end{align}
\end{defin}
By Definition \ref{WGoM}, we see that WGoM is characterized by the membership matrix $\Pi$, the item parameter matrix $\Theta$, and the distribution $\mathcal{F}$. To emphasize this, we denote WGoM by $WGoM(\Pi,\Theta,\mathcal{F})$. Given that $R_{0}$ is the product of $\Pi$ and $\Theta'$, we refer to this form as a block structure. Under our WGoM model, $\mathcal{F}$ can be any distribution provided that Equation (\ref{RFR0}) is satisfied, i.e., WGoM only requires $R$'s expectation to be $R_{0}=\Pi\Theta'$ under distribution $\mathcal{F}$ and it does not limit $\mathcal{F}$ to be a particular distribution. Further insights into how Equation (\ref{RFR0}) applies to various distributions are provided in Instances \ref{Bernoulli}-\ref{Signed}. Recall that in Section \ref{sec1}, the elements of $R$ typically signify the sorting aspect of various responses across diverse examples. This sorting aspect is also readily comprehensible in Equation (\ref{RFR0}). Specifically, when $R(i,j)$ is generated from $WGoM(\Pi,\Theta,\mathcal{F})$ for any distribution $\mathcal{F}$ that satisfies the condition that the expectation of $R$ is $R_{0}$ under the distribution $\mathcal{F}$, a larger value of $R_{0}(i,j)$ generally indicates a larger value of $R(i,j)$.
\begin{rem}
WGoM includes two popular statistical models in latent class analysis as special cases.
\begin{itemize}
  \item If we let $\mathcal{F}$ be the Bernoulli distribution such that $R\in\{0,1\}^{N\times J}$ (i.e., the binary response setting case), Equation (\ref{Thetajk}) turns to be $\mathbb{P}(R(i,j)=1|\mathrm{subject~}i~\mathrm{belongs~to~the~}k\mathrm{-th~latent~class})=\Theta(j,k)$ and Equation (\ref{Rij}) is $\mathbb{P}(R(i,j)=1|\Pi(i,:),\Theta)=\Pi(i,:)\Theta'(j,:)$, where $\Theta(j,k)$ is a probability that ranges between 0 and 1. For this case, WGoM degenerates to the GoM model with binary responses, where this model is considered in \citep{chen2023spectral}. If we let $\mathcal{F}$ be the Binomial distribution such that $R(i,j)\sim\mathrm{Binomial}(m,\frac{R_{0}(i,j)}{m})$ for a positive integer $m$ (i.e., $R(i,j)\in\{0,~1,~2,~\ldots,~m\}$ for $i\in[N]$, $j\in[J]$), WGoM reduces to the GoM model with categorical responses, where this model is studied in \citep{qing2024finding}.
  \item If we let $\mathcal{F}$ be the Bernoulli (Binomial) distribution and there are no mixed subjects (i.e., all subjects are pure), WGoM reduces to the LCM model with binary (categorical) responses.
\end{itemize}
\end{rem}
\begin{rem}\label{RangeDifferent}
The ranges of $R$ and $\Theta$ depend on distribution $\mathcal{F}$. For example, for $i\in[N]$, $j\in[J]$, when $\mathcal{F}$ is Bernoulli distribution, $R(i,j)\in\{0,1\}$ and $\Theta(i,j)\in[0,1]$; when $\mathcal{F}$ is Binomial distribution with $m$ being the number of independent trials, $R(i,j)\in\{0,1,2,\ldots,m\}$ and $\Theta(i,j)\in[0,m]$; when $\mathcal{F}$ is Normal distribution, $R(i,j)\in\mathbb{R}$ and $\Theta(i,j)\in\mathbb{R}$; when $\mathcal{F}$ is a specified discrete distribution, $R(i,j)\in\{-1,1\}$ (i.e., the signed response setting case) and $\Theta(i,j)\in[-1,1]$. For details, see Instances \ref{Bernoulli}-\ref{Signed} and Table \ref{TableRanges}.
\end{rem}
\begin{rem}
Not all distributions satisfy Equation (\ref{RFR0}). For example, $\mathcal{F}$ cannot be a t-distribution, Cauchy distribution, or Chi-square distribution because their means (0, undefined, or a positive integer, respectively) cannot capture the latent class structure inherent in Equation (\ref{RFR0}).
\end{rem}
Theorem \ref{ExistDisF} below ensures the generality of our model WGoM. It says that for any observed weighted response matrix $R$ with $Q$ distinct elements, i.e., $R(i,j)\in\{a_{1},~a_{2},~ \ldots,~a_{Q}\}$ for $i\in[N]$, $j\in[J]$, where $a_{q}$ can be any real value for $q\in[Q]$, such $R$ can be modeled by using a carefully designed discrete distribution $\mathcal{F}$ under WGoM and the existence of such $\mathcal{F}$ is guaranteed by Theorem \ref{ExistDisF}. Given that $a_{1}<~a_{2}<~\ldots<~a_{Q}$, the $Q$ distinct values are inherently arranged in numerical order, allowing them to serve as a representation of the sorting aspect among various responses. Furthermore, since the response of subject $i$ to item $j$ can be any of the $Q$ distinct choices $\{a_{1},~a_{2},~\ldots,~a_{Q}\}$, Theorem \ref{ExistDisF} indicates that our WGoM model can effectively describe the subjects' mixed memberships in response matrices with multinomial responses, where a multinomial response is a categorical response where multiple choices are possible. Specifically, when $a_{i}=i-1$ for $i \in [Q]$, such a response matrix can be modeled by WGoM with a Binomial distribution, as illustrated in Instance \ref{Binomial} provided later. For response matrices with other sets of choices $\{a_{1},~a_{2},~\ldots,~a_{Q}\}$, Theorem \ref{ExistDisF} asserts that our WGoM can always model the data by employing a thoughtfully designed discrete distribution $\mathcal{F}$.
\begin{thm}\label{ExistDisF}
Suppose $R$'s elements take only $Q$ distinct values $\{a_{1},~a_{2},~\ldots,~a_{Q}\}$, where $a_{1}<~a_{2}<~\ldots<~a_{Q}$, $a_{q}$ can be any real value instead of simply positive integers for $q\in[Q]$, and $Q$ is a positive integer. Then,
\begin{description}
  \item[(1)] When $Q=2$, there exists only one discrete distribution $\mathcal{F}$ satisfying  Equation (\ref{RFR0}) such that $R\in\{a_{1},~a_{2}\}^{N\times J}$ can be generated from this $\mathcal{F}$ under our WGoM. The exact form of $\mathcal{F}$ is
      \begin{align*}
      \mathbb{P}(R(i,j)=a_{1})=\frac{a_{2}-R_{0}(i,j)}{a_{2}-a_{1}}, ~\mathbb{P}(R(i,j)=a_{2})=\frac{R_{0}(i,j)-a_{1}}{a_{2}-a_{1}},
      \end{align*}
      where $R_{0}(i,j)\in[a_{1},~a_{2}]$.
  \item [(2)] When $Q\geq3$, there exist at least $Q$ distinct discrete distributions $\mathcal{F}$ satisfying Equation (\ref{RFR0}) such that $R\in\{a_{1},~a_{2},~\ldots,~a_{Q}\}^{N\times J}$ can be generated from these $\mathcal{F}$s under our WGoM.
\end{description}
\end{thm}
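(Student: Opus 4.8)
Fix an entry $(i,j)$ and set $\mu:=R_{0}(i,j)$; throughout I regard the choice of $\mathcal{F}$ as the problem of choosing a probability vector on the atoms $\{a_{1},\ldots,a_{Q}\}$ whose mean is the prescribed value $\mu$, so that the number of admissible $\mathcal{F}$ equals the number of solutions of this moment problem, and everything reduces to elementary linear algebra on the probability simplex. For part~(1), a discrete distribution supported on $\{a_{1},a_{2}\}$ is one-parameter, determined by $q:=\mathbb{P}(R(i,j)=a_{1})$ with $\mathbb{P}(R(i,j)=a_{2})=1-q$. Equation~(\ref{RFR0}) reads $qa_{1}+(1-q)a_{2}=\mu$, i.e. $a_{2}-q(a_{2}-a_{1})=\mu$, which is linear in $q$ with nonzero coefficient $a_{1}-a_{2}$ and therefore has the single solution $q=\tfrac{a_{2}-\mu}{a_{2}-a_{1}}$, exactly the stated $\mathcal{F}$; since any $\mathcal{F}$ able to generate such an $R$ is necessarily supported on $\{a_{1},a_{2}\}$ (because $R(i,j)\in\{a_{1},a_{2}\}$ almost surely), this is the only admissible distribution. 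Admissibility $q\in[0,1]$ is equivalent to $\mu\in[a_{1},a_{2}]$, which holds because under a $\{a_{1},a_{2}\}$-supported $\mathcal{F}$ each conditional expectation $\Theta(j,k)$ is the mean of an $\{a_{1},a_{2}\}$-valued variable, hence lies in $[a_{1},a_{2}]$, so $\mu=R_{0}(i,j)=\sum_{k}\Pi(i,k)\Theta(j,k)$ is a convex combination of numbers in $[a_{1},a_{2}]$. This gives existence, the explicit formula, and uniqueness.

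For part~(2), the admissible distributions form the polytope $\mathcal{P}(\mu)=\{p\in\mathbb{R}^{Q}:p\geq0,\ \sum_{r=1}^{Q}p_{r}=1,\ \sum_{r=1}^{Q}p_{r}a_{r}=\mu\}$, cut out by two affine equalities in $\mathbb{R}^{Q}$; the convexity argument above again forces $\mu\in[a_{1},a_{Q}]$, so $\mathcal{P}(\mu)\neq\emptyset$. When $Q\geq3$ and $\mu\notin\{a_{1},a_{Q}\}$, $\mathcal{P}(\mu)$ is not a single point: besides the two-point solution on $\{a_{1},a_{Q}\}$ it contains a second two-point solution (on $\{a_{t},a_{t+1}\}$ when $a_{t}<\mu<a_{t+1}$, or on $\{a_{t-1},a_{t+1}\}$ when $\mu=a_{t}$ with $1<t<Q$), so the segment joining the two lies in $\mathcal{P}(\mu)$ and $\mathcal{P}(\mu)$ is infinite; in particular it has at least $Q$ points. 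To exhibit $Q$ explicitly, let $\mathcal{F}^{(0)}$ be the two-point distribution on $\{a_{1},a_{Q}\}$ with weights $\tfrac{a_{Q}-\mu}{a_{Q}-a_{1}},\tfrac{\mu-a_{1}}{a_{Q}-a_{1}}$; for each interior index $\ell\in\{2,\ldots,Q-1\}$ let $\mathcal{F}^{(\ell)}$ be obtained from $\mathcal{F}^{(0)}$ by moving a mass $\delta_{\ell}=\min\!\left(\tfrac{\mu-a_{1}}{a_{\ell}-a_{1}},\ \tfrac{a_{Q}-\mu}{a_{Q}-a_{\ell}}\right)>0$ onto the atom $a_{\ell}$ while decreasing the weights on $a_{1},a_{Q}$ so as to keep the mean equal to $\mu$ (the choice of $\delta_{\ell}$ keeps all three weights nonnegative); and let $\mathcal{F}^{(Q)}$ be the analogue of $\mathcal{F}^{(2)}$ with $\delta_{2}/2$ in place of $\delta_{2}$. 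These $Q$ distributions are pairwise distinct — for $3\leq\ell\leq Q-1$ only $\mathcal{F}^{(\ell)}$ puts positive mass on $a_{\ell}$, and $\mathcal{F}^{(0)},\mathcal{F}^{(2)},\mathcal{F}^{(Q)}$ carry the distinct masses $0,\delta_{2},\delta_{2}/2$ on $a_{2}$ — so there are at least $Q$ admissible $\mathcal{F}$, as claimed.

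The step I expect to be the main obstacle is the bookkeeping in part~(2): verifying simultaneously that the $Q$ candidates are genuine probability distributions (all weights $\geq0$, which is precisely where $\delta_{\ell}>0$ and $\mu\notin\{a_{1},a_{Q}\}$ are used) and that they are pairwise distinct \emph{as distributions}, not merely as recipes. This also forces a non-degeneracy caveat: if $R_{0}(i,j)\in\{a_{1},a_{Q}\}$ for some entry then the conditional distribution there is forced to be a point mass, so the $Q$ distinct $\mathcal{F}$'s must be constructed from an entry with $R_{0}(i,j)\in(a_{1},a_{Q})$, and the atypical case in which every entry of $R_{0}$ equals $a_{1}$ or $a_{Q}$ — where even $Q=2$ forces uniqueness — is implicitly excluded.
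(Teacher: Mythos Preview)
Your proof is correct. Part~(1) matches the paper's argument essentially verbatim: both set up the two linear constraints $p_{1}+p_{2}=1$ and $a_{1}p_{1}+a_{2}p_{2}=R_{0}(i,j)$ and solve uniquely.

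Part~(2), however, takes a genuinely different route. The paper constructs its $Q$ distributions by a symmetric ``one-special'' scheme: for each $q\in[Q]$, set $p_{q}$ free and force the remaining $Q-1$ probabilities to be equal, then solve the resulting $2\times2$ linear system. Your construction instead starts from the two-point distribution $\mathcal{F}^{(0)}$ on the extreme atoms $\{a_{1},a_{Q}\}$ and produces the other $Q-1$ distributions by shifting calibrated mass onto interior atoms (plus one half-shift to reach the count of $Q$). Both constructions are valid, but yours buys two things the paper's does not. First, your $Q$ distributions are all valid on the single common range $\mu\in(a_{1},a_{Q})$, whereas each of the paper's $Q$ constructions comes with its own admissible interval for $R_{0}(i,j)$ --- e.g.\ $[a_{1},\tfrac{1}{Q-1}\sum_{r\geq2}a_{r}]$ for $q=1$ versus $[\tfrac{1}{Q-1}\sum_{r\leq Q-1}a_{r},a_{Q}]$ for $q=Q$ --- and the paper does not verify that these intervals have nonempty common intersection. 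Second, you explicitly flag the degenerate boundary case $R_{0}(i,j)\in\{a_{1},a_{Q}\}$, where the distribution collapses to a point mass and the multiplicity claim fails entrywise; the paper's proof is silent on this. The price you pay is a slightly more delicate bookkeeping argument for pairwise distinctness (your masses-on-$a_{2}$ check), while the paper's symmetric family is distinct essentially by inspection whenever $\mu\neq\tfrac{1}{Q}\sum_{r}a_{r}$.
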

The identifiability of our WGoM model is guaranteed by the following proposition.
\begin{prop}\label{idWGoM}
(Identifiability). The WGoM model is identifiable up to a permutation of the $K$ latent classes when $\Pi$ satisfies Equation (\ref{DefinePI}) and Condition \ref{pure} and  $\mathrm{rank}(\Theta)=K$.
\end{prop}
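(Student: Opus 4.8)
The plan is to establish that identifiability here concerns only the mean structure $R_{0}=\Pi\Theta'=\mathbb{E}(R)$, not the generating distribution $\mathcal{F}$ (which, by Theorem~\ref{ExistDisF}, is generally not unique). So I would show that the map $(\Pi,\Theta)\mapsto\Pi\Theta'$ is injective up to a common column permutation: assume $\Pi\Theta'=\tilde\Pi\tilde\Theta'=:R_{0}$, where both $(\Pi,\Theta)$ and $(\tilde\Pi,\tilde\Theta)$ satisfy Equation~(\ref{DefinePI}), Condition~\ref{pure}, and have an item parameter matrix of rank $K$, and then produce a $K\times K$ permutation matrix $P$ with $\tilde\Theta=\Theta P$ and $\tilde\Pi=\Pi P$. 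That is exactly the meaning of ``identifiable up to a permutation of the $K$ latent classes.''

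First I would record the elementary consequences of $\mathrm{rank}(\Theta)=K$: the $K\times K$ matrix $\Theta'\Theta$ is invertible, and the $K$ rows of $\Theta'$ are linearly independent, hence \emph{affinely} independent (a vanishing affine combination is in particular a vanishing linear combination, so all its coefficients must vanish). Consequently $\mathrm{rank}(R_{0})=K$ and $\Pi=R_{0}\Theta(\Theta'\Theta)^{-1}$, so $\Pi$ is recoverable from $R_{0}$ as soon as $\Theta$ is; the whole problem reduces to recovering $\Theta$ from $R_{0}$ up to a column permutation.

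The crux — and the step I expect to require the only real work — is the convex-geometry claim that the row set $\{\Theta'(k,:):k\in[K]\}$ equals the set of extreme points of $\mathrm{conv}\{R_{0}(i,:):i\in[N]\}$, an object determined by $R_{0}$ alone. By Equation~(\ref{Rij}), each row $R_{0}(i,:)=\sum_{k}\Pi(i,k)\Theta'(k,:)$ is a convex combination of the rows of $\Theta'$, so $\mathrm{conv}\{R_{0}(i,:)\}$ lies in the $(K-1)$-simplex $\Delta$ with vertex set $\{\Theta'(k,:)\}$; Condition~\ref{pure} furnishes, for each $k$, a pure subject $s_{k}$ with $R_{0}(s_{k},:)=\Theta'(k,:)$, so every vertex of $\Delta$ is itself a row of $R_{0}$, whence $\mathrm{conv}\{R_{0}(i,:)\}=\Delta$. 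Conversely, affine independence makes barycentric coordinates on $\Delta$ unique, so $R_{0}(i,:)$ can equal a vertex $\Theta'(l,:)$ only when $\Pi(i,:)=e_{l}$; hence the rows of $R_{0}$ sitting at vertices of the hull are precisely the pure-subject rows, and they exhaust the vertices. Since the extreme-point set of a polytope is intrinsic, the same description applies verbatim to $(\tilde\Pi,\tilde\Theta)$, giving $\{\Theta'(k,:)\}=\{\tilde\Theta'(k,:)\}$ as sets.

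It then remains to do the bookkeeping. The set equality says the rows of $\tilde\Theta'$ are a permutation of those of $\Theta'$, i.e. $\tilde\Theta=\Theta P$ for a permutation matrix $P$, so $\tilde\Theta'=P'\Theta'$. Substituting into $\tilde\Pi\tilde\Theta'=\Pi\Theta'$ gives $\tilde\Pi P'\Theta'=\Pi\Theta'$; right-multiplying by $\Theta(\Theta'\Theta)^{-1}$ and using $\mathrm{rank}(\Theta)=K$ to cancel $\Theta'$ leaves $\tilde\Pi P'=\Pi$, i.e. $\tilde\Pi=\Pi P$ — the \emph{same} permutation acting on both factors. Finally one checks that $\Pi P$ still satisfies Equation~(\ref{DefinePI}) and Condition~\ref{pure}, which is immediate because permuting columns preserves nonnegativity, row sums, and the collection of pure rows. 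The only genuine obstacle is the convex-geometry claim — pinning the vertex set of $\Delta$ to $R_{0}$ via Condition~\ref{pure} and the rank condition — after which everything is routine linear algebra driven by the two full-rank hypotheses.
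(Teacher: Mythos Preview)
Your proposal is correct. The paper's own proof is not an argument at all but a one-line citation of point (a) of Theorem~2 in \cite{chen2023spectral}, observing that the hypotheses there (a pure subject in each class and $\mathrm{rank}(\Theta)=K$) match those of Proposition~\ref{idWGoM}. Your route is therefore strictly more self-contained: you supply the convex-geometry content --- that the rows of $\Theta'$ are exactly the extreme points of $\mathrm{conv}\{R_{0}(i,:):i\in[N]\}$, via Condition~\ref{pure} for the inclusion and affine independence (from $\mathrm{rank}(\Theta)=K$) for the uniqueness of barycentric coordinates --- which is presumably what the cited theorem proves. What the paper's citation buys is brevity and a pointer to the slightly sharper fact (point (b) of the same cited theorem) that identifiability can also hold when $\mathrm{rank}(\Theta)=K-1$ under additional conditions, though the paper explicitly declines to use that case.
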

After specifying the model WGoM, a weighted response matrix $R$ with ground truth latent membership matrix $\Pi$ and item parameter matrix $\Theta$ can be generated through the following three procedures.
\begin{description}
  \item[Step (a).] Given the number of subjects $N$, the number of items $J$, the number of latent classes $K$, the $N\times K$ latent class membership matrix $\Pi$ that satisfies  Equation (\ref{DefinePI}) and Condition \ref{pure}, and the $J\times K$ item parameter matrix $\Theta$ with rank $K$, where the ranges of $\Theta$'s entries are dictated by the chosen distribution $\mathcal{F}$.
  \item[Step (b).] Compute the expectation response matrix $R_{0}=\Pi\Theta'$.
  \item[Step (c).] For $i\in[N]$, $j\in[J]$, let $R(i,j)$ be a random variable generated from distribution $\mathcal{F}$ with expectation $R_{0}(i,j)$.
\end{description}

Suppose that we are given the observed weighted response matrix $R$ generated from $WGoM(\Pi,\Theta,\mathcal{F})$ using Steps (a)-(c), the goal of the grade of membership analysis is to infer $\Pi$ and $\Theta$. The identifiability result, Proposition \ref{idWGoM}, ensures that $\Pi$ and $\Theta$ can be reliably estimated from the observed weighted response matrix $R$. In the next two sections, we will propose a method to estimate $\Pi$ and $\Theta$ and establish its convergence rate.
\section{An algorithm for parameters estimation}\label{sec3}
We recall that $R\in\mathbb{R}^{N\times J}$ is the observed weighted response matrix and the $N\times J$ matrix $R_{0}=\Pi\Theta'$ is the expectation of $R$ under the model WGoM. Below, in Section \ref{Secoracle}, we provide our intuition on designing an algorithm for WGoM by considering an oracle case with known $R_{0}$. We give an ideal algorithm for exactly recovering $\Pi$ and $\Theta$ from $R_{0}$. In Section \ref{Secreal}, we consider the real case with knowing the observed weighted response matrix $R$ instead of its expectation $R_{0}$ and provide our final algorithm.
\subsection{The oracle case}\label{Secoracle}
Suppose that the expectation response matrix $R_{0}$ is observed. Recall that $\mathrm{rank}(\Pi)=K$, $\mathrm{rank}(\Theta)=K$, and $R_{0}=\Pi\Theta'$, it follows that $\mathrm{rank}(R_{0})=K$. The low-dimensional structure of $R_{0}$ with only $K$ nonzero singular values, due to the small number of latent classes $K$ considered in this paper compared to $N$ and $J$, aids in the design of a procedure to estimate $\Pi$ and $\Theta$ under WGoM.

Let $R_{0}=U\Sigma V'$ be the top $K$ singular value decomposition (SVD) of $R_{0}$ such that $\Sigma$ is a $K\times K$ diagonal matrix containing the $K$ nonzero singular values of $R_{0}$, $U\in\mathbb{R}^{N\times K}$ (and $V\in\mathbb{R}^{J\times K}$) collects the corresponding left (and right) singular vectors and satisfies $U'U=I_{K\times K}$ (and $V'V=I_{K\times K}$). We have the following lemma which guarantees the existence of a simplex structure in the $N\times K$ left singular vectors matrix $U$ and serves as a foundation for building our algorithm.
\begin{lem}\label{UVWGoM}
(\emph{Ideal Simplex}) Under $WGoM(\Pi,\Theta,\mathcal{F})$, we have $U=\Pi X$, where  $X=U(\mathcal{I},:)$.
\end{lem}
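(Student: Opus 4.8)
The plan is to exploit the fact that the column space of $R_{0}$ is $K$-dimensional and is carried both by the orthonormal matrix $U$ and by the membership matrix $\Pi$, and then to pin down the change-of-basis matrix using the normalization $\Pi(\mathcal{I},:)=I_{K\times K}$.

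First I would do the rank bookkeeping. Since $\Pi$ satisfies Equation (\ref{DefinePI}) together with Condition \ref{pure}, its rank is $K$; by assumption $\mathrm{rank}(\Theta)=K$; hence $R_{0}=\Pi\Theta'$ has rank exactly $K$, because on the one hand $\mathrm{rank}(R_{0})\le\min\{\mathrm{rank}(\Pi),\mathrm{rank}(\Theta)\}=K$, and on the other hand $\Theta'$ has full row rank $K$ so right-multiplication by $\Theta'$ does not decrease the rank of $\Pi$, giving $\mathrm{rank}(R_{0})\ge K$. Consequently the top-$K$ SVD $R_{0}=U\Sigma V'$ has a nonsingular diagonal matrix $\Sigma$, and the columns of $U$ form an orthonormal basis of the column space $\mathrm{col}(R_{0})$.

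Next I would identify the relevant subspaces. Every column of $R_{0}=\Pi\Theta'$ is a linear combination of the columns of $\Pi$, so $\mathrm{col}(R_{0})\subseteq\mathrm{col}(\Pi)$; since both subspaces have dimension $K$, they coincide, i.e. $\mathrm{col}(U)=\mathrm{col}(R_{0})=\mathrm{col}(\Pi)$. In particular each column of $U$ lies in $\mathrm{col}(\Pi)$, so there exists a matrix $X\in\mathbb{R}^{K\times K}$ with $U=\Pi X$; comparing ranks ($\mathrm{rank}(U)=\mathrm{rank}(\Pi)=K$) further forces $X$ to be nonsingular, although nonsingularity of $X$ is not needed for the statement itself. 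Finally I would evaluate the identity $U=\Pi X$ on the rows indexed by $\mathcal{I}$: since the subjects have been reordered so that $\Pi(\mathcal{I},:)=I_{K\times K}$, restricting $U=\Pi X$ to those rows yields $U(\mathcal{I},:)=\Pi(\mathcal{I},:)X=X$, that is, $X=U(\mathcal{I},:)$, which is exactly the claim. (This also justifies the name ``ideal simplex'': each row $U(i,:)=\Pi(i,:)X=\sum_{k}\Pi(i,k)X(k,:)$ is a convex combination of the rows of $X$, and the pure subjects are mapped onto the rows of $X$, which are the vertices.) I do not expect any serious obstacle; the only point that warrants a moment's care is the rank computation ensuring $\mathrm{col}(R_{0})=\mathrm{col}(\Pi)$, equivalently that passing from $R_{0}$ to its rank-$K$ singular subspace loses no information, and this is immediate from $\mathrm{rank}(\Pi)=\mathrm{rank}(\Theta)=K$.
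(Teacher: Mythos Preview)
Your argument is correct. The paper's own proof reaches the same conclusion by a slightly more direct, constructive route: from $R_{0}=\Pi\Theta'=U\Sigma V'$ it writes $U=R_{0}V\Sigma^{-1}=\Pi\Theta'V\Sigma^{-1}$ and simply sets $X=\Theta'V\Sigma^{-1}$, then applies the normalization $\Pi(\mathcal{I},:)=I_{K\times K}$ exactly as you do. Your column-space argument is a valid alternative for establishing the existence of $X$, but the paper's explicit formula is shorter and has the minor bonus of identifying $X=\Theta'V\Sigma^{-1}$ in closed form; conversely, your route makes the geometry (why the simplex structure must appear) a bit more transparent and does not rely on $\Sigma$ being invertible beyond what is needed for the rank count.
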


\begin{figure}
\centering
{\includegraphics[width=0.68\textwidth]{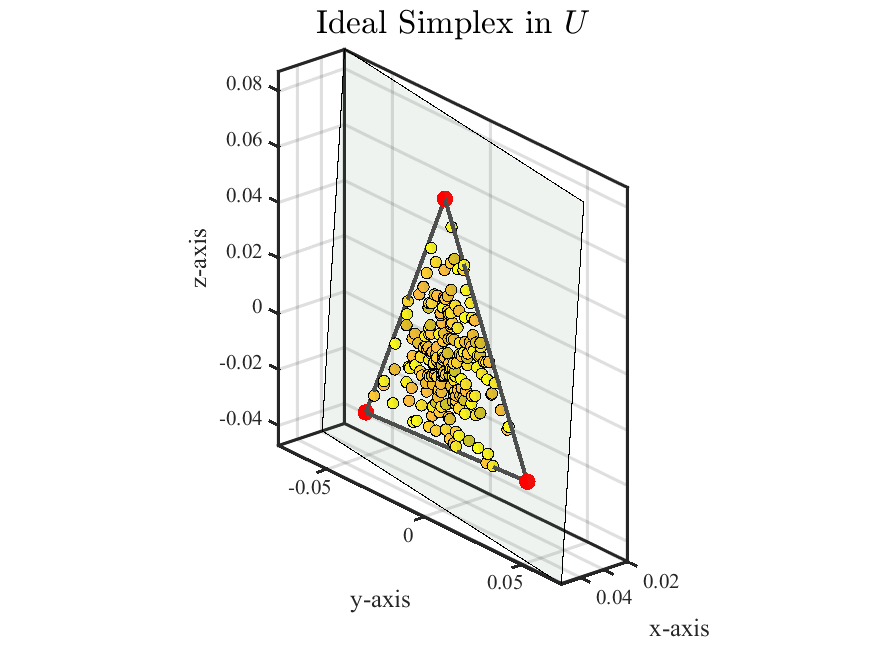}}
\caption{Illustration of the Ideal Simplex structure in $U$ when $K=3$, $N=800$, $J=400$, and each latent class has 200 pure subjects. Here, we set the mixed membership score $\Pi(\bar{i},:)$ for any mixed subject $\bar{i}$ as follows: $\Pi(\bar{i},1)$ and $\Pi(\bar{i},2)$ are independently generated random values from $\mathrm{Uniform}(0,\frac{1}{3})$, and $\Pi(\bar{i},3)=1-\Pi(\bar{i},1)-\Pi(\bar{i},2)$. In this figure, red dots denote pure rows of $U$ while other dots denote mixed rows, and the hyperplane is formed by the triangle formed by pure rows, where we call $U(i,:)$ a pure row if subject $i$ is pure and mixed row otherwise for $i\in[N]$.}
\label{ISU} %% label for entire figure
\end{figure}

It turns out that, all rows of the expectation response matrix $R_{0}$ form a simplex in $R^{K}$ which we call the Ideal Simplex, with the $K$ rows of $X$ as the vertices. Denoting the simplex by $\mathcal{S}^{\mathrm{ideal}}(v_{1},~v_{2},~\ldots,~v_{K})$, where $v_{k}=X(k,:)$ for $k\in[K]$, by Lemma \ref{UVWGoM}, we have
\begin{itemize}
  \item If subject $i$ is a pure subject such that $\Pi(i,k)=1$, then $U(i,:)$ falls on the vertex $v_{k}$; if subject $i$ is a mixed subject, then $U(i,:)$ locates in the interior of $\mathcal{S}^{\mathrm{ideal}}(v_{1},~v_{2},~\ldots,~v_{K})$.
  \item Each $U(i,:)$ is a convex linear combination of $v_{1},~v_{2},~\ldots,~v_{k}$ because $U(i,:)=\Pi(i,:)X=\sum_{k=1}^{K}\Pi(i,k)X(k,:)=\sum_{k=1}^{K}\Pi(i,k)v_{k}$.
\end{itemize}

Fig.~\ref{ISU} depicts the Ideal Simplex structure embedded within $U$ when $K=3$ in $\mathbb{R}^{3}$. This figure reveals that the pure rows of $U$ correspond to the vertices of the simplex, whereas the mixed rows of $U$ reside within the interior of the simplex. The simplex structure like $U=\Pi X=\Pi U(\mathcal{I},:)$ is also found in the area of mixed membership community detection \citep{mao2021estimating,qing2023regularized,jin2024mixed} and the area of topic modeling \citep{ke2022using,klopp2023assigning}.

Given $R_{0}$ and $K$, $U$ can be computed immediately from the top $K$ SVD of $R_{0}$. Then, once we know the $K$ vertices in the simplex $\mathcal{S}^{\mathrm{ideal}}(v_{1},~v_{2},~\ldots,~v_{K})$, the membership matrix $\Pi$ can be exactly recovered by setting $\Pi=UX^{-1}=UU^{-1}(\mathcal{I},:)$ based on Lemma \ref{UVWGoM} since the $K\times K$ corner matrix $U(\mathcal{I},:)$ is full rank. To modify the oracle procedure to the real procedure easily, we set $Z=UX^{-1}=UU^{-1}(\mathcal{I},:)$. Then, we get $\Pi(i,:)=\frac{Z(i,:)}{\|Z(i,:)\|_{1}}$ since $Z=\Pi$ and $\|\Pi(i,:)\|_{1}$ for $i\in[N]$.

Now, the question is how to find the $K$ vertices. Thanks to the simplex structure $U=\Pi X$, as mentioned in \citep{mao2021estimating}, by applying the successive projection (SP) algorithm \citep{araujo2001successive,gillis2015semidefinite} to all rows of $U$ assuming that there are $K$ vertices, we can obtain the $K$ vertices $(v_{1},~v_{2},~\ldots,~v_{K})$, i.e., we can obtain the index set $\mathcal{I}$ since $X=U(\mathcal{I},:)$. The exact form of the SP algorithm is summarized in Algorithm 1 in \citep{gillis2015semidefinite}, so we omit it here.

After recovering $\Pi$ from $U$ by SP, now we aim at recovering $\Theta$. Recall that $R_{0}=\Pi\Theta'=U\Sigma V'$, we have $\Theta=V\Sigma U'\Pi(\Pi'\Pi)^{-1}\equiv R'_{0}\Pi(\Pi'\Pi)^{-1}$ because the $K\times K$ matrix $\Pi'\Pi$ is nonsingular when $\Pi$'s rank is $K$.

The above analysis leads to Algorithm \ref{alg:IdealSCGoMA} called Ideal SCGoMA, where SCGoMA stands for spectral clustering for grade of membership analysis.

\begin{algorithm}
\caption{\textbf{Ideal SCGoMA}}
\label{alg:IdealSCGoMA}
\begin{algorithmic}[1]
\Require The $N\times J$ expectation response matrix $R_{0}$ and the number of latent classes $K$.
\Ensure The membership matrix $\Pi$ and the item parameter matrix $\Theta$.
\State Compute $U\Sigma V'$, the top $K$ SVD of $R_{0}$.
\State Apply the SP algorithm on all rows of $U$ with $K$ clusters to get the index set $\mathcal{I}$.
\State Set $Z=UU^{-1}(\mathcal{I},:)$.
\State Recover $\Pi$ by $\Pi(i,:)=\frac{Z(i,:)}{\|Z(i,:)\|_{1}}$ for $i\in[N]$.
\State Recover $\Theta$ by $\Theta=R'_{0}\Pi(\Pi'\Pi)^{-1}$.
\end{algorithmic}
\end{algorithm}
\subsection{The real case}\label{Secreal}
We extend the Ideal SCGoMA to the real case, where $R$ is observed instead of $R_{0}$. Let $\hat{R}=\hat{U}\hat{\Sigma}\hat{V}'$ be the top $K$ SVD of the observed weighted response matrix $R$, where the $K$-by-$K$ diagonal matrix $\hat{\Sigma}$ contains the top $K$ singular values of $R$, $\hat{U}\in\mathbb{R}^{N\times K}$ (and $\hat{V}\in\mathbb{R}^{J\times K}$) collects the corresponding left (and right) singular vectors and satisfies $\hat{U}'\hat{U}=I_{K\times K}$ (and $\hat{V}'\hat{V}=I_{K\times K}$). Recall that $R$ is a noisy version of $R_{0}$ by Equation (\ref{RFR0}) under the proposed model and $R_{0}$ has $K$ nonzero singular values, intuitively, the top $K$ SVD of $R$ should be close to that of $R_{0}$, i.e., $\hat{R}$, $\hat{U}$, $\hat{\Sigma}$, and $\hat{V}$ should be good approximations of $R_{0}$, $U$, $\Sigma$, and $V$, respectively. The $N$ rows of $\hat{U}$ form a noise-corrupted version of the ideal simplex $\mathcal{S}^{\mathrm{ideal}}(v_{1},~v_{2},~\ldots,~v_{K})$. Then, the estimated index set $\hat{\mathcal{I}}$ obtained by running the SP algorithm on all rows of $\hat{U}$ with $K$ clusters should be a good estimation of the index set $\mathcal{I}$. Set $\hat{Z}=\mathrm{max}(0, \hat{U}\hat{U}^{-1}(\hat{\mathcal{I}},:))$, we expect that $\hat{Z}$ should be a good estimation of $Z$, where we set $\hat{Z}$ as the nonnegative part of $\hat{U}\hat{U}^{-1}(\hat{\mathcal{I}},:)$ because $\hat{U}\hat{U}^{-1}(\hat{\mathcal{I}},:)$ may contain negative elements in practice while all elements of $Z$ must be larger than or equal to 0. Let $\hat{\Pi}$ be the estimation of $\Pi$ such that $\hat{\Pi}(i,:)=\frac{\hat{Z}(i,:)}{\|\hat{Z}(i,:)\|_{1}}$ for $i\in[N]$. Finally, we estimate $\Theta$ using $\hat{\Theta}$ computed by $\hat{\Theta}=\hat{V}\hat{\Sigma}\hat{U}'\hat{\Pi}(\hat{\Pi}'\hat{\Pi})^{-1}\equiv\hat{R}'\hat{\Pi}(\hat{\Pi}'\hat{\Pi})^{-1}$. We expect that $\hat{\Pi}$ and $\hat{\Theta}$ are good estimations of $\Pi$ and $\Theta$, respectively. We now present our main algorithm SCGoMA in Algorithm \ref{alg:SCGoMA}, which extends the Ideal SCGoMA to the real case naturally because steps 1-5 of Algorithm \ref{alg:SCGoMA} are similar to those in the oracle case. Our SCGoMA is a weighted variant of Algorithm 2 presented in \citep{chen2023spectral}. While the work in \citep{chen2023spectral} exclusively deals with categorical data with binary responses, our study extends this framework to handle categorical data with weighted responses. Fig.~\ref{IllusSCGoMA} displays the flowchart of the SCGoMA algorithm. The MATLAB codes for SCGoMA are provided in the Supplementary file.
\begin{algorithm}
\caption{\textbf{Spectral clustering for grade of membership analysis (SCGoMA for short)}}
\label{alg:SCGoMA}
\begin{algorithmic}[1]
\Require The observed weighted response matrix $R\in\mathbb{R}^{N\times J}$ and the number of latent classes $K$.
\Ensure $\hat{\Pi}$ and $\hat{\Theta}$.
\State Compute $\hat{R}=\hat{U}\hat{\Sigma} \hat{V}'$, the top $K$ SVD of $R$.
\State Apply the SP algorithm on all rows of $\hat{U}$ with $K$ clusters to get the estimated index set $\hat{\mathcal{I}}$.
\State Compute $\hat{Z}=\mathrm{max}(0, \hat{U}\hat{U}^{-1}(\hat{\mathcal{I}},:))$.
\State Compute $\hat{\Pi}(i,:)=\frac{\hat{Z}(i,:)}{\|\hat{Z}(i,:)\|_{1}}$ for $i\in[N]$.
\State Compute $\hat{\Theta}=\hat{R}'\hat{\Pi}(\hat{\Pi}'\hat{\Pi})^{-1}$.
\end{algorithmic}
\end{algorithm}

\begin{figure}[H]
\centering
\begin{tikzpicture}
[auto,
decision/.style={diamond, draw=blue, thick, fill=blue!20,
    text width=4.5em,align=flush center,
    inner sep=1pt},
block/.style ={rectangle, draw=blue, thick, fill=gray!20,
    text width=6em,align=center, rounded corners,
    minimum height=2em},
block1/.style ={diamond, draw=blue, thick, fill=orange!20,
    text width=1.4em,align=center, rounded corners,
    minimum height=2em},
block2/.style ={rectangle, draw=blue, thick, fill=blue!20,
    text width=3.2em,align=center, rounded corners,
    minimum height=2em},
block3/.style ={rectangle, draw=blue, thick, fill=gray!20,
    text width=6em,align=center, rounded corners,
    minimum height=2em},
block4/.style ={rectangle, draw=blue, thick, fill=gray!20,
    text width=6em,align=center, rounded corners,
    minimum height=2em},
line/.style ={draw, thick, -latex',shorten >=2pt},
cloud/.style ={draw=red, thick, ellipse,fill=red!20,
    minimum height=2em}]
\matrix [column sep=5mm,row sep=7mm]
{
&\node [block2] (input) {Input: $R, K$};
&\node [block3] (pca) {Compute $R$'s top $K$ SVD $\hat{R}=\hat{U}\hat{\Sigma}\hat{V}'$};
&\node [block4] (vh) {Run SP on $\hat{U}$ to get $\hat{\mathcal{I}}$};
&\node [block] (mr) {Compute $\hat{Z}, \hat{\Pi},$ and $\hat{\Theta}$};
&\node [block2] (output) {Output: $\hat{\Pi}, \hat{\Theta}$};\\
};
\begin{scope}[every path/.style=line]
\path (input) -- (pca);
\path (pca) -- (vh);
\path (vh) -- (mr);
\path (mr) -- (output);
\end{scope}
\end{tikzpicture}
\caption{Flowchart of Algorithm \ref{alg:SCGoMA}.}\label{IllusSCGoMA}
\end{figure}

Here, we present the computational complexity analysis of our SCGoMA method. The computational complexity of the top $K$ SVD in step 1 of Algorithm \ref{alg:SCGoMA} is $O(\mathrm{max}(N^{2}, J^{2})K)$. The complexity of the SP algorithm is $O(NK^{2})$ \citep{jin2024mixed}. The complexities of steps 3, 4, and 5 of Algorithm \ref{alg:SCGoMA} are $O(NK^{2})$, $O(NK)$, and $O(NJK)$, respectively. Because $K\ll\mathrm{min}(N, J)$ in this paper, as a result, the total computational complexity of SCGoMA is $O(\mathrm{max}(N^{2}, J^{2})K)$. We see that large-scale datasets (large $N$ and/or $J$) can still pose challenges in terms of computational efficiency. The complexity analysis of the SCGoMA algorithm is independent of the particular distribution from which the response matrix $R$ is derived under the WGoM model. This independence arises because the crucial steps of the SCGoMA algorithm—computing the top-$K$ SVD of $R$ and applying the SP algorithm—solely rely on the matrix $R$ itself, without presuming any specific characteristics about its underlying distribution. Consequently, although the estimation accuracy of SCGoMA may vary for different distributions (as evidenced in our experiments in Section \ref{sec6}), its computational complexity remains the same. This resilience to the choice of distribution is a notable advantage of our method, enabling SCGoMA to be widely applied to diverse categorical datasets with weighted responses, regardless of their specific distributional properties.
\section{Theoretical results}\label{sec4}
Next, we demonstrate that the per-subject error rate for membership score and the relative $l_{2}$ error for item parameters approach zero with probability approaching one under mild conditions.

For convenience, let $\Theta=\rho B$ such that $\mathrm{max}_{j\in[J],k\in[K]}|B(j,k)|=1$ (i.e., $\rho=\mathrm{max}_{j\in[J],k\in[K]}|\Theta(j,k)|$) and call $\rho$ the scaling parameter since $\rho$ controls the scaling of all item parameters. In particular, when $\mathcal{F}$ follows a Bernoulli distribution, $\rho$ is known as the sparsity parameter \citep{lei2015consistency,qing2024finding} since it captures the sparsity (i.e., number of zero elements) of an observed binary response matrix. We consider the scaling parameter $\rho$ in this paper since we find that the range of $\Theta$'s elements can be different for different distribution $\mathcal{F}$ as stated in Remark \ref{RangeDifferent} and we aim at studying the influence of the scaling parameter $\rho$ on the performance of the proposed algorithm when we generate the observed weighted response matrix from different distributions by letting $\rho$ enter the error bounds.

Let $\tau=\mathrm{max}_{i\in[N],j\in[J]}|R(i,j)-R_{0}(i,j)|$ and $\gamma=\frac{\mathrm{max}_{i\in[N],j\in[J]}\mathrm{Var}(R(i,j))}{\rho}$ where $\mathrm{Var}(R(i,j))\equiv\mathbb{E}((R(i,j)-R_{0}(i,j))^{2})$ denotes the variance of $R(i,j)$ under distribution $\mathcal{F}$. The quantity $\tau$ characterizes the maximum absolute difference between $R$ and $R_{0}$, and $\gamma$ bounds the variances for all elements of $R$. Unlike the scaling parameter $\rho$ which controls the scaling of the item parameter matrix $\Theta$ and can approach zero in this paper, $\gamma$ is a ``determined" parameter because we can always determine $\gamma$'s exact value or upper bound as long as the distribution $\mathcal{F}$ is known, while $\tau$ is usually uncertain even when the distribution $\mathcal{F}$ is known because the observed weighted response matrix $R$ is a random matrix generated by WGoM. For some specific distributions, we can obtain an exact upper bound for $\tau$, while for others we cannot. Both $\tau$ and $\gamma$ are closely related to the distribution $\mathcal{F}$ and their upper bounds differ for different distributions. For details, please refer to Instances \ref{Bernoulli}-\ref{Signed}.

To establish our theoretical guarantees, we need the following assumption.
\begin{assum}\label{a1}
Assume that $\rho\gamma\mathrm{max}(N, J)\geq \tau^{2}\mathrm{log}(N+J).$
\end{assum}
Assumption \ref{a1} is required when we aim at bounding $\|\hat{U}\hat{U}'-UU'\|_{2\rightarrow\infty}$ by using Theorem 4.4 \citep{chen2021spectral}, where $\|\hat{U}\hat{U}'-UU'\|_{2\rightarrow\infty}$ is closely related to our final theoretical bound of SCGoMA's error rates. Without this assumption, we can not obtain any insights about how our SCGoMA performs under different distributions. For details, see Remark \ref{WithourAC}.

When $\mathcal{F}$ is the Bernoulli distribution (i.e., the case where $R(i,j)\in\{0,1\}$ for $i\in[N]$, $j\in[J]$), Assumption \ref{a1} translates to $\rho\geq\frac{\mathrm{log}(N+J)}{\mathrm{max}(N, J)}$, which means a requirement on the sparsity of a binary response matrix. On the one hand, for this distribution, we observe that $\mathbb{P}(R(i,j)=1)=R_{0}(i,j)\leq\rho$ and $\mathbb{P}(R(i,j)=0)=1-R_{0}(i,j)\geq1-\rho$. These probabilities reveal that an increase in $\rho$ leads to a decrease in the probability of generating 0s in $R$. We know that if there are too many 0s in $R$ (i.e., $R$ is too sparse), it becomes harder for latent class analysis \citep{lei2015consistency}. Assumption \ref{a1} says that SCGoMA functions when $\rho$ is at least $\frac{\mathrm{log}(N+J)}{\mathrm{max}(N, J)}$, a mild condition on $R$'s sparsity. For example, with $N=500$ and $J=200$, $\rho$ can be as small as $\frac{\mathrm{log}(700)}{500}\approx0.01310216$. However, though SCGoMA remains functional at such low $\rho$ values, by numerical results in Section \ref{sec6}, we know that SCGoMA's error rates are large when $\rho$ is too small. Consequently, Assumption \ref{a1} serves primarily for our theoretical analysis and does not imply that setting $\rho$ at its minimal value satisfying the assumption will yield a negligible error rate for SCGoMA. Similar sparsity requirements also apply to community detection in social network analysis. For example, in \citep{lei2015consistency}, their main theoretical result (Theorem 3.1) requires $\rho N\geq\mathrm{log}(N)$, where $\rho$ controls the network's sparsity and $N$ denotes the number of nodes in an undirected unweighted network in \citep{lei2015consistency}. Certainly, $\rho N\geq\mathrm{log}(N)$ is also a mild requirement on $\rho$, especially for large $N$. However, this assumption is also only used for theoretical analysis, and it does not imply that when we let $\rho=\mathrm{log}(N)/N$, the Algorithm 1 studied in \citep{lei2015consistency} can estimate nodes' communities with small error rate. On the other hand, if $\rho$ is fixed (say $\rho=1$), Assumption \ref{a1} means that $\mathrm{max}(N, J)\geq\mathrm{log}(N+J)$ for the Bernoulli distribution, meaning that the number of items/subjects cannot be too small if we aim to estimate $\Pi$ and $\Theta$ with low error rates. Similar arguments hold for Binomial distribution, given that $\rho$ also controls $R$'s sparsity in this case. When $\mathcal{F}$ is some other distributions satisfying Equation (\ref{RFR0}), Assumption \ref{a1} means a requirement on $\rho\gamma\mathrm{max}(N, J)$ for our theoretical analysis (e.g., lower bound constraints on $N$ and $J$ when the scaling parameter $\rho$ is fixed). Recall that $\gamma$'s upper bound is determined, and $\tau$'s upper bound may be obtained when we know the distribution $\mathcal{F}$, the explicit expression of Assumption \ref{a1} varies with different distributions. For details, please refer to Instances \ref{Bernoulli}-\ref{Signed}.

To simplify our theoretical result, we consider the condition below.
\begin{con}\label{c1}
$K=O(1)$, $N\geq J \mathrm{~or~} N=O(J)$, $\lambda_{K}(\Pi'\Pi)=O(\frac{N}{K})$, and $\lambda_{K}(B'B)=O(\frac{J}{K})$.
\end{con}
We explain why Condition \ref{c1} is mild: $K=O(1)$ means that the number of latent classes is a constant; $N\geq J$ or $N=O(J)$ means that the number of subjects should not be too smaller than the number of items; $\lambda_{K}(\Pi'\Pi)=O(\frac{N}{K})$ says that the ``size" of each latent class is in the same order; $\lambda_{K}(B'B)=O(\frac{J}{K})$ means that the summation of item parameters for each latent class is in the same order.

Theorem \ref{mainWGoM} below concerns the consistency of our method and represents our main theoretical result. It illustrates the theoretical error bounds of SCGoMA in terms of some model parameters.
\begin{thm}\label{mainWGoM}
(Main result) Under $WGoM(\Pi,\Theta,\mathcal{F})$, let $\hat{\Pi}$ and $\hat{\Theta}$ be outputs of Algorithm \ref{alg:SCGoMA}. Suppose that Assumption \ref{a1} is satisfied and Condition \ref{c1} holds, there exists a $K\times K$ permutation matrix $\mathcal{P}$ such that with probability at least $1-O((N+J)^{-5})$, for $i\in[N]$, we have
\begin{align*}
\|e'_{i}(\hat{\Pi}-\Pi\mathcal{P})\|_{1}=O(\sqrt{\frac{\gamma\mathrm{log}(N+J)}{\rho J}})\mathrm{~and~}\frac{\|\hat{\Theta}-\Theta\mathcal{P}\|_{F}}{\|\Theta\|_{F}}=O(\sqrt{\frac{\gamma\mathrm{max}(N, J)\mathrm{log}(N+J)}{\rho NJ}}).
\end{align*}
\end{thm}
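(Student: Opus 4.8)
The plan is the standard pipeline for spectral mixed-membership estimators: (i) concentrate the noise $R-R_{0}$ in spectral norm; (ii) transfer this to a row-wise ($\ell_{2}\rightarrow\ell_{\infty}$) perturbation bound for the leading left singular subspace; (iii) show the vertex-hunting (SP) step is stable on the resulting near-simplex; and (iv) push the row-wise control through the explicit formulas $\hat{\Pi}(i,:)=\hat{Z}(i,:)/\|\hat{Z}(i,:)\|_{1}$ and $\hat{\Theta}=\hat{R}'\hat{\Pi}(\hat{\Pi}'\hat{\Pi})^{-1}$, feeding in the scalings guaranteed by Condition \ref{c1} at each stage.

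\emph{Step 1 (spectral concentration).} Since $R-R_{0}$ has independent, mean-zero entries bounded in absolute value by $\tau$ and with variances at most $\rho\gamma$, a concentration bound for the spectral norm of a random matrix with independent bounded entries (e.g.\ matrix Bernstein) gives, with probability at least $1-O((N+J)^{-5})$,
\begin{align*}
\|R-R_{0}\|\lesssim\sqrt{\rho\gamma\,\mathrm{max}(N,J)}+\tau\,\mathrm{log}(N+J).
\end{align*}
Assumption \ref{a1} is exactly what makes the second term dominated by $\sqrt{\rho\gamma\,\mathrm{max}(N,J)\,\mathrm{log}(N+J)}$, so $\|R-R_{0}\|=O(\sqrt{\rho\gamma\,\mathrm{max}(N,J)\,\mathrm{log}(N+J)})$. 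On the signal side, $R_{0}=\Pi\Theta'$ with $\Theta=\rho B$, so $\sigma_{K}(R_{0})\asymp\sqrt{\lambda_{K}(\Pi'\Pi)}\cdot\rho\sqrt{\lambda_{K}(B'B)}\asymp\rho\sqrt{NJ}$ by Condition \ref{c1}; Weyl's inequality then controls $\hat{\Sigma}$ and the Davis--Kahan theorem controls $\|\hat{U}\hat{U}'-UU'\|$.

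\emph{Step 2 (row-wise subspace perturbation, the crux).} I would invoke Theorem~4.4 of \citep{chen2021spectral} to obtain a $K\times K$ orthogonal matrix $O$ with $\|\hat{U}-UO\|_{2\rightarrow\infty}$ small; Assumption \ref{a1} is precisely the hypothesis that lets that theorem apply in the present setting. By Lemma \ref{UVWGoM}, $U=\Pi X$ with $X=U(\mathcal{I},:)$, and $U'U=I_{K\times K}$ forces $\|X\|^{2}=\lambda_{K}(\Pi'\Pi)^{-1}$ and $\|X^{-1}\|^{2}=\lambda_{1}(\Pi'\Pi)$, hence $\|U\|_{2\rightarrow\infty}\le\|X\|=O(N^{-1/2})$ and $\|X^{-1}\|=O(N^{1/2})$ under Condition \ref{c1}. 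Plugging $\|R-R_{0}\|$, $\sigma_{K}(R_{0})$, and $\|U\|_{2\rightarrow\infty}$ into the entrywise bound yields $\|\hat{U}-UO\|_{2\rightarrow\infty}=O(\sqrt{\gamma\,\mathrm{log}(N+J)/(\rho NJ)})$. I expect this step --- the leave-one-out/entrywise eigenvector analysis underlying Theorem~4.4 and the bookkeeping that produces the exact powers of $\rho,\gamma,N,J$ --- to be the main obstacle; everything afterward is comparatively mechanical.

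\emph{Step 3 (SP stability and the final bounds).} Because the rows of $\hat{U}$ are perturbations, of size $O(\|\hat{U}-UO\|_{2\rightarrow\infty})$, of the ideal simplex $\mathcal{S}^{\mathrm{ideal}}(v_{1},\ldots,v_{K})$ whose conditioning is governed by $\sigma_{K}(X)=\lambda_{1}(\Pi'\Pi)^{-1/2}$, the error analysis of the SP algorithm (as used in \citep{gillis2015semidefinite,mao2021estimating,jin2024mixed}) shows that, up to the rotation $O$ and a permutation matrix $\mathcal{P}$, the recovered corner matrix $\hat{U}(\hat{\mathcal{I}},:)$ is within $O(\|\hat{U}-UO\|_{2\rightarrow\infty})$ of the true corner matrix; a perturbation bound for the inverse of this well-conditioned corner matrix then gives $\|e_{i}'(\hat{Z}-Z\mathcal{P})\|_{2}\lesssim\|X^{-1}\|\,\|\hat{U}-UO\|_{2\rightarrow\infty}$ for each $i\in[N]$. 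Since $Z=\Pi$ has unit-$\ell_{1}$ rows, the normalization in step~4 of Algorithm \ref{alg:SCGoMA} is Lipschitz, so $\|e_{i}'(\hat{\Pi}-\Pi\mathcal{P})\|_{1}\lesssim\sqrt{K}\,\|e_{i}'(\hat{Z}-Z\mathcal{P})\|_{2}=O(\sqrt{\gamma\,\mathrm{log}(N+J)/(\rho J)})$, which is the first claim. For the second claim I would use the oracle identity $\Theta=R_{0}'\Pi(\Pi'\Pi)^{-1}$ to write
\begin{align*}
\hat{\Theta}-\Theta\mathcal{P}&=(\hat{R}-R_{0})'\hat{\Pi}(\hat{\Pi}'\hat{\Pi})^{-1}\\
&\quad+R_{0}'\bigl(\hat{\Pi}(\hat{\Pi}'\hat{\Pi})^{-1}-\Pi\mathcal{P}(\mathcal{P}'\Pi'\Pi\mathcal{P})^{-1}\bigr),
\end{align*}
then bound $\|\hat{R}-R_{0}\|\le2\|R-R_{0}\|$ by Eckart--Young, use $\sigma_{K}(\hat{\Pi})\asymp\sigma_{K}(\Pi)\asymp\sqrt{N/K}$ and $\|\Theta\|_{F}\asymp\rho\sqrt{J}$, and control the second summand through the $\hat{\Pi}$ bound just obtained; collecting terms gives $\|\hat{\Theta}-\Theta\mathcal{P}\|_{F}/\|\Theta\|_{F}=O(\sqrt{\gamma\,\mathrm{max}(N,J)\,\mathrm{log}(N+J)/(\rho NJ)})$, completing the proof.
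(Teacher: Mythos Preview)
Your proposal is correct and follows essentially the same route as the paper: both invoke Theorem~4.4 of \citep{chen2021spectral} for the $\ell_{2\rightarrow\infty}$ subspace perturbation (after checking incoherence via the scalings in Condition~\ref{c1}), appeal to the SP stability analysis from \citep{mao2021estimating} to push this through to the per-row $\hat{\Pi}$ bound, and then use the same decomposition $\hat{\Theta}-\Theta\mathcal{P}=(\hat{R}-R_{0})'\hat{\Pi}(\hat{\Pi}'\hat{\Pi})^{-1}+R_{0}'(\hat{\Pi}(\hat{\Pi}'\hat{\Pi})^{-1}-\Pi(\Pi'\Pi)^{-1}\mathcal{P})$ together with $\|\hat{R}-R_{0}\|\le2\|R-R_{0}\|$ and Weyl for $\sigma_{K}(\hat{\Pi})$. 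The only cosmetic difference is that the paper packages the spectral-norm concentration as a direct citation (Lemma~2 of \citep{qing2023community}) and the $\hat{\Pi}$ row-wise bound as a citation to \citep{mao2021estimating,qing2023bipartite}, whereas you spell out the intermediate inequalities explicitly.
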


When $J=\alpha N$ for any $\alpha>0$ (i.e., the case $N\geq J$ or $N=O(J)$), Theorem \ref{mainWGoM} enables us to conclude that SCGoMA enjoys estimation consistency because its error rates converge to zero as the number of subjects $N$ increases to infinity while keeping the scaling parameter $\rho$ and the distribution $\mathcal{F}$ fixed. Meanwhile, if we fix $N, J$, and the distribution $\mathcal{F}$, when $J=\alpha N$, Theorem \ref{mainWGoM} says that $\frac{\rho J}{\gamma}$ should be much larger than $\mathrm{log}(N+J)$ to ensure that SCGoMA's error rates are sufficiently small.

For different distribution $\mathcal{F}$, the ranges of $(R,~\rho,~B)$, the upper bounds of $(\tau,~\gamma)$, and the exact forms of Assumption \ref{a1} and Theorem \ref{mainWGoM} vary. The following instances explain this statement by considering different distributions that satisfy Equation (\ref{RFR0}). For $i\in[N]$, $j\in[J]$, we consider the following distributions.
\begin{Ex}\label{Bernoulli}
Let $\mathcal{F}$ be a \texttt{Bernoulli distribution} such that our WGoM reduces to GoM, we have $R(i,j)\sim \mathrm{Bernoulli}(R_{0}(i,j))$ with $R_{0}(i,j)$ as the success probability for the Bernoulli distribution. Sure, $\mathbb{E}(R(i,j))=R_{0}(i,j)$ holds. By carefully analyzing the properties of the Bernoulli distribution, we obtain the following results.
\begin{itemize}
\item $R(i,j)\in\{0,1\}$ (i.e., $R$ is a binary response matrix), $B(i,j)\in[0,1]$, and $\rho\in(0,1]$ since $R_{0}(i,j)$ is a probability. Here, binary values can represent different levels or categories of responses, such as ``yes/no", ``agree/disagree", or ``success/failure", where 1 can be interpreted as a ``higher" or ``more positive" response compared to 0.
\item $\tau\leq1$ and $\gamma\leq1$ since $\gamma=\frac{\mathrm{max}_{i\in[N],j\in[J]}\mathrm{Var}(R(i,j))}{\rho}=\frac{\mathrm{max}_{i\in[N],j\in[J]}R_{0}(i,j)(1-R_{0}(i,j))}{\rho}\leq\frac{\mathrm{max}_{i\in[N],j\in[J]} R_{0}(i,j)}{\rho}=\frac{\mathrm{max}_{i\in[N],j\in[J]} \rho (\Pi B')(i,j)}{\rho}\leq1$.
\item Let $\tau=1$ and $\gamma=1$, Assumption \ref{a1} is $\rho\geq\frac{\mathrm{log}(N+J)}{\mathrm{max}(N,J)}$, which means a sparsity requirement of a binary response matrix; Error bounds in Theorem \ref{mainWGoM} become
    $\|e'_{i}(\hat{\Pi}-\Pi\mathcal{P})\|_{1}=O(\sqrt{\frac{\mathrm{log}(N+J)}{\rho J}})\mathrm{~and~}\frac{\|\hat{\Theta}-\Theta\mathcal{P}\|_{F}}{\|\Theta\|_{F}}=O(\sqrt{\frac{\mathrm{max}(N,J)\mathrm{log}(N+J)}{\rho NJ}})$, which imply that SCGoMA's error rates decrease when we increase $\rho$.
\end{itemize}
\end{Ex}
\begin{Ex}\label{Binomial}
Let $\mathcal{F}$ be a \texttt{Binomial distribution}, we have $R(i,j)\sim\mathrm{Binomial}(m,\frac{R_{0}(i,j)}{m})$ for a positive integer $m$. We see that $\mathbb{E}(R(i,j))=R_{0}(i,j)$ holds. For Binomial distribution, we have the following results.
\begin{itemize}
  \item $R(i,j)\in\{0,~1,~2,~\ldots,~m\}$, $B(i,j)\in[0,1]$, and $\rho\in(0,m]$ since $\frac{R_{0}(i,j)}{m}$ is a probability. Here, $R(i,j)$ takes a nonnegative value no larger than $m$ and can represent different levels or counts of a response, such as different strengths of agreement, the number of correct answers in a test, or the number of times an event occurred. Higher values of $R(i,j)$ indicate a ``greater" (``higher"/``larger"/``stronger") or ``more frequent" response.
  \item $\tau\leq m$ and $\gamma\leq1$ since $\gamma=\frac{\mathrm{max}_{i\in[N],j\in[J]}\mathrm{Var}(R(i,j))}{\rho}=\frac{m\frac{R_{0}(i,j)}{m}(1-\frac{R_{0}(i,j)}{m})}{\rho}=\frac{R_{0}(i,j)(1-\frac{R_{0}(i,j)}{m})}{\rho}\leq1$.
  \item Let $\tau=m$ and $\gamma=1$, Assumption \ref{a1} is $\rho\geq\frac{m^{2}\mathrm{log}(N+J)}{\mathrm{max}(N, J)}$; Error bounds are the same as Instance \ref{Bernoulli} and we see that increasing $\rho$ decreases SCGoMA's error rates. In this case, Assumption \ref{a1} aligns with the sparsity requirement, while Theorem \ref{mainWGoM} matches with the main result presented in \citep{qing2024finding}.
\end{itemize}
\end{Ex}
\begin{Ex}\label{Uniform}
Let $\mathcal{F}$ be a \texttt{Uniform distribution} such that $R(i,j)\sim\mathrm{Uniform}(0,2R_{0}(i,j))$. We have $\mathbb{E}(R(i,j))=\frac{0+2R_{0}(i,j)}{2}=R_{0}(i,j)$ which satisfies Equation (\ref{RFR0}). For this case, we get the below conclusions.
\begin{itemize}
\item $R(i,j)\in(0,2\rho)$, $B(i,j)\in(0,1]$, and $\rho\in(0,+\infty)$ since $R_{0}(i,j)$ can be any positive value. While $R$'s elements are continuous, they can still represent a sorting aspect in the sense that higher values correspond to ``stronger" or ``more positive" responses. For example, in a rating system where 0 represents the lowest rating and $2\rho$ represents the highest, higher values of $R(i,j)$ would indicate a more positive evaluation.
\item $\tau\leq2\rho$ and $\gamma\leq\frac{\rho}{3}$ since $\gamma=\frac{\mathrm{max}_{i\in[N],j\in[J]}\mathrm{Var}(R(i,j))}{\rho}=\mathrm{max}_{i\in[N],j\in[J]}\frac{(2R_{0}(i,j)-0)^{2}}{12\rho}=\mathrm{max}_{i\in[N],j\in[J]}\frac{R^{2}_{0}(i,j)}{3\rho}\leq\frac{\rho}{3}$.
\item Let $\tau=2\rho$ and $\gamma=\frac{\rho}{3}$, Assumption \ref{a1} is $\mathrm{max}(N, J)\geq12\mathrm{log}(N+J)$, a lower bound requirement of $\mathrm{max}(N, J)$; Error bounds in Theorem \ref{mainWGoM} become $\|e'_{i}(\hat{\Pi}-\Pi\mathcal{P})\|_{1}=O(\sqrt{\frac{\mathrm{log}(N+J)}{J}})\mathrm{~and~}\frac{\|\hat{\Theta}-\Theta\mathcal{P}\|_{F}}{\|\Theta\|_{F}}=O(\sqrt{\frac{\mathrm{max}(N, J)\mathrm{log}(N+J)}{NJ}})$, which are usually close to zero provided that $\mathrm{min}(N, J)$ is large. Meanwhile, we see that $\rho$ disappears in the error bounds. Thus $\rho$ cannot significantly influence SCGoMA's error rates.
\end{itemize}
\end{Ex}

\begin{rem}
$\mathcal{F}$ can also take the form of other Uniform distributions. For instance, we can assign $\mathcal{F}$ to be a Uniform distribution such that $R(i,j)\sim\mathrm{Uniform}(a,2R_{0}(i,j)-a)$ (or alternatively, $R(i,j)\sim\mathrm{Uniform}(2R_{0}(i,j)-a, a)$) provided that $a<\mathrm{min}_{i\in[N],j\in[J]}R_{0}(i,j)$ (or $a>\rho$). In this scenario, it follows that $R(i,j)\in(a,2R_{0}(i,j)-a)$ (or $R(i,j)\in(2R_{0}(i,j)-a,a)$). By adopting a similar analysis as in Instance \ref{Uniform}, we can derive the theoretical error bounds for SCGoMA. For brevity, the detailed analysis is omitted here.
\end{rem}

\begin{Ex}\label{Normal}
Let $\mathcal{F}$ be a \texttt{Normal distribution}, we have $R(i,j)\sim \mathrm{Normal}(R_{0}(i,j),\sigma^{2})$, where $R_{0}(i,j)$ denotes the mean ( i.e., $\mathbb{E}(R(i,j))=R_{0}(i,j)$) and $\sigma^{2}$ denotes the variance. For Normal distribution, we have the following conclusions.
\begin{itemize}
\item $R(i,j)\in\mathbb{R}$ (i.e., $R(i,j)$ can be any real value), $B(i,j)\in[-1,1]$, and $\rho\in(0,+\infty)$ since we can set the mean of Normal distribution as any value. Unlike Instances \ref{Bernoulli} and \ref{Binomial}, $B$'s entries can be negative for Normal distribution. The sorting aspect for this instance is similar to that in the Uniform instance, but it extends to the full range of real numbers, including negative values. The responses are sorted based on their magnitude, with smaller  values indicating more negative (or weaker/lower/less preferred) responses.
\item $\tau$ is unknown and $\gamma=\frac{\sigma^{2}}{\rho}$ since $\gamma=\frac{\mathrm{max}_{i\in[N],j\in[J]}\mathrm{Var}(R(i,j))}{\rho}=\frac{\sigma^{2}}{\rho}$.
\item Let $\gamma=\frac{\sigma^{2}}{\rho}$, Assumption \ref{a1} is $\mathrm{max}(N, J)\geq\frac{\tau^{2}\mathrm{log}(N+J)}{\sigma^{2}}$, a lower bound requirement for $\mathrm{max}(N, J)$; Error bounds in Theorem \ref{mainWGoM} are $\|e'_{i}(\hat{\Pi}-\Pi\mathcal{P})\|_{1}=O(\sqrt{\frac{\sigma^{2}\mathrm{log}(N+J)}{\rho^{2}J}})\mathrm{~and~}\frac{\|\hat{\Theta}-\Theta\mathcal{P}\|_{F}}{\|\Theta\|_{F}}=O(\sqrt{\frac{\sigma^{2}\mathrm{max}(N, J)\mathrm{log}(N+J)}{\rho^{2}NJ}})$. Hence, increasing $\rho$ (or decreasing $\sigma^{2}$) decreases the error ratese.
\end{itemize}
\end{Ex}
\begin{Ex}\label{Signed}
Our WGoM can also model \texttt{signed response matrix} by letting   $\mathbb{P}(R(i,j)=1)=\frac{1+R_{0}(i,j)}{2}$ and $\mathbb{P}(R(i,j)=-1)=\frac{1-R_{0}(i,j)}{2}$ according to Theorem \ref{ExistDisF}. Sure, $\mathbb{E}(R(i,j))=R_{0}(i,j)$ is satisfied. By carefully analyzing the properties of this discrete distribution, we obtain the following conclusions.
\begin{itemize}
\item $R(i,j)\in\{-1,1\}$, $B(i,j)\in[-1,1]$, and $\rho\in(0,1]$ because $\frac{1+R_{0}(i,j)}{2}$ and $\frac{1-R_{0}(i,j)}{2}$ are two probabilities. Note that like Instance \ref{Normal}, $B$'s elements can be negative for this case. Here, the sorting aspect is clear: 1 represents a ``positive" or ``favorable" response, while -1 represents a ``negative" or ``unfavorable" response. The sorting aspect here is binary but still captures the fundamental distinction between positive and negative evaluations.
\item $\tau\leq2$ and $\gamma\leq\frac{1}{\rho}$ since $\gamma=\frac{\mathrm{max}_{i\in[N],j\in[J]}\mathrm{Var}(R(i,j))}{\rho}=\frac{\mathrm{max}_{i\in[N],j\in[J]}(1-R^{2}_{0}(i,j))}{\rho}\leq\frac{1}{\rho}$.
\item Let $\tau=2$ and $\gamma=\frac{1}{\rho}$, Assumption \ref{a1} is $\mathrm{max}(N, J)\geq4\mathrm{log}(N+J)$; Error bounds in Theorem \ref{mainWGoM} are
    $\|e'_{i}(\hat{\Pi}-\Pi\mathcal{P})\|_{1}=O(\sqrt{\frac{\mathrm{log}(N+J)}{\rho^{2}J}})\mathrm{~and~}\frac{\|\hat{\Theta}-\Theta\mathcal{P}\|_{F}}{\|\Theta\|_{F}}=O(\sqrt{\frac{\mathrm{max}(N, J)\mathrm{log}(N+J)}{\rho^{2}NJ}})$. Hence, SCGoMA's error rates decrease as $\rho$ increases for signed response matrices.
\end{itemize}
\end{Ex}
\begin{rem}\label{MissingResponses}
Note that all entries of $R$ in Instances \ref{Uniform}-\ref{Signed} are nonzero, while there may exist zeros in real-world data. To generate missing responses (i.e., zeros in $R$), we can update $R(i,j)$ by $R(i,j)\mathcal{B}(i,j)$, where $\mathcal{B}(i,j)$ is a random value generated form the discrete distribution $\mathbb{P}(\mathcal{B}(i,j)=1)=p$ and $\mathbb{P}(\mathcal{B}(i,j)=0)=1-p$ for $i\in[N]$, $j\in[J]$, i.e., $\mathcal{B}\in\{0,1\}^{N\times J}$. Here, $p$ is a probability and it controls the sparsity (i.e., number of zeros) of the data. For clarity, we refer to $p$ as the sparsity parameter in this paper. It is easy to see that increasing $p$ improves the performance of SCGoMA since the number of missing responses decreases as $p$ increases. This observation is confirmed by our numerical studies presented in Section \ref{sec6}.
\end{rem}

\begin{table}[h!]
\footnotesize
	\centering
	\caption{The ranges of $R$'s elements, $B$'s elements, and $\rho$ for different $\mathcal{F}$ under our WGoM model for $i\in[N]$, $j\in[J]$.}
	\label{TableRanges}
%\resizebox{\columnwidth}{!}{
\begin{tabular}{cccccccccc}
\hline\hline
Distribution $\mathcal{F}$&$R(i,j)$&$B(i,j)$&$\rho$\\
\hline
Bernoulli&$\{0,1\}$&$[0,1]$&$(0,1]$\\
Binomial&$\{0,1,2,\ldots,m\}$&$[0,1]$&$(0,m]$\\
Uniform&$(0,2\rho)$&$[0,1]$&$(0,+\infty)$\\
Normal&Any real value&$[-1,1]$&$(0,+\infty)$\\
Signed&$\{-1,1\}$&$\{-1,1\}$&$(0,1]$\\
\hline\hline
\end{tabular}
%}
\end{table}

Table \ref{TableRanges} summaries the ranges of $R, B$, and $\rho$ for different $\mathcal{F}$. In Instances \ref{Bernoulli}-\ref{Signed}, we have shown that different kinds of weighted response matrices with latent memberships can be modeled by our WGoM model simply by setting $\mathcal{F}$ as different distributions that satisfy Equation (\ref{RFR0}). This supports the generality of our model WGoM. Again, we should emphasize that more than the seven distributions analyzed in Instances \ref{Bernoulli}-\ref{Signed}, weighted response matrix $R$ can be generated from any distribution satisfying Equation (\ref{RFR0}) under our model WGoM. Based on Theorem \ref{ExistDisF}, we also provide another two instances in \ref{ExtraInstances} to show how to analyze SCGoMA's performance when $R$ has only 2 or 3 distinct elements. To generate a weighted response matrix $R$ with elements being different from that of Theorem \ref{ExistDisF} and Instances \ref{Bernoulli}-\ref{Signed}, readers can try some other distributions satisfying Equation (\ref{RFR0}) and analyze SCGoMA's performance based on the properties of different distributions. Moreover, according to Theorem \ref{ExistDisF} and Instances \ref{Bernoulli}-\ref{Signed}, every response matrix $R$ mentioned in the examples in Section \ref{sec1} can be generated from our WGoM.
\begin{rem}\label{WithourAC}
Without Assumption \ref{a1} and Condition \ref{c1}, the theoretical upper bound for SCGoMA's error rate in estimating subjects' mixed memberships is given by $\|e'_{i}(\hat{\Pi}-\Pi\mathcal{P})\|_{1}=O(\kappa(\Pi'\Pi)K\sqrt{\lambda_{1}(\Pi'\Pi)}\|\hat{U}\hat{U}'-UU'\|_{2\rightarrow\infty})$ for $i\in[N]$ by the proof of Theorem \ref{mainWGoM}. However, this bound does not provide any insights into how model parameters such as $N$, $J$, and $\rho$ affect SCGoMA's performance under different distributions. Consequently, to investigate SCGoMA's performance under various distributions, we incorporate Assumption \ref{a1} and Condition \ref{c1} in this paper.
\end{rem}
\section{Estimating the number of latent classes}\label{sec5}
Recall that the inputs of our SCGoMA are the observed weighted response matrix $R$ and the number of latent classes $K$. This implies that we have assumed $K$ to be known in advance. However, this value is often unknown for real-world categorical data. Here, we provide an approach for estimating $K$ in categorical data with weighted responses. Our approach for inferring $K$ is inspired by the strategy used in \citep{qing2024finding}, where the authors determine $K$ for categorical data with polytomous responses (i.e., the case when $\mathcal{F}$ is a Binomial distribution) by maximizing the fuzzy modularity introduced in \citep{nepusz2008fuzzy}. Although fuzzy modularity is a good metric for measuring the quality of estimated mixed memberships for categorical data with polytomous responses, it is unsuitable for categorical data with negative responses. To handle this more general case, we employ a broader measure: the fuzzy weighted modularity proposed in \citep{qing2024mixed}, which evaluates the quality of overlapping communities in weighted networks. In this paper, we select $K$ by maximizing this modularity.

First, we introduce this modularity. Let $\hat{\Pi}\in[0,1]^{N\times k}$ be an estimated mixed membership matrix obtained by applying any algorithm $\mathcal{M}$ to $R$ with $k$ latent classes. Define a $N\times N$ symmetric matrix $A$ as $A=RR'$. Next, define two nonnegative symmetric matrices: $A_{+}=\mathrm{max}(0, A)$ and $A_{-}=\mathrm{max}(0, -A)$. We then define two $N\times1$ vectors $d_{+}$ and $d_{-}$ as $d_{+}(i)=\sum_{j\in[N]}A_{+}(i,j)$ and $d_{-}(i)=\sum_{j\in[N]}A_{-}(i,j)$ for $i\in[N]$. Further, define two values $m_{+}$ and $m_{-}$ as $m_{+}=\frac{\sum_{i\in[N]}d_{+}(i)}{2}$ and $m_{-}=\frac{\sum_{i\in[N]}d_{-}(i)}{2}$. The fuzzy weighted modularity introduced in \citep{qing2024mixed} can be calculated as follows:
\begin{align}\label{Qfwm}
Q_{\mathcal{M}}(k)=\frac{m_{+}Q_{+}-m_{-}Q_{-}}{m_{+}+m_{-}},
\end{align}
where
\[
Q_{+}=
\begin{cases}
\frac{1}{2m_{+}}\sum_{i\in[N], j\in[N]}(A_{+}(i,j)-\frac{d_{+}(i)d_{+}(j)}{2m_{+}})\hat{\Pi}(i,:)\hat{\Pi}'(j,:), & \text{if } m_{+}>0; \\
0, &\text{otherwise},
\end{cases}
\]
and
\[
Q_{-}=
\begin{cases}
\frac{1}{2m_{-}}\sum_{i\in[N], j\in[N]}(A_{-}(i,j)-\frac{d_{-}(i)d_{-}(j)}{2m_{-}})\hat{\Pi}(i,:)\hat{\Pi}'(j,:), & \text{if } m_{-}>0;\\
0, &\text{otherwise}.
\end{cases}
\]

Given that the fuzzy weighted modularity computed by Equation (\ref{Qfwm}) depends on the method $\mathcal{M}$ and the possible number of latent classes $k$, we denote it as $Q_{\mathcal{M}}(k)$ to emphasize this connection. As with the well-known Newman-Girvan modularity \citep{newman2004finding, newman2006modularity}, a larger value of the fuzzy weighted modularity implies better partitions of the estimated classes. Here, we treat $A=RR'$ as an adjacency matrix of a weighted network, allowing us to use the fuzzy weighted modularity computed based on $A$ to measure the quality of the estimated mixed memberships.

Suppose that the true number of latent classes may be one of the set $[K_C]$, where $K_C$ is set as 15 in this paper, as we assume that the true $K$ is not too large for real-world categorical datasets. For method $\mathcal{M}$, given that a larger value of $Q_{\mathcal{M}}(k)$ is preferred, we estimate $K$ by choosing the value that maximizes $Q_{\mathcal{M}}(k)$, i.e., we use $\hat{K}_{\mathcal{M}}=\arg\max_{k\in[K_C]}Q_{\mathcal{M}}(k)$ as the estimated number of latent classes for method $\mathcal{M}$. Consequently, $Q_{\mathcal{M}}(\hat{K}_{\mathcal{M}})$ represents the largest achievable fuzzy weighted modularity for method $\mathcal{M}$. This strategy of determining $K$ by maximizing a modularity metric is widely used in social network analysis \citep{nepusz2008fuzzy}.

A natural question arises: can the fuzzy weighted modularity serve as a reliable measure of the quality of estimated mixed memberships for categorical data with weighted responses, especially when the true mixed memberships are unknown? To address this, we evaluate the Accuracy rate (a metric detailed in Section \ref{sec6}) of method $\mathcal{M}$ in estimating $K$ when the observed weighted response matrix $R$ is generated from our WGoM model under various distributions. The Accuracy rate ranges from 0 to 1, with larger values indicating higher accuracy in estimating $K$. If the Accuracy rate is close to 1, we can confidently assert that the fuzzy weighted modularity is a valid measure for assessing the quality of estimated mixed memberships for categorical data with weighted responses. The numerical results presented in Section \ref{sec6} support the effectiveness of this modularity.
\section{Simulation studies}\label{sec6}
This section empirically investigates the performance of SCGoMA and compares it with four baseline methods.
\subsection{Baseline methods}
The first baseline method can also fit our WGoM model. Here, we briefly introduce this procedure. Similar to SCGoMA, we begin to propose this alternative procedure from the oracle case when the population response matrix $R_{0}$ is known. Because $R_{0}=\Pi\Theta'$ under our WGoM model, we have $R_{0}=\Pi R_{0}(\mathcal{I},:)$, where $R_{0}(\mathcal{I},:)$ is a $K\times J$ matrix with rank $K$. As suggested in \citep{mao2021estimating,jin2024mixed}, $R_{0}=\Pi R_{0}(\mathcal{I},:)$ forms a simplex with the $K$ rows of $R_{0}(\mathcal{I},:)$ being the vertices. For such a simplex structure, similar to the Ideal SCGoMA algorithm, applying the SP algorithm to all rows of $R_{0}$ assuming that there are $K$ vertices can exactly recover the $K$ vertices. Without confusion, we also let $Z=R_{0}R'_{0}(\mathcal{I},:)(R_{0}(\mathcal{I},:)R'_{0}(\mathcal{I},:))^{-1}\equiv \Pi$ by the form $R_{0}=\Pi R_{0}(\mathcal{I},:)$ where $R_{0}(\mathcal{I},:)R'_{0}(\mathcal{I},:)$ is nonsingular because the rank of $K\times J$ matrix $R_{0}(\mathcal{I},:)$ is $K$. Then, we have $\Pi(i,:)=\frac{Z(i,:)}{\|Z(i,:)\|_{1}}$ for $i\in[N]$ since $Z\equiv\Pi$. After obtaining $\Pi$, we can recover $\Theta$ from the form $R_{0}=\Pi\Theta'$ by setting $\Theta=R'_{0}\Pi(\Pi'\Pi)^{-1}$. The above analysis can be summarized by Algorithm \ref{alg:IdealRMSP} called Ideal RMSP.

\begin{algorithm}
\caption{\textbf{Ideal RMSP}}
\label{alg:IdealRMSP}
\begin{algorithmic}[1]
\Require $R_{0}$ and $K$.
\Ensure $\Pi$ and $\Theta$.
\State Apply the SP algorithm on all rows of $R_{0}$ with $K$ clusters to get the index set $\mathcal{I}$.
\State Set $Z=R_{0}R'_{0}(\mathcal{I},:)(R_{0}(\mathcal{I},:)R'_{0}(\mathcal{I},:))^{-1}$.
\State Recover $\Pi$ by $\Pi(i,:)=\frac{Z(i,:)}{\|Z(i,:)\|_{1}}$ for $i\in[N]$.
\State Recover $\Theta$ by $\Theta=R'_{0}\Pi(\Pi'\Pi)^{-1}$.
\end{algorithmic}
\end{algorithm}

Similar to the relationship between the SCGoMA algorithm and the Ideal SCGoMA algorithm, the Ideal RMSP algorithm can be easily extended to the real case. The extension is summarized in Algorithm \ref{alg:RMSP} which we call RMSP. We also provide the flowchart of RMSP in Fig.~\ref{IllusRMSP}. By comparing SCGoMA with RMSP, we see that (a) RMSP estimates the membership matrix and the item parameter matrix without using SVD, while SCGoMA is developed based on an application of SVD; (b) both methods use the SP algorithm to find the index set, and the difference is that the SP algorithm is applied to the $N\times J$ matrix $R$ in RMSP, while SP is used to the $N\times K$ matrix $\hat{U}$ in SCGoMA.
\begin{algorithm}
\caption{\textbf{Response matrix with SP (RMSP for short)}}
\label{alg:RMSP}
\begin{algorithmic}[1]
\Require $R$ and $K$.
\Ensure $\hat{\Pi}$ and $\hat{\Theta}$.
\State Apply the SP algorithm on all rows of $R$ with $K$ clusters to get the estimated index set $\hat{\mathcal{I}}$.
\State Compute $\hat{Z}=\mathrm{max}(0, RR'(\hat{\mathcal{I}},:)(R(\hat{\mathcal{I}},:)R'(\hat{\mathcal{I}},:))^{-1})$.
\State Compute $\hat{\Pi}(i,:)=\frac{\hat{Z}(i,:)}{\|\hat{Z}(i,:)\|_{1}}$ for $i\in[N]$.
\State Compute $\hat{\Theta}=R'\hat{\Pi}(\hat{\Pi}'\hat{\Pi})^{-1}$.
\end{algorithmic}
\end{algorithm}

\begin{figure}[H]
\centering
\begin{tikzpicture}
[auto,
decision/.style={diamond, draw=blue, thick, fill=blue!20,
    text width=4.5em,align=flush center,
    inner sep=1pt},
block/.style ={rectangle, draw=blue, thick, fill=gray!20,
    text width=6em,align=center, rounded corners,
    minimum height=2em},
block1/.style ={diamond, draw=blue, thick, fill=orange!20,
    text width=1.4em,align=center, rounded corners,
    minimum height=2em},
block2/.style ={rectangle, draw=blue, thick, fill=blue!20,
    text width=3.2em,align=center, rounded corners,
    minimum height=2em},
block3/.style ={rectangle, draw=blue, thick, fill=gray!20,
    text width=6em,align=center, rounded corners,
    minimum height=2em},
block4/.style ={rectangle, draw=blue, thick, fill=gray!20,
    text width=6em,align=center, rounded corners,
    minimum height=2em},
line/.style ={draw, thick, -latex',shorten >=2pt},
cloud/.style ={draw=red, thick, ellipse,fill=red!20,
    minimum height=2em}]
\matrix [column sep=5mm,row sep=7mm]
{
&\node [block2] (input) {Input: $R, K$};
&\node [block4] (vh) {Run SP on $R$ to get $\hat{\mathcal{I}}$};
&\node [block] (mr) {Compute $\hat{Z}, \hat{\Pi},$ and $\hat{\Theta}$};
&\node [block2] (output) {Output: $\hat{\Pi}, \hat{\Theta}$};\\
};
\begin{scope}[every path/.style=line]
\path (input) -- (vh);
\path (vh) -- (mr);
\path (mr) -- (output);
\end{scope}
\end{tikzpicture}
\caption{Flowchart of Algorithm \ref{alg:RMSP}.}\label{IllusRMSP}
\end{figure}

The computational costs of the 1st, 2nd, 3rd, and 4th steps in RMSP are $O(NJK)$, $O(NJK)$, $O(NK)$, and $O(NJK)$, respectively. As a result, the total complexity of RMSP is $O(NJK)$.

We also compare our SCGoMA with four other algorithms: GoM-SRSC \citep{qing2024finding}, GeoNMF \citep{GeoNMF}, SVM-cone-DCMMSB (SVM-cD for short) \citep{MaoSVM}, and MixedSCORE \citep{jin2024mixed}. Originally designed for estimating $\Pi$ and $\Theta$ in categorical data with polytomous responses, the GoM-SRSC method encounters issues when negative responses are present. To address this, we introduce $\tilde{R}$, derived by adding a sufficiently large positive constant to all elements of the original $R$, ensuring all entries in $\tilde{R}$ are positive. This adjusted matrix $\tilde{R}$ is then used to compute the regularized Laplacian matrix in \citep{qing2024finding}, enabling GoM-SRSC to function effectively for categorical data with weighted responses. Additionally, we observe that GoM-SRSC can only estimate the item parameter matrix $\Theta$ when $\mathcal{F}$ is a Bernoulli or Binomial distribution. To broaden its applicability, we replace the original $\mathrm{min}(M, \mathrm{max}(0, R'\hat{\Pi}(\hat{\Pi}'\hat{\Pi})^{-1}))$ with $R'\hat{\Pi}(\hat{\Pi}'\hat{\Pi})^{-1}$. The last three algorithms were originally designed for community detection in undirected networks, but we modify them to estimate the mixed membership matrix $\Pi$ and the item parameter matrix $\Theta$ under WGoM for latent class analysis. Here's how we modify them:

Step (1). We create a symmetric matrix $A = RR'$ of size $N \times N$. Note that for GeoNMF, we need to make $R$'s elements nonnegative by adding a large positive constant if it has any negative elements, as GeoNMF only works on symmetric nonnegative matrices.

Step (2). We apply GeoNMF (and SVM-CD and MixedSCORE) to $A$ with $K$ latent classes to obtain $\hat{\Pi}$, an estimation of the mixed membership matrix $\Pi$.

Step (3). We compute $\hat{\Theta}=R'\hat{\Pi}(\hat{\Pi}'\hat{\Pi})^{-1}$.

\begin{figure}[H]
\centering
\begin{tikzpicture}
[auto,
decision/.style={diamond, draw=blue, thick, fill=blue!20,
    text width=4.5em,align=flush center,
    inner sep=1pt},
block/.style ={rectangle, draw=blue, thick, fill=gray!20,
    text width=7.8em,align=center, rounded corners,
    minimum height=2em},
block1/.style ={diamond, draw=blue, thick, fill=orange!20,
    text width=1.4em,align=center, rounded corners,
    minimum height=2em},
block2/.style ={rectangle, draw=blue, thick, fill=blue!20,
    text width=3.2em,align=center, rounded corners,
    minimum height=2em},
block3/.style ={rectangle, draw=blue, thick, fill=gray!20,
    text width=5.7em,align=center, rounded corners,
    minimum height=2em},
block4/.style ={rectangle, draw=blue, thick, fill=gray!20,
    text width=5.8em,align=center, rounded corners,
    minimum height=2em},
line/.style ={draw, thick, -latex',shorten >=2pt},
cloud/.style ={draw=red, thick, ellipse,fill=red!20,
    minimum height=2em}]
\matrix [column sep=5mm,row sep=7mm]
{
&\node [block2] (input) {Input: $R, K$};
&\node [block3] (pca) {Set $A=RR'$};
&\node [block4] (vh) {Run $\mathcal{M}$ on $A$ with $K$ latent classes to get $\hat{\Pi}$};
&\node [block] (mr) {Compute $\hat{\Theta}=R'\hat{\Pi}(\hat{\Pi}'\hat{\Pi})^{-1}$};
&\node [block2] (output) {Output: $\hat{\Pi}, \hat{\Theta}$};\\
};
\begin{scope}[every path/.style=line]
\path (input) -- (pca);
\path (pca) -- (vh);
\path (vh) -- (mr);
\path (mr) -- (output);
\end{scope}
\end{tikzpicture}
\caption{Flowchart of modifying method $\mathcal{M}$ in the problem of mixed membership community detection to the problem of latent class analysis, where method $\mathcal{M}$ refers to GeoNMF, SVM-cD, and MixedSCORE in this paper.}\label{IllusComPetive}
\end{figure}

Fig.~\ref{IllusComPetive} displays the flowchart of Steps (1), (2), and (3). Let's briefly explain the rationales behind these steps:

\begin{itemize}
  \item For Step (1), we set $A = RR'$ as a symmetric matrix because GeoNMF, SVM-cD, and MixedSCORE only work with symmetric matrices.
  \item For Step (2), consider the ideal scenario where $A_{0} = R_{0}R'_{0}$. This leads to $A_{0} = \Pi B\Pi'$, where $B = \Theta'\Theta$ is a $K \times K$ matrix. By performing the top $K$ eigen decomposition on $A_{0} = \Xi\Lambda\Xi'$ , where $\Lambda$ is a diagonal matrix with the $k$-th diagonal entry as the $k$-th largest eigenvalue in magnitude of $A_{0}$ for $k \in [K]$ and $\Xi$ is a matrix collecting the respective eigenvectors, we obtain $\Xi = \Pi\Xi(\mathcal{I},:)$ which suggests applying the SP algorithm to $\Xi$ to obtain $\hat{\Pi}$. In fact, both SVM-cD and MixedSCORE apply some vertex hunting algorithms like SP to the variants of $\Xi$ to estimate the mixed memberships. Thus, we can run SVM-cD (and MixedSCORE) with inputs $RR'$ (i.e., $A$) and $K$ to estimate $\Pi$. For GeoNMF, it adapts symmetric nonnegative matrix factorizations to estimate the mixed memberships, thus we can run GeoNMF with inputs $A$ and $K$ to estimate $\Pi$.
  \item For Step (3), since the original GeoNMF, SVM-cD, and MixedSCORE algorithms cannot directly estimate the item parameter matrix $\Theta$, we can utilize the previously estimated $\Pi$ from Steps (1) and (2). Under the WGoM model, $\Theta$ can be calculated as $\Theta=R_{0}\Pi(\Pi'\Pi)^{-1}$. Therefore, we can estimate $\Theta$ using Step (3).
\end{itemize}

For convenience, we will continue to refer to the modified versions of GeoNMF, SVM-cD, and MixedSCORE as GeoNMF, SVM-cD, and MixedSCORE, respectively. Meanwhile, in this paper, we do not compare our method with the joint maximum likelihood (JML) algorithm developed in the R package \texttt{sirt} \citep{robitzsch2019sirt}, as JML is limited to estimating subjects' mixed memberships for categorical data with binary responses and is known to be computationally intensive \citep{chen2023spectral}.
\subsection{Evaluation metric}
We use the \emph{Hamming error} defined below to measure how close the estimated membership matrix $\hat{\Pi}$ is to the true membership matrix $\Pi$.
\begin{align*}
\mathrm{Hamming~error}=\frac{\mathrm{min}_{\mathcal{P}\in\mathcal{S}}\|\hat{\Pi}-\Pi\mathcal{P}\|_{1}}{N},
\end{align*}
where $\mathcal{S}$ is the set of all $K\times K$ permutation matrices. This metric ranges in $[0,1]$, and it is the smaller the better.

We use the \emph{Relative error} defined below to capture the difference between $\hat{\Theta}$ and $\Theta$.
\begin{align*}
\mathrm{Relative~error}=\mathrm{min}_{\mathcal{P}\in\mathcal{S}}\frac{\|\hat{\Theta}-\Theta\mathcal{P}\|_{F}}{\|\Theta\|_{F}}.
\end{align*}
This measure is nonnegative and it is also the smaller the better.

To evaluate the effectiveness of different methods in determining the number of latent classes $K$, we use the Accuracy rate metric. This metric is the proportion of instances in which a particular method accurately determines $K$ relative to the total number of independent trials conducted. This metric ranges in $[0,1]$, with a higher value indicating greater precision in the estimation of $K$.
\subsection{Numerical experiments}
Next, we aim to verify our theoretical findings and evaluate the numerical performance of SCGoMA by changing the scaling parameter $\rho$ and the number of subjects $N$ when the observed weighted response matrix $R$ is generated from different distribution $\mathcal{F}$ with expectation $R_{0}=\Pi\Theta$’ under the proposed model. All results are reported using MATLAB R2021b on a standard personal computer (Thinkpad X1 Carbon Gen 8).

For each numerical experiment, unless otherwise specified, we set the model parameters $(N,~J,~K,~\Pi,~\rho,~B)$ and the distribution $\mathcal{F}$ as follows. When the number of latent classes $K$ is fixed, we let $K=3$, $J=\frac{N}{2}$. Let each latent class have $N_{0}$ pure subjects and set the $N\times K$ membership matrix $\Pi$ as follows: $\Pi(i,1)=1$ for $1\leq i\leq N_{0}$, $\Pi(i,2)=1$ for $N_{0}+1\leq i\leq 2N_{0}$, and $\Pi(i,3)=1$ for $2N_{0}+1\leq i\leq3N_{0}$. Let subjects in $\{3N_{0}+1,3N_{0}+2,\ldots,N\}$ be mixed with membership score $(\frac{1}{3},\frac{1}{3},\frac{1}{3})$. Unless specified, we set $N_{0}=\frac{N}{4}$ (so $N$ should be set as a multiple of 4 here). The setting of the $J\times K$ matrix $B$ has two cases. For distributions that require $B$'s elements to be nonnegative, set $B(j,k)=\mathrm{rand}(1)$ for $j\in[J]$, $k\in[K]$, where $\mathrm{rand}(1)$ is a random value obtained from the Uniform distribution on $[0,1]$. For distributions that allow $B$'s entries to be negative, set $B(j,k)=2\mathrm{rand}(1)-1$ for $j\in[J],k\in[K]$. For convenience, we do not require $\mathrm{max}_{j\in[J],k\in[K]}|B(j,k)|$ to be 1 here since we will consider the scaling parameter $\rho$. For the settings of $\rho$ and $N$, we set them independently for each numerical experiment. When $K$ is changed, we set $N=100K$, $J=50K$, and $N_{0}=80$, while the membership scores for mixed subjects and $B$ remain the same as when $K$ is fixed. SCGoMA's theoretical performance under Bernoulli distribution is similar to that under Binomial distribution, so we only consider Binomial, Uniform, Normal, and signed responses here. After specifying $N,J,K,\Pi,\rho,B$, and $\mathcal{F}$, the observed weighted response matrix $R$ with expectation $R_{0}=\Pi\Theta'=\rho\Pi B'$ can be generated using Steps (a)-(c) provided after Proposition \ref{idWGoM}. We then generate missing responses by updating $R(i,j)$ by $R(i,j)\mathcal{B}(i,j)$ for $i\in[N]$, $j\in[J]$ as done in Remark \ref{MissingResponses}, where we generate the $N\times J$ binary matrix $\mathcal{B}$ by letting $\mathcal{B}(i,j)\sim\mathrm{Bernoulli}(p)$ and the sparsity parameter $p$ are set independently for each experiment.

After obtaining $R$, applying each algorithm to $R$ with $K$ latent classes yields the estimated membership matrix $\hat{\Pi}$ and the estimated item parameter matrix $\hat{\Theta}$. We then compute the Hamming error, Relative error, and estimated $K$ for each algorithm. In each numerical experiment, we generate 100 independent replicates and report the averaged Hamming error, the averaged Relative error, the averaged running time (in seconds), and the Accuracy rate for each algorithm. We consider four cases: changing $\rho$, changing $N$, changing $K$, and changing $p$, where the range of $\rho$ is set in the theoretical range analyzed in Instances \ref{Bernoulli}-\ref{Signed} for each distribution.
\subsubsection{Binomial distribution}
When $R(i,j)\sim \mathrm{Binomial}(m,\frac{R_{0}(i,j)}{m})$ for $i\in[N]$, $j\in[J]$:

\textbf{Experiment 1(a): changing $\rho$.} Let $N=800$, $m=5$, $p=1$, and $\rho$ range in $\{0.5,~1,~1.5,~\ldots,~5\}$.

\textbf{Experiment 1(b): changing $N$.} Let $\rho=0.5$, $m=5$, $p=1$, and $N$ range in $\{800,~1600,~2400,~\ldots,~6400\}$.

\textbf{Experiment 1(c): changing $K$.} Let $m=5$, $\rho=1$, $p=1$, and $K$ range in $\{1,~2,~3,~4,~5\}$.

The results are presented in Fig.~\ref{S1}. For the estimation of $\Pi$ and $\Theta$: SCGoMA exhibits enhanced performance as the sparsity parameter $\rho$ increases in Experiment 1(a), aligning with our analysis in Instance \ref{Binomial}. In Experiment 1(b), SCGoMA demonstrates superior performance when the number of items $N$ grows, confirming our analysis in Theorem \ref{mainWGoM}. In Experiment 1(c), SCGoMA's performance diminishes as $K$ increases, as estimating $\Pi$ and $\Theta$ becomes more challenging for larger $K$. Except for RMSP and SVM-cD, all methods exhibit improved performance with increasing $\rho$ (and $N$). Notably, SCGoMA, GeoNMF, MixedSCORE, and GoM-SRSC exhibit competitive performances and outperform RMSP and SVM-cD in estimating the WGoM parameters. Additionally, while RMSP is the fastest method, it demonstrates the poorest performance in estimating $\Pi$ and $\Theta$. Although our SCGoMA runs slightly slower than RMSP, it runs faster than the other four methods, and this is also observed in Experiments 2-4. All approaches processed weighted response matrices comprising 6400 subjects and 3200 items within ten seconds. For inferring the number of latent classes $K$: Except for RMSP and GeoNMF, all methods accurately estimate $K$ when the true $K \geq 2$. For $K = 1$, all methods fail to determine the true $K$. Nevertheless, this is not a significant issue because, for real data, we typically prefer to consider cases where $K \geq 2$ for further analysis. Otherwise, if we assume $K = 1$, implying all subjects belong to the same latent class, further analysis becomes unnecessary. The high accuracy of SCGoMA, SVM-cD, Mixed-SCORE, and GoM-SRSC in estimating $K$ by maximizing the fuzzy weighted modularity suggests that this modularity is a reliable measure for assessing the quality of estimated mixed memberships for categorical data with weighted responses. Lastly, we did not consider increasing $p$ in this experiment because $\rho$ already controls the sparsity of $R$ for the Binomial distribution.
\begin{figure}
\centering
\resizebox{\columnwidth}{!}{
\subfigure[Experiment 1(a)]{\includegraphics[width=0.33\textwidth]{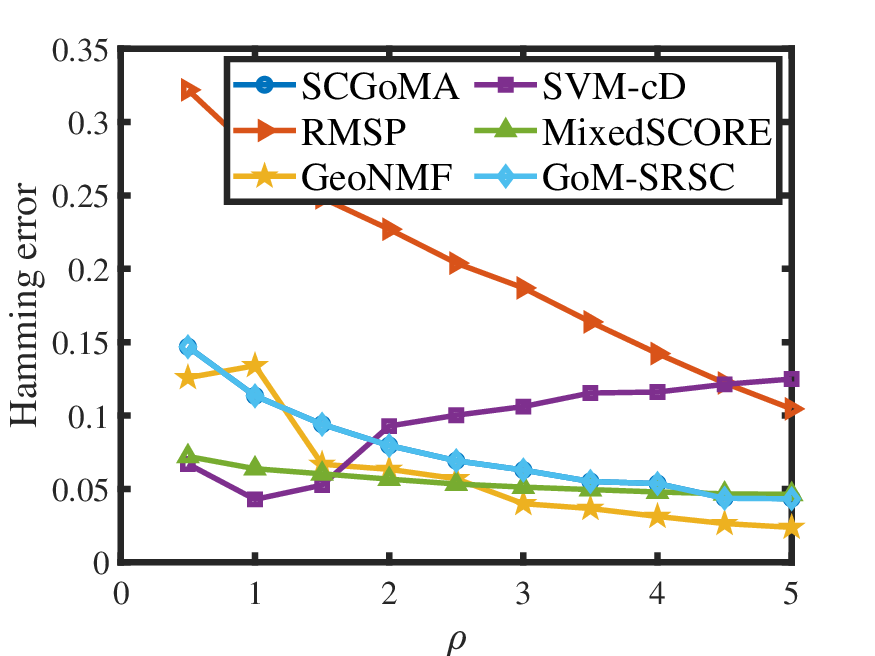}}
\subfigure[Experiment 1(a)]{\includegraphics[width=0.33\textwidth]{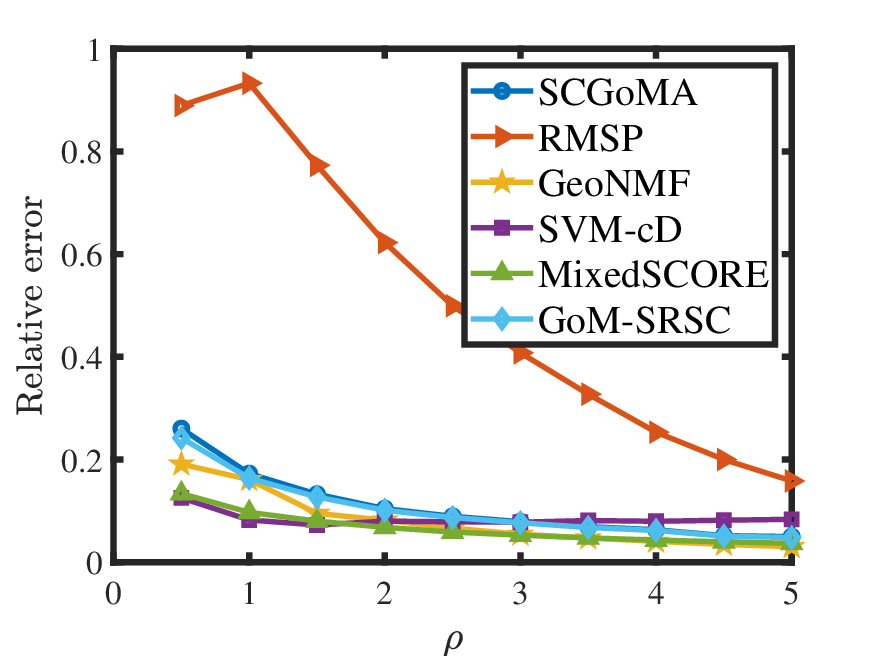}}
\subfigure[Experiment 1(a)]{\includegraphics[width=0.33\textwidth]{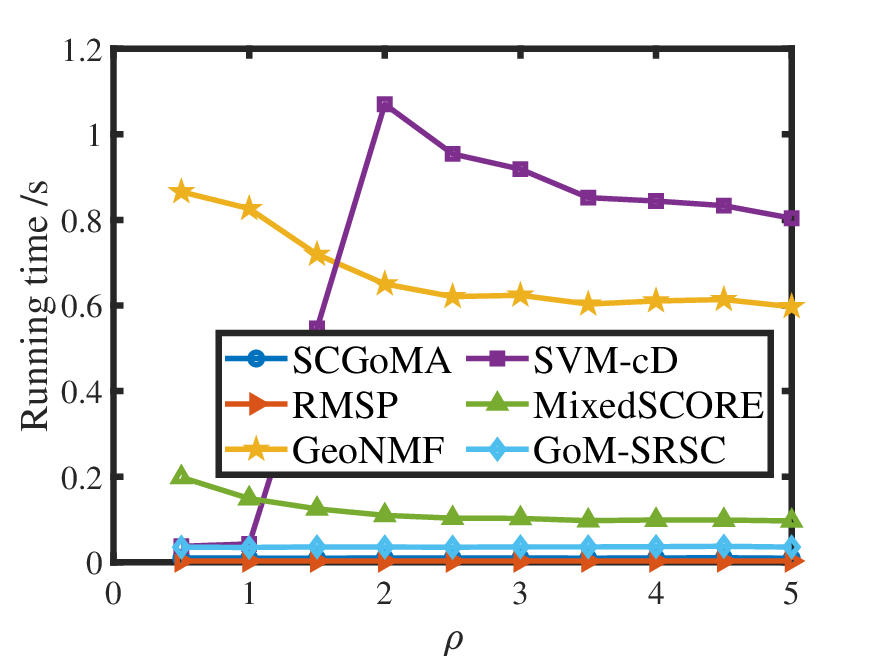}}
\subfigure[Experiment 1(a)]{\includegraphics[width=0.33\textwidth]{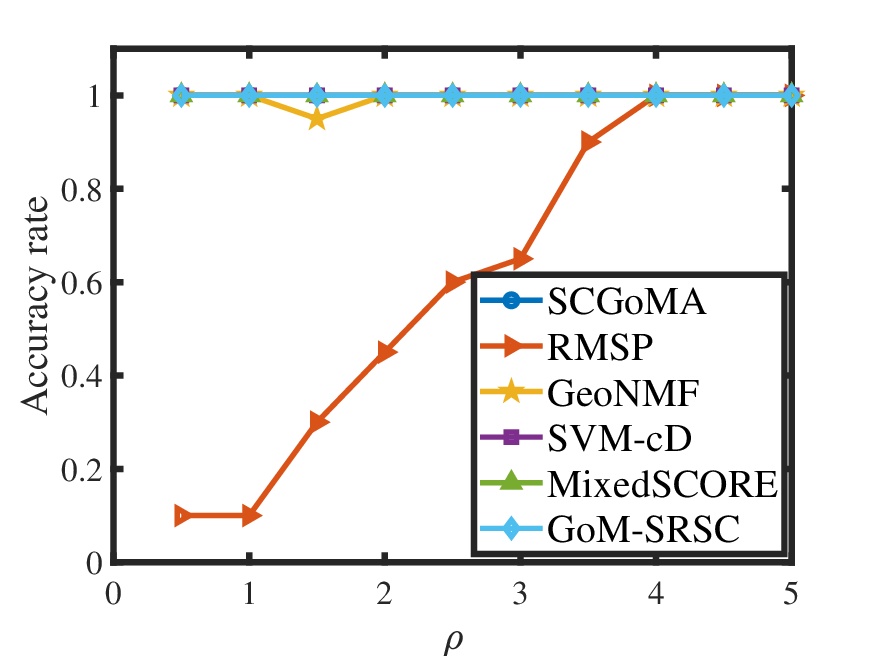}}
}
\resizebox{\columnwidth}{!}{
\subfigure[Experiment 1(b)]{\includegraphics[width=0.33\textwidth]{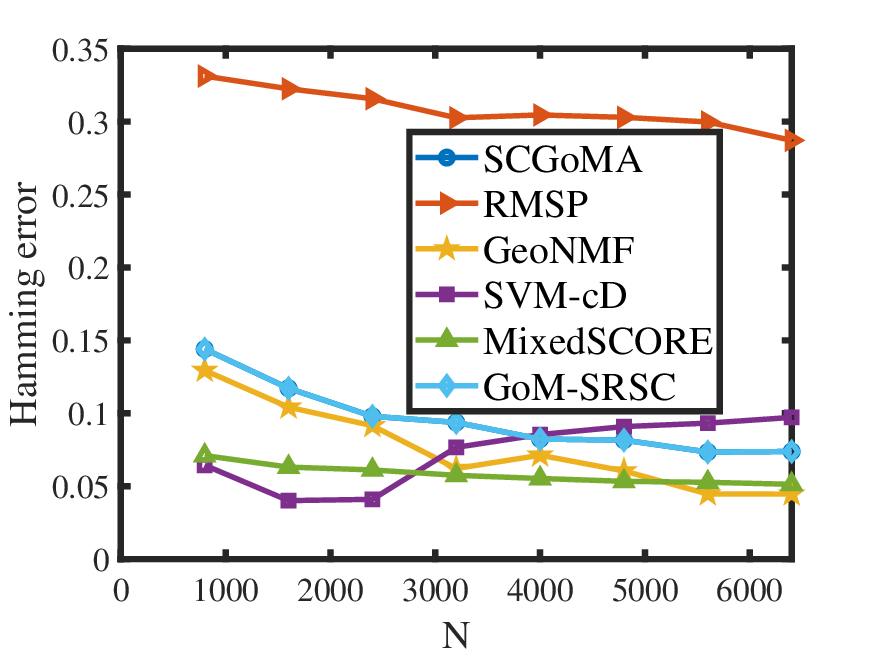}}
\subfigure[Experiment 1(b)]{\includegraphics[width=0.33\textwidth]{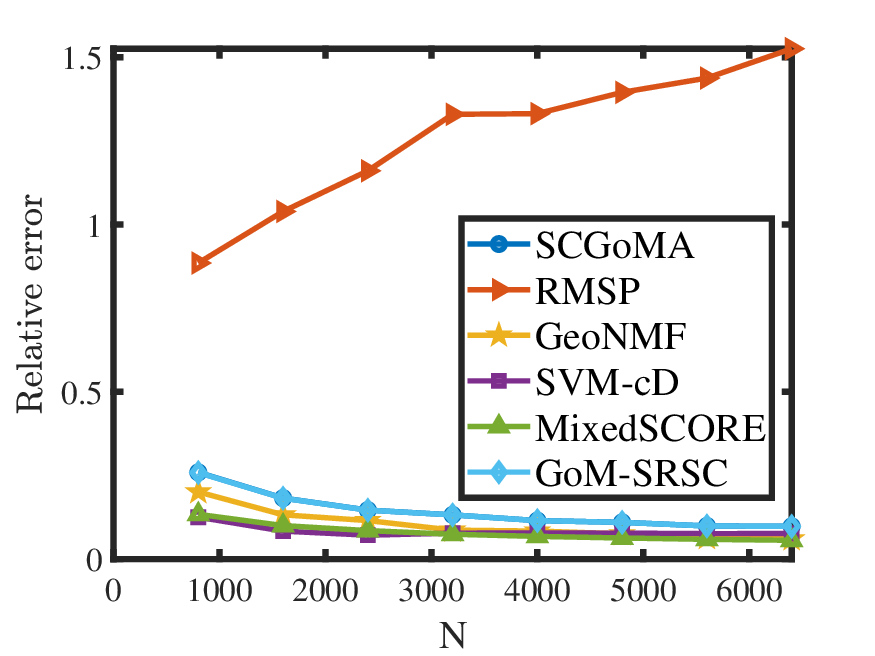}}
\subfigure[Experiment 1(b)]{\includegraphics[width=0.33\textwidth]{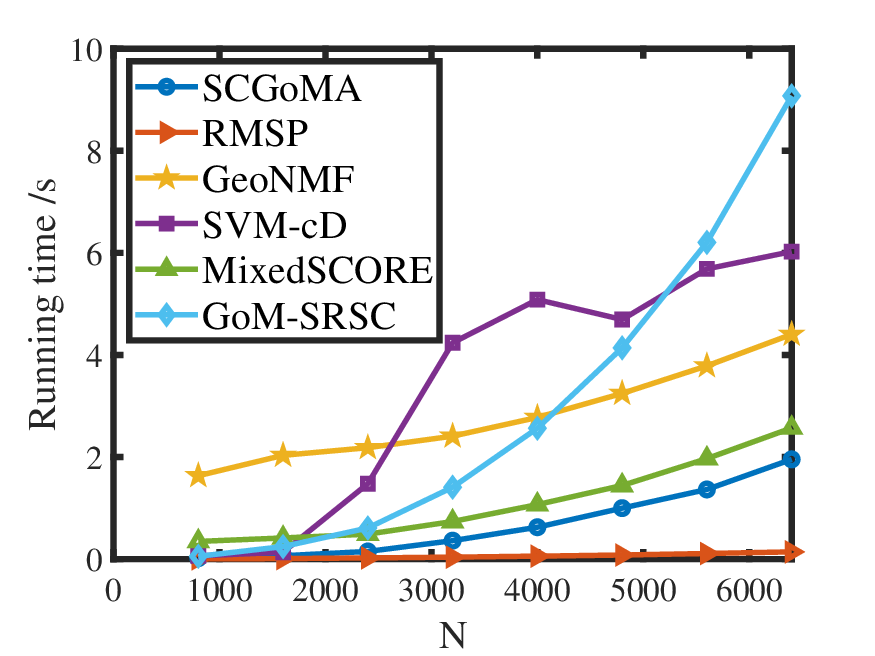}}
\subfigure[Experiment 1(b)]{\includegraphics[width=0.33\textwidth]{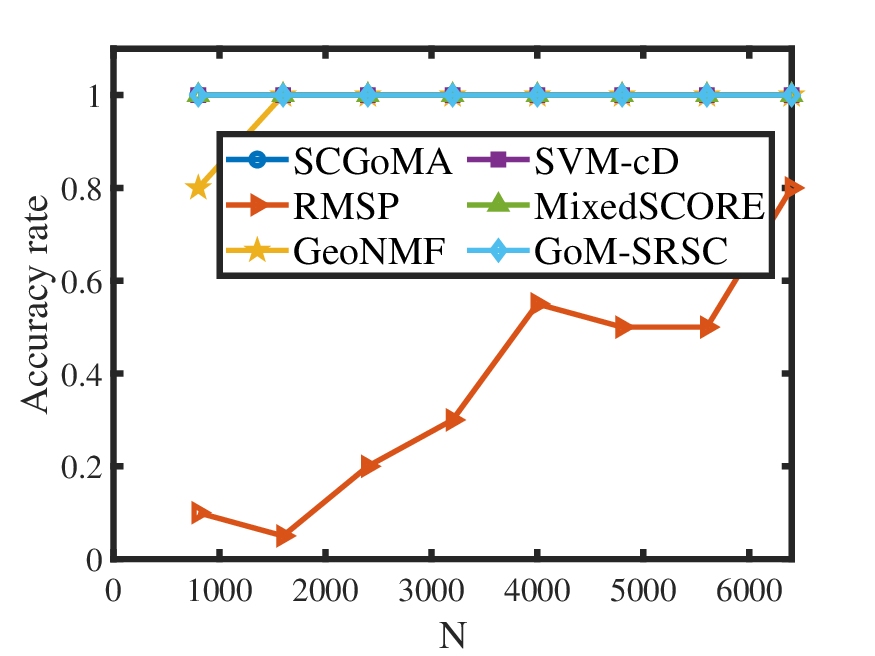}}
}
\resizebox{\columnwidth}{!}{
\subfigure[Experiment 1(c)]{\includegraphics[width=0.33\textwidth]{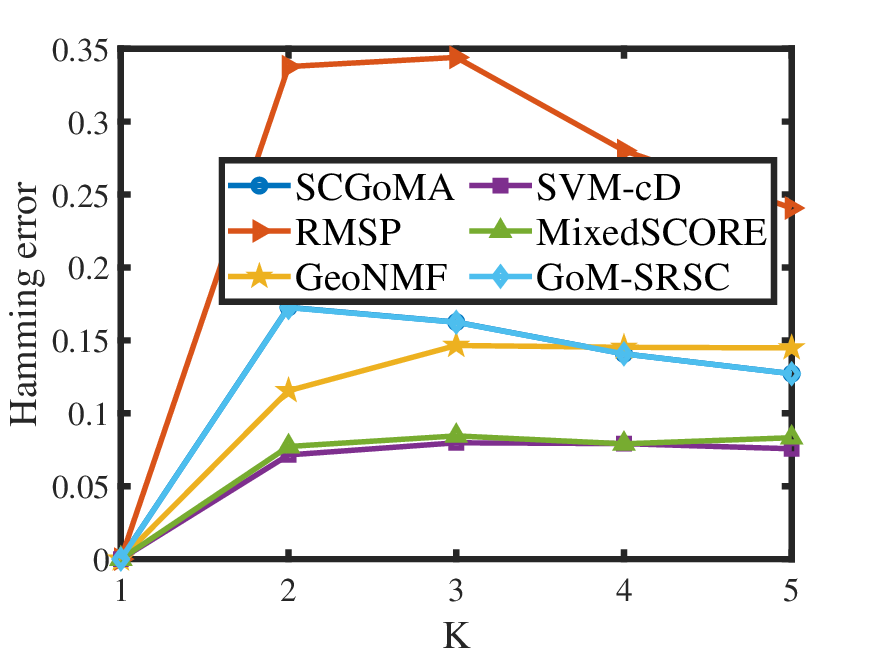}}
\subfigure[Experiment 1(c)]{\includegraphics[width=0.33\textwidth]{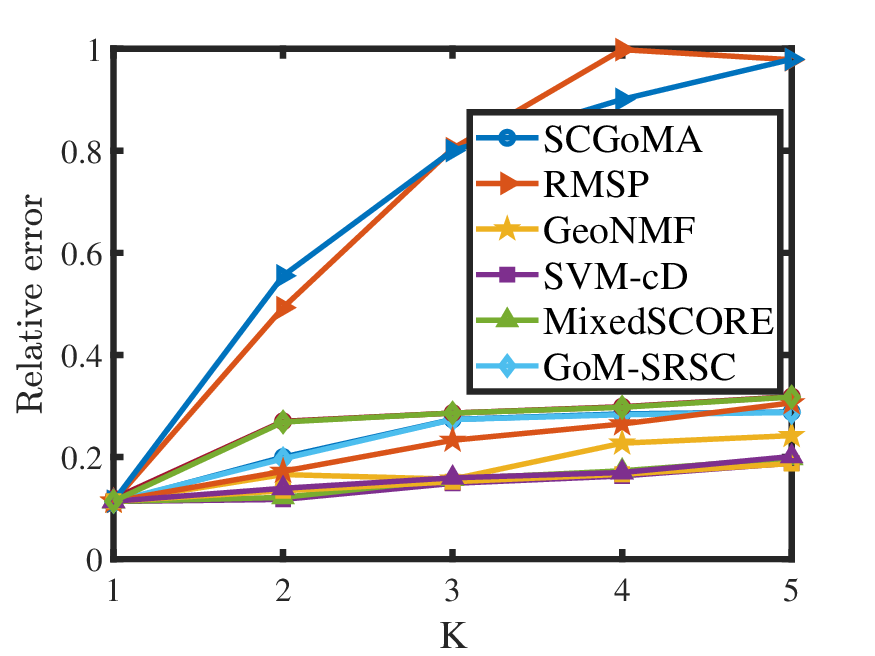}}
\subfigure[Experiment 1(c)]{\includegraphics[width=0.33\textwidth]{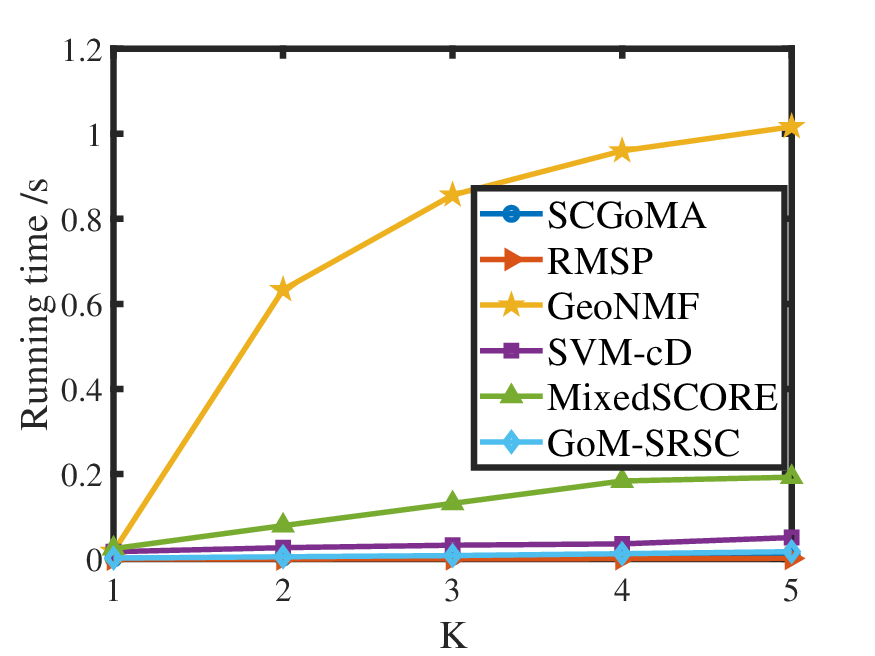}}
\subfigure[Experiment 1(c)]{\includegraphics[width=0.33\textwidth]{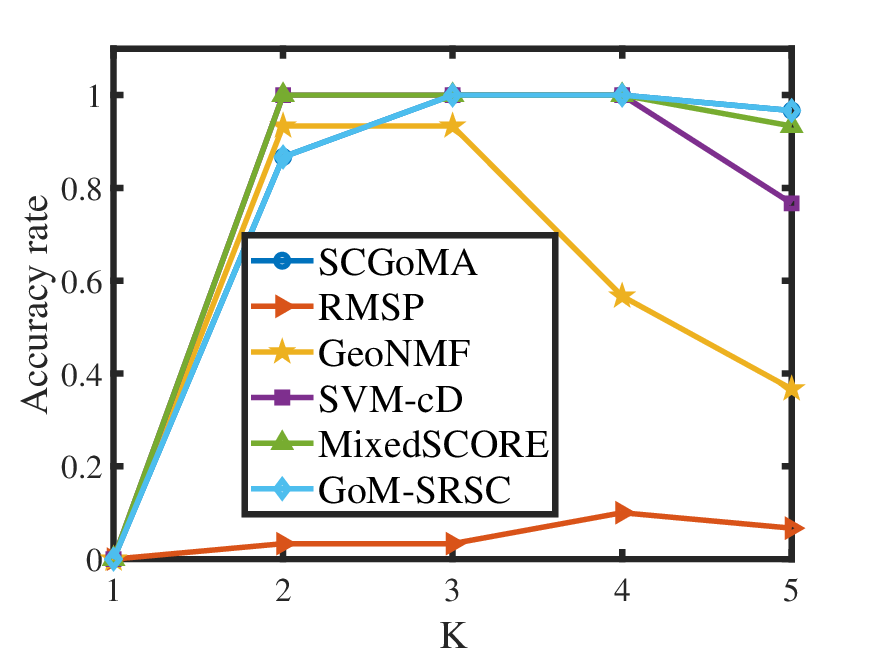}}
}
\caption{Binomial distribution.}
\label{S1} %% label for entire figure
\end{figure}
\subsubsection{Uniform distribution}
When $R(i,j)\sim \mathrm{Uniform}(0,2R_{0}(i,j))$ for $i\in[N]$, $j\in[J]$:

\textbf{Experiment 2(a): changing $\rho$.} Let $N=800$, $p=1$ and $\rho$ range in $\{10,~20,~\ldots,~100\}$.

\textbf{Experiment 2(b): changing $N$.} Let $\rho=1$, $p=1$ and $N$ range in $\{800,~1600,~2400,~\ldots,~6400\}$.

\textbf{Experiment 2(c): changing $K$.} Let $\rho=1$, $p=1$, and $K$ range in $[5]$.

\textbf{Experiment 2(d): changing $p$.} Let $N=800$, $\rho=1$, and $p$ range in $\{0.1,~0.2,~\ldots,~1\}$.

The results are displayed in Fig.~\ref{S2}. For the estimation of $\Pi$ and $\Theta$: All methods exhibit insensitivity to the scaling parameter $\rho$, consistent with our findings in Instance \ref{Uniform}. Except for SVM-cD and RMSP, all methods demonstrate enhanced performance as the number of subjects $N$ increases, corroborating our analysis based on Theorem \ref{mainWGoM}. As the sparsity parameter $p$ increases, all methods, except for SVM-cD, exhibit improved performance. Notably, SCGoMA, GeoNMF, MixedSCORE, and GoM-SRSC exhibit comparable performances and typically outperform RMSP and SVM-cD in error rates. As $K$ increases, all methods perform poorer. For the estimation of $K$, all methods, except for RMSP, determine $K$ with high accuracy when the true $K\geq2$, highlighting the utility of the fuzzy weighted modularity as a measure for evaluating the quality of mixed membership analysis for categorical data with weighted responses.

\begin{figure}
\centering
\resizebox{\columnwidth}{!}{
\subfigure[Experiment 2(a)]{\includegraphics[width=0.33\textwidth]{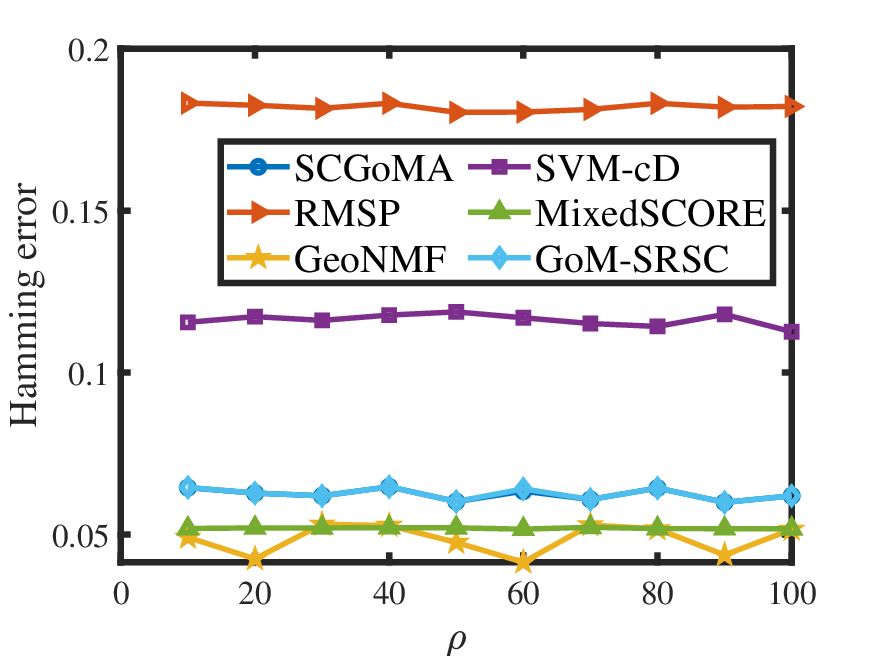}}
\subfigure[Experiment 2(a)]{\includegraphics[width=0.33\textwidth]{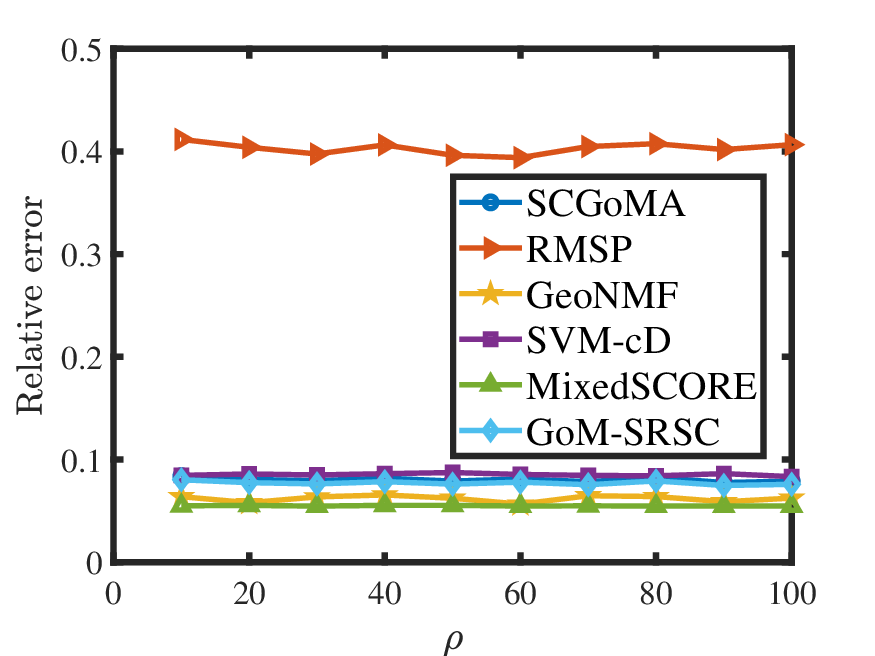}}
\subfigure[Experiment 2(a)]{\includegraphics[width=0.33\textwidth]{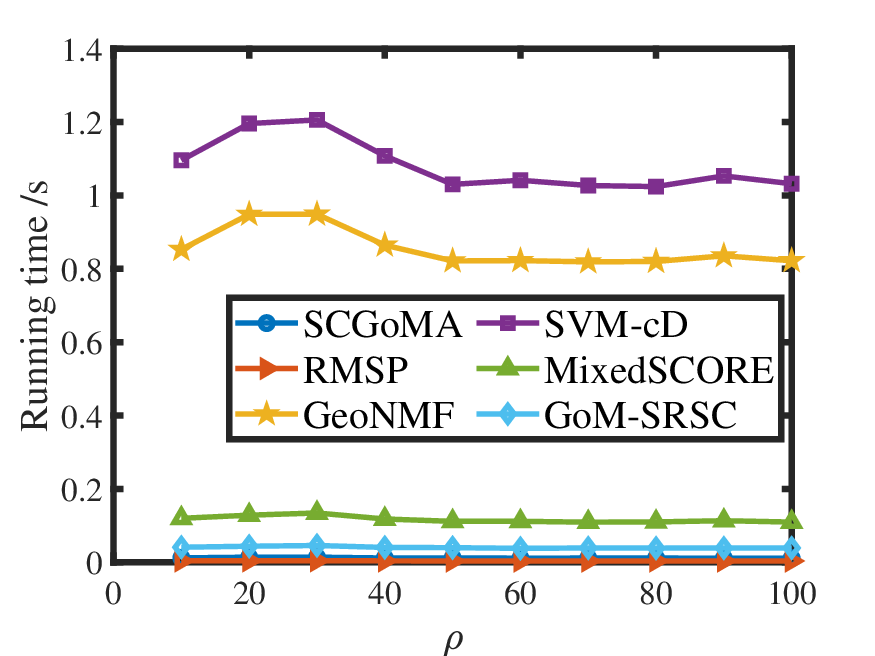}}
\subfigure[Experiment 2(a)]{\includegraphics[width=0.33\textwidth]{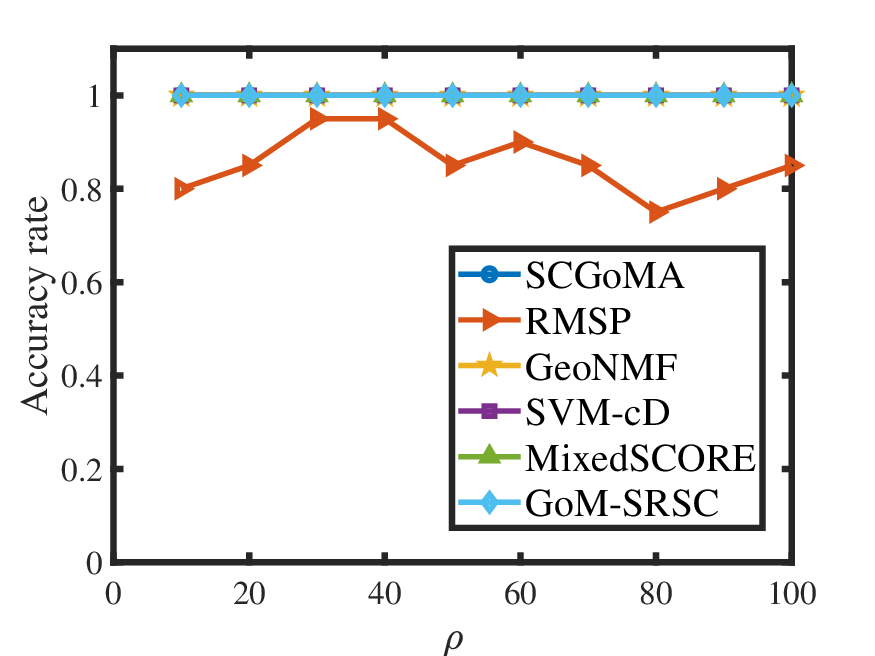}}
}
\resizebox{\columnwidth}{!}{
\subfigure[Experiment 2(b)]{\includegraphics[width=0.33\textwidth]{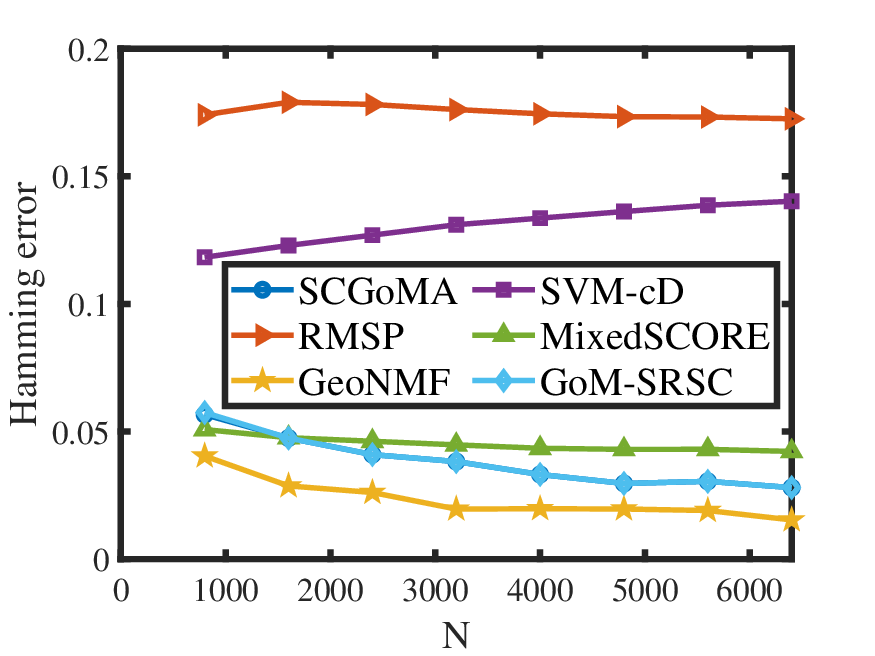}}
\subfigure[Experiment 2(b)]{\includegraphics[width=0.33\textwidth]{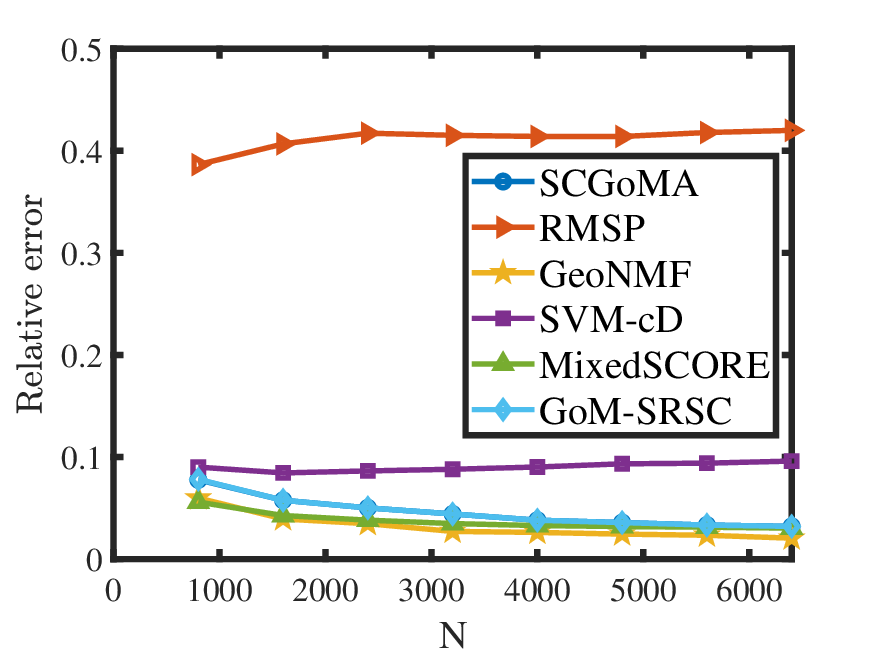}}
\subfigure[Experiment 2(b)]{\includegraphics[width=0.33\textwidth]{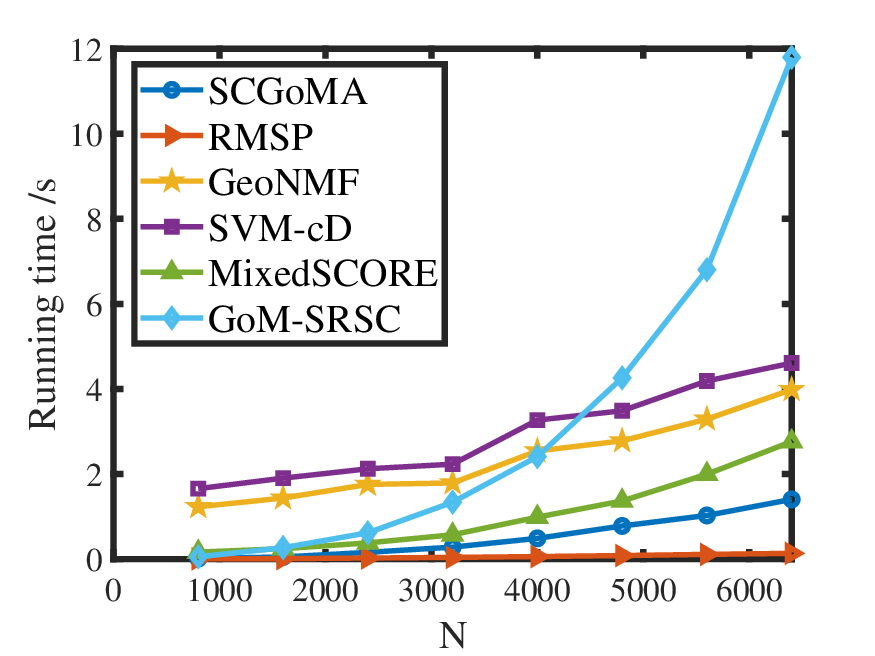}}
\subfigure[Experiment 2(b)]{\includegraphics[width=0.33\textwidth]{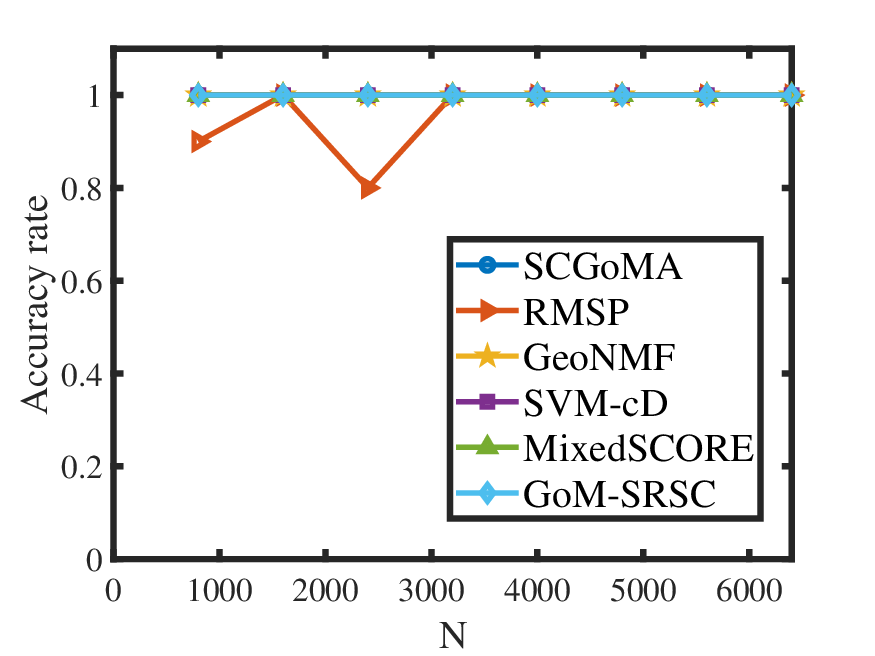}}
}
\resizebox{\columnwidth}{!}{
\subfigure[Experiment 2(c)]{\includegraphics[width=0.33\textwidth]{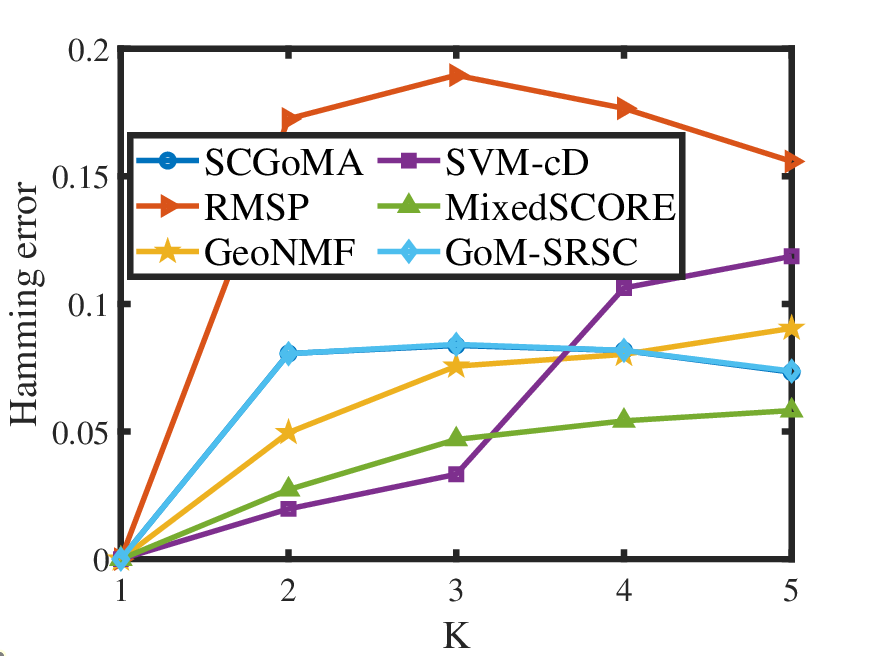}}
\subfigure[Experiment 2(c)]{\includegraphics[width=0.33\textwidth]{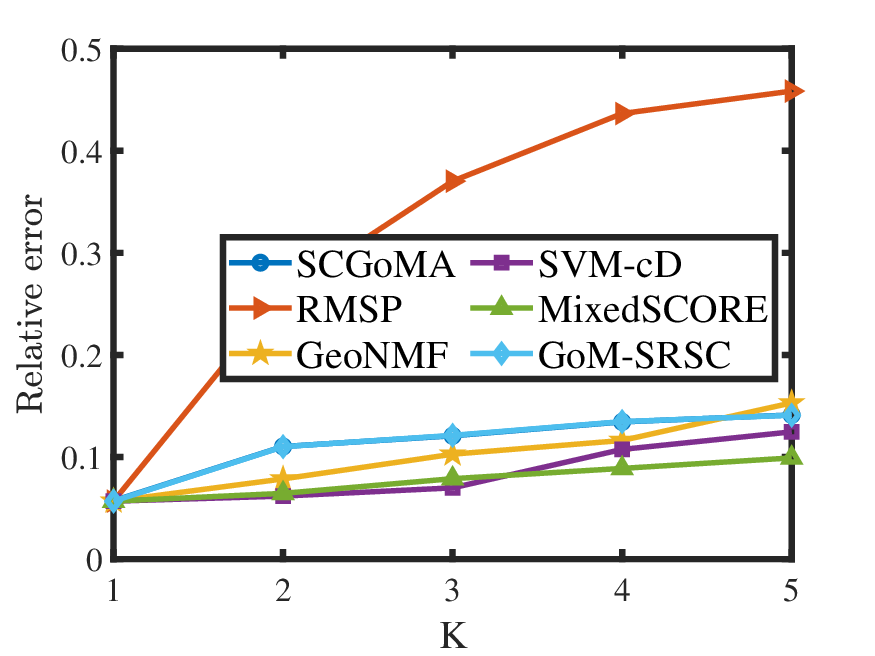}}
\subfigure[Experiment 2(c)]{\includegraphics[width=0.33\textwidth]{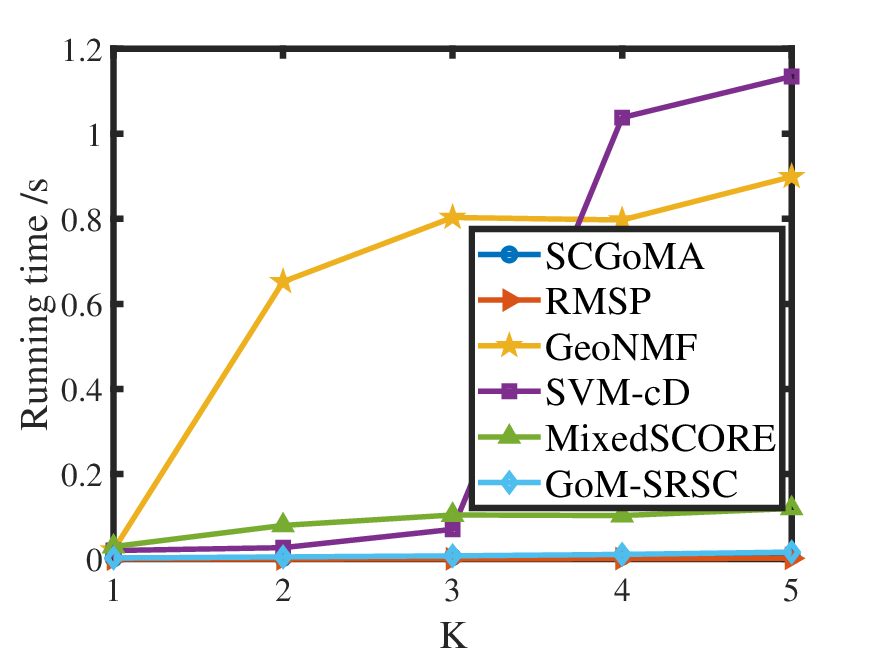}}
\subfigure[Experiment 2(c)]{\includegraphics[width=0.33\textwidth]{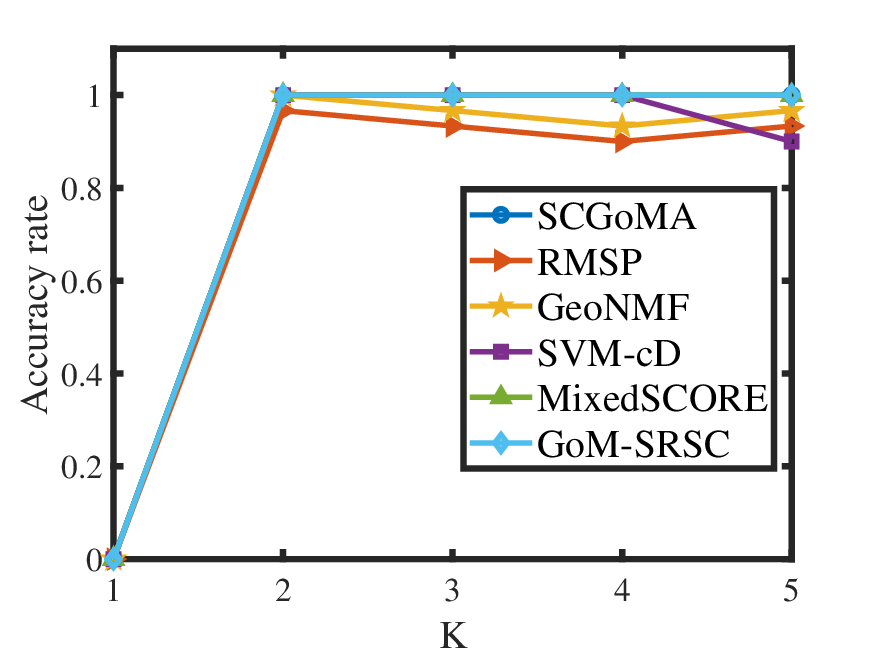}}
}
\resizebox{\columnwidth}{!}{
\subfigure[Experiment 2(d)]{\includegraphics[width=0.33\textwidth]{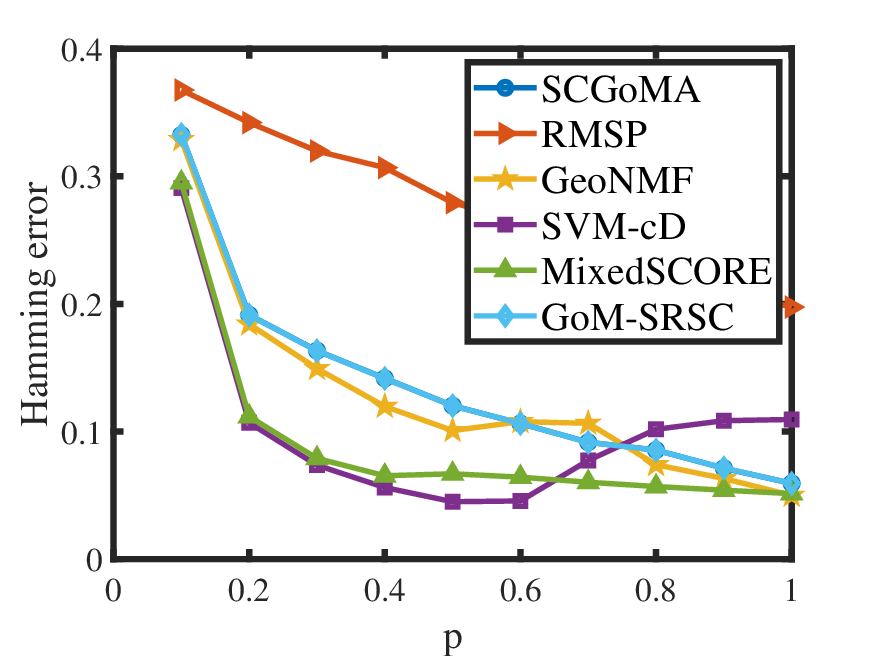}}
\subfigure[Experiment 2(d)]{\includegraphics[width=0.33\textwidth]{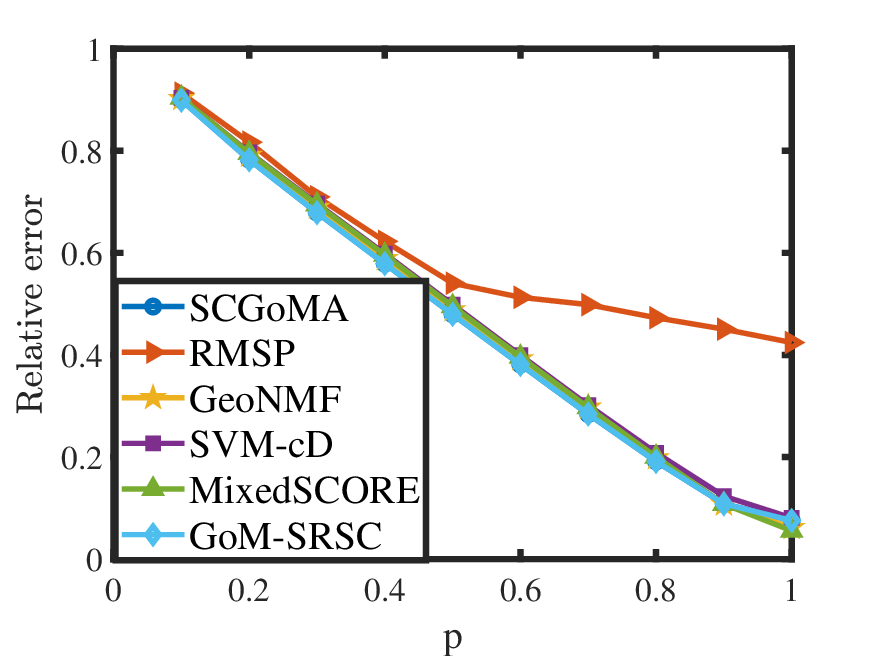}}
\subfigure[Experiment 2(d)]{\includegraphics[width=0.33\textwidth]{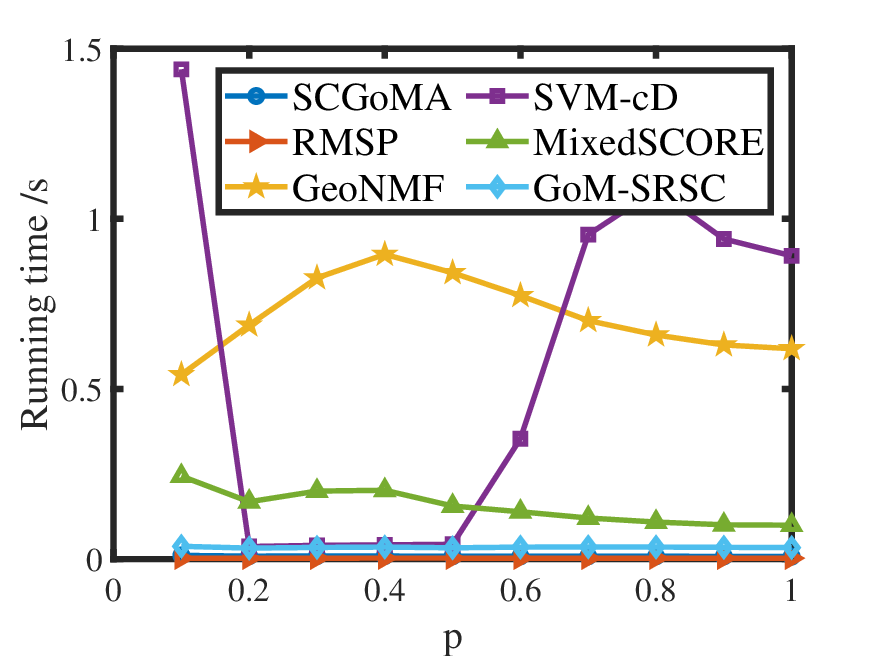}}
\subfigure[Experiment 2(d)]{\includegraphics[width=0.33\textwidth]{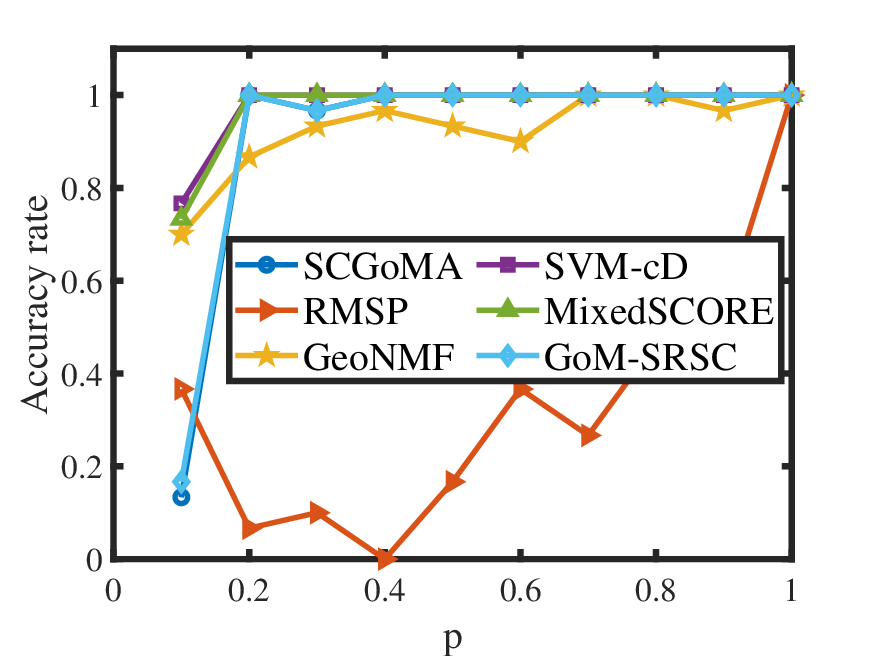}}
}
\caption{Uniform distribution.}
\label{S2} %% label for entire figure
\end{figure}
\subsubsection{Normal distribution}
When $R(i,j)\sim \mathrm{Normal}(R_{0}(i,j),\sigma^{2})$ for $i\in[N]$, $j\in[J]$:

\textbf{Experiment 3(a): changing $\rho$.} Let $N=800$, $\sigma^{2}=1$, $p=1$ and $\rho$ range in $\{0.5,~1,~1.5,~\ldots,~5\}$.

\textbf{Experiment 3(b): changing $N$.} Let $\rho=0.5$, $\sigma^{2}=1$, $p=1$, and $N$ range in $\{800,~1600,~2400,~\ldots,~6400\}$.

\textbf{Experiment 3(c): changing $K$.} Let $\rho=1$, $\sigma^{2}=1$, $p=1$, and $K$ range in $[5]$.

\textbf{Experiment 3(d): changing $p$.} Let $N=800$, $\sigma^{2}=1$, $\rho=1$, and $p$ range in $\{0.1,~0.2,~\ldots,~1\}$.

Fig.~\ref{S3} displays the numerical results, revealing that SCGoMA (similarly to RMSP, GeoNMF, MixedSCORE, and GoM-SRSC) exhibits improved performance as $\rho, N$, and $p$ increase. This supports our discussions in Instance \ref{Normal}, Theorem \ref{mainWGoM}, and Remark \ref{MissingResponses}. As $K$ grows, all methods perform poorer. SCGoMA, GeoNMF, MixedSCORE, and GoM-SRSC demonstrate competitive performance and outperform RMSP and SVM-cD. While RMSP requires less time compared to the other four approaches, it exhibits the poorest performance in estimating $\Pi$ and $\Theta$. Finally, when the true $K$ is no smaller than 2, the high accuracy of all methods except for RMSP in determining $K$ supports the usefulness of the fuzzy weighted modularity.
\begin{figure}
\centering
\resizebox{\columnwidth}{!}{
\subfigure[Experiment 3(a)]{\includegraphics[width=0.33\textwidth]{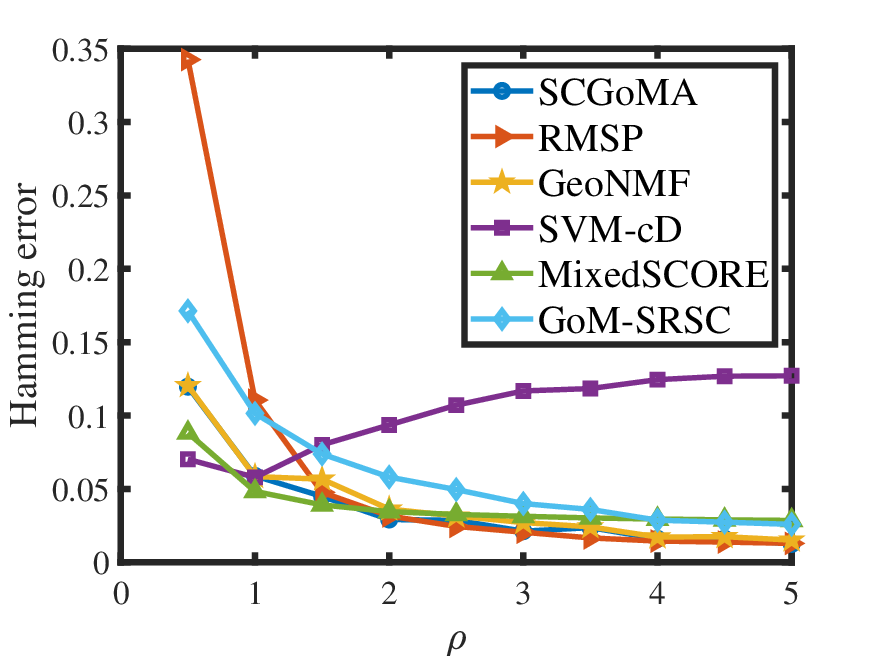}}
\subfigure[Experiment 3(a)]{\includegraphics[width=0.33\textwidth]{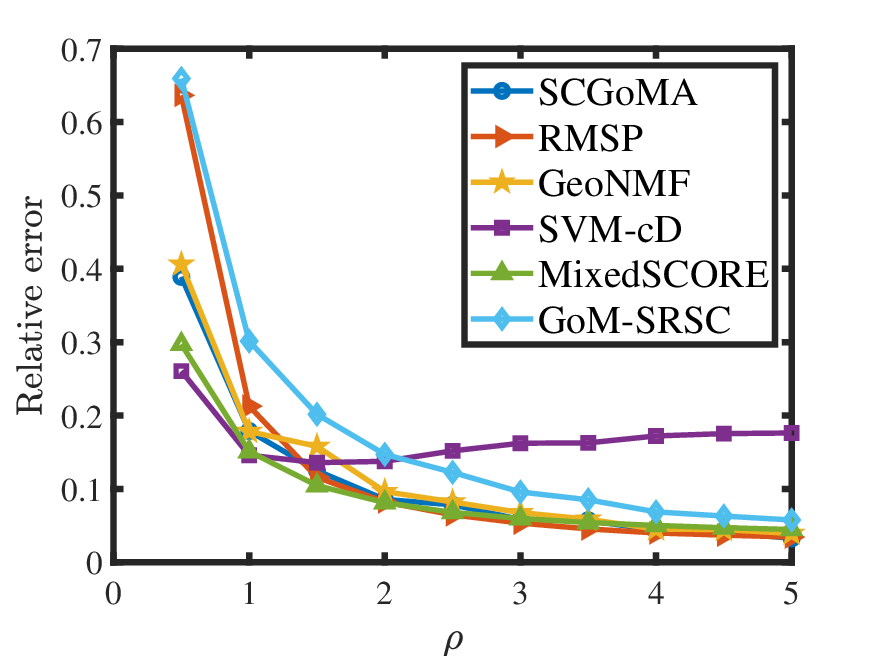}}
\subfigure[Experiment 3(a)]{\includegraphics[width=0.33\textwidth]{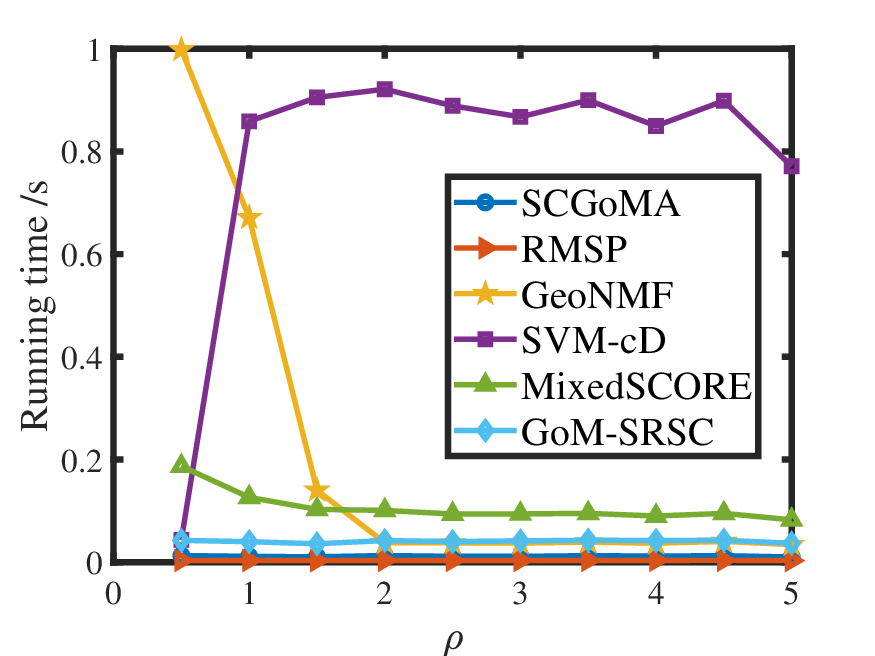}}
\subfigure[Experiment 3(a)]{\includegraphics[width=0.33\textwidth]{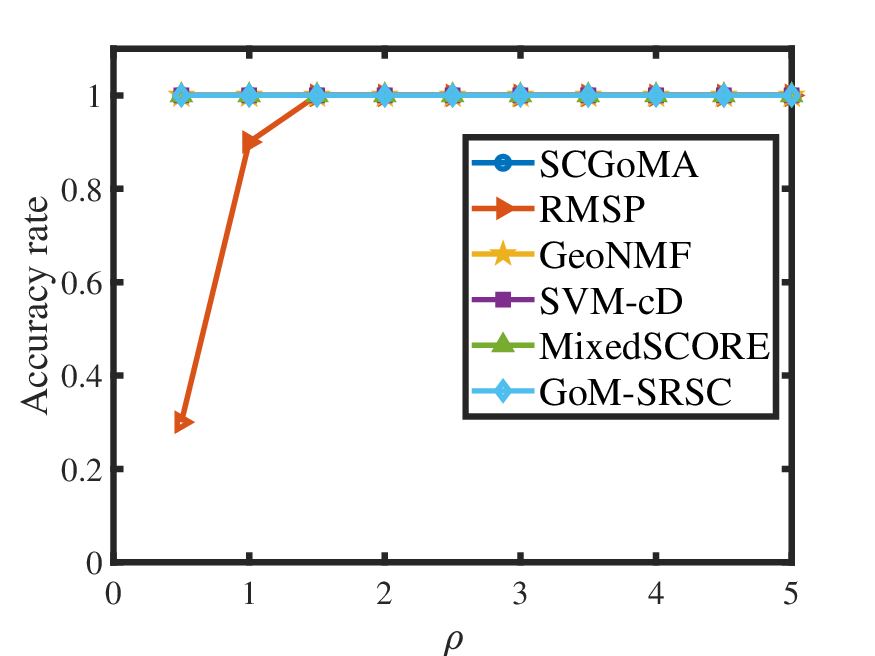}}
}
\resizebox{\columnwidth}{!}{
\subfigure[Experiment 3(b)]{\includegraphics[width=0.33\textwidth]{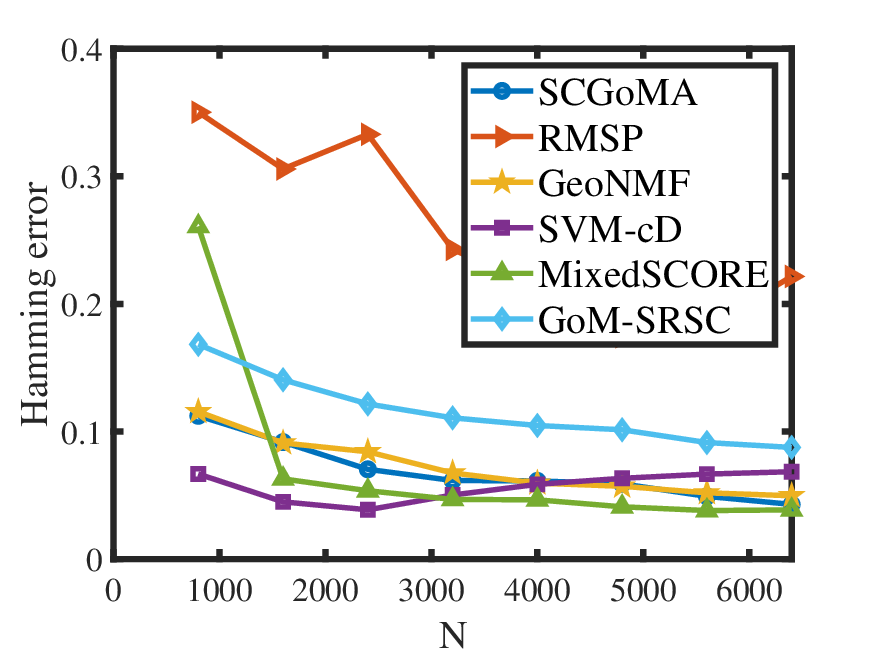}}
\subfigure[Experiment 3(b)]{\includegraphics[width=0.33\textwidth]{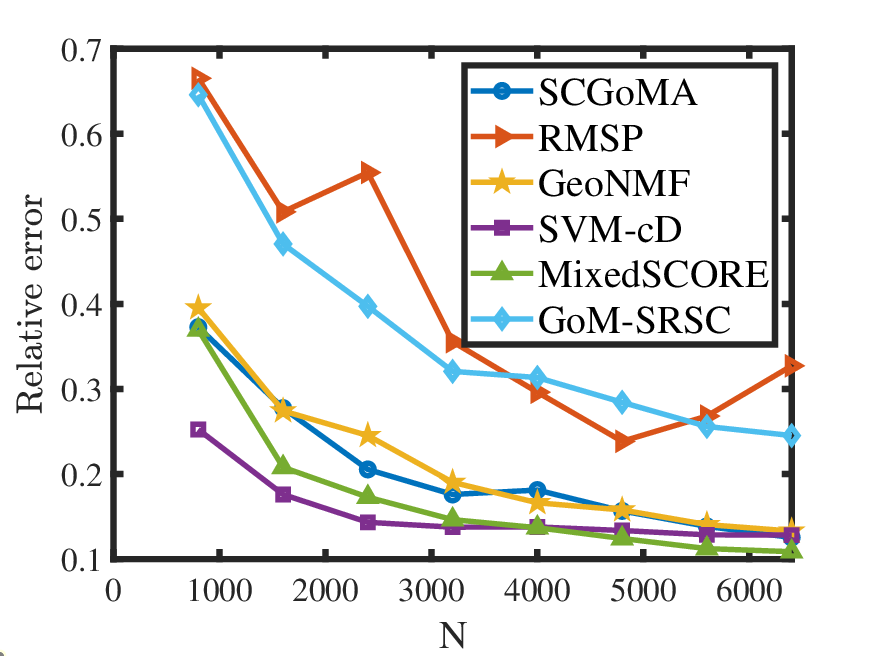}}
\subfigure[Experiment 3(b)]{\includegraphics[width=0.33\textwidth]{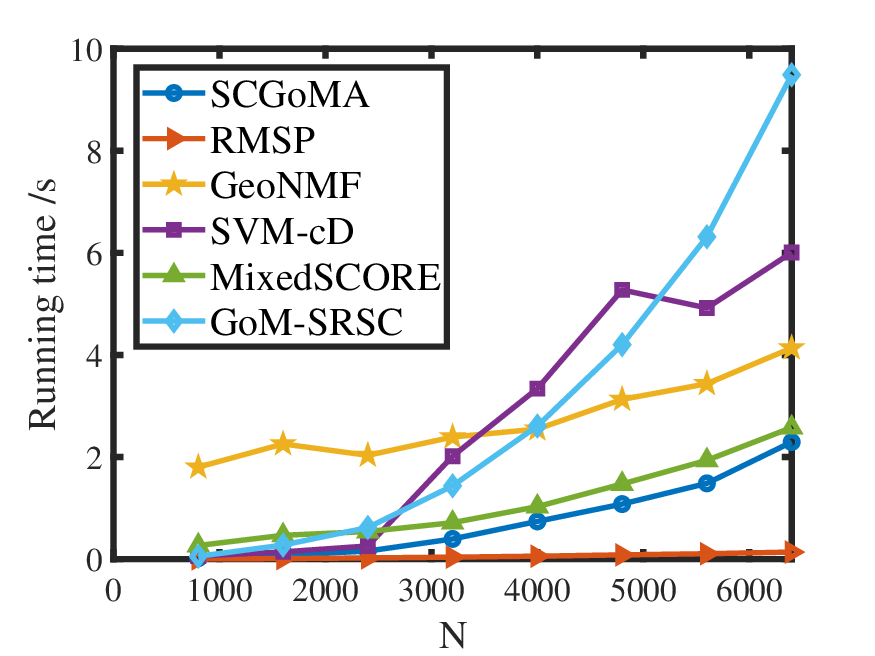}}
\subfigure[Experiment 3(b)]{\includegraphics[width=0.33\textwidth]{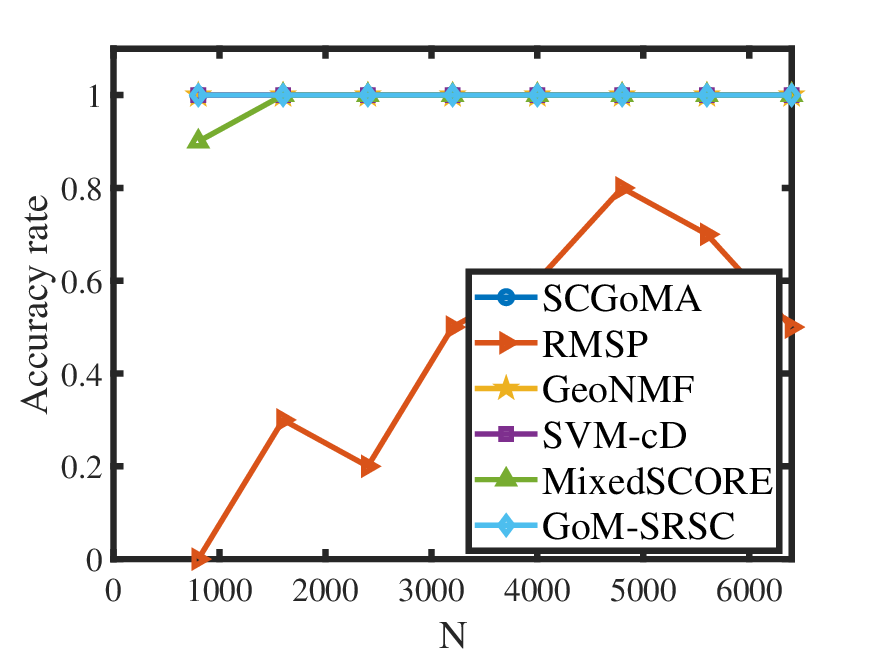}}
}
\resizebox{\columnwidth}{!}{
\subfigure[Experiment 3(c)]{\includegraphics[width=0.33\textwidth]{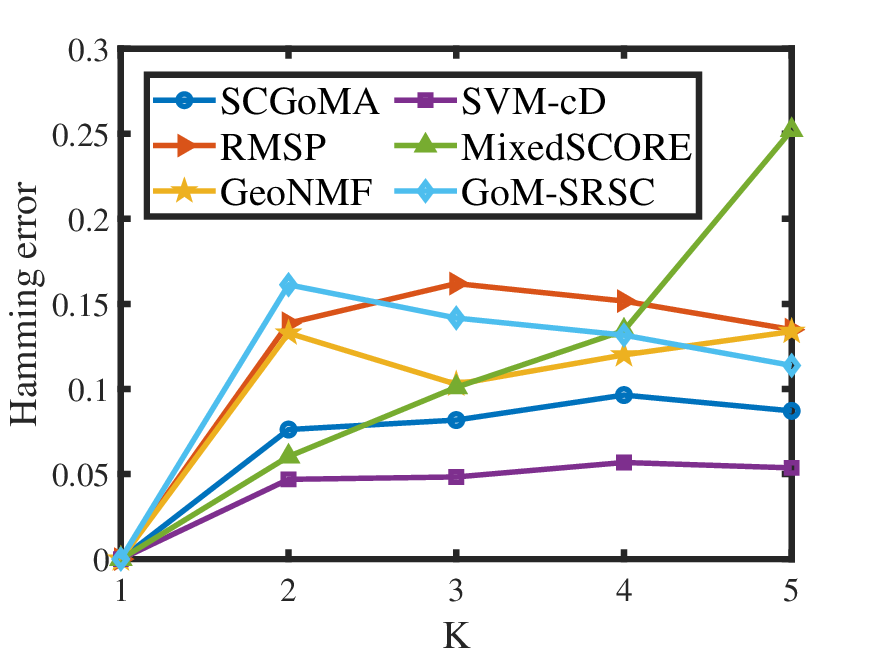}}
\subfigure[Experiment 3(c)]{\includegraphics[width=0.33\textwidth]{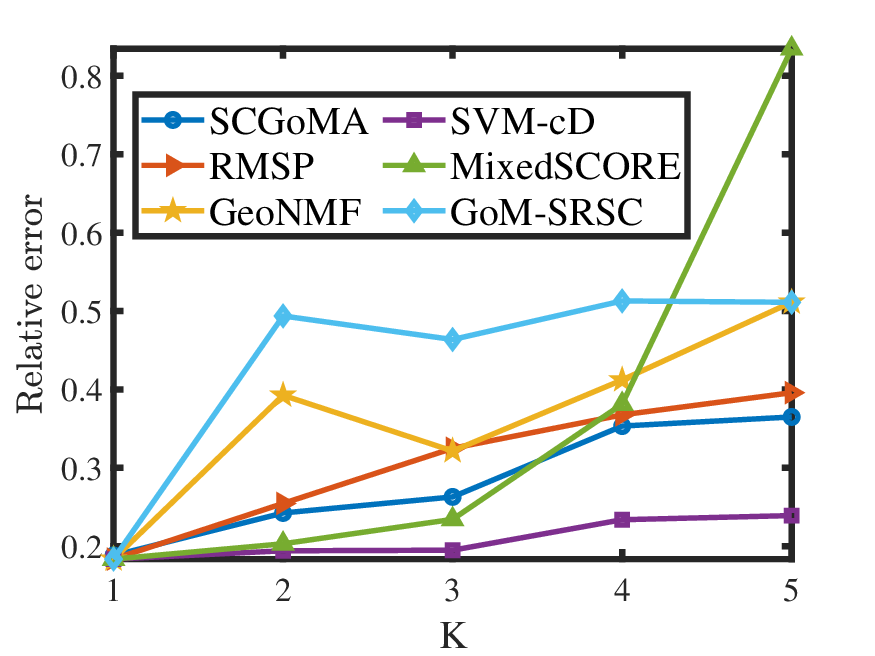}}
\subfigure[Experiment 3(c)]{\includegraphics[width=0.33\textwidth]{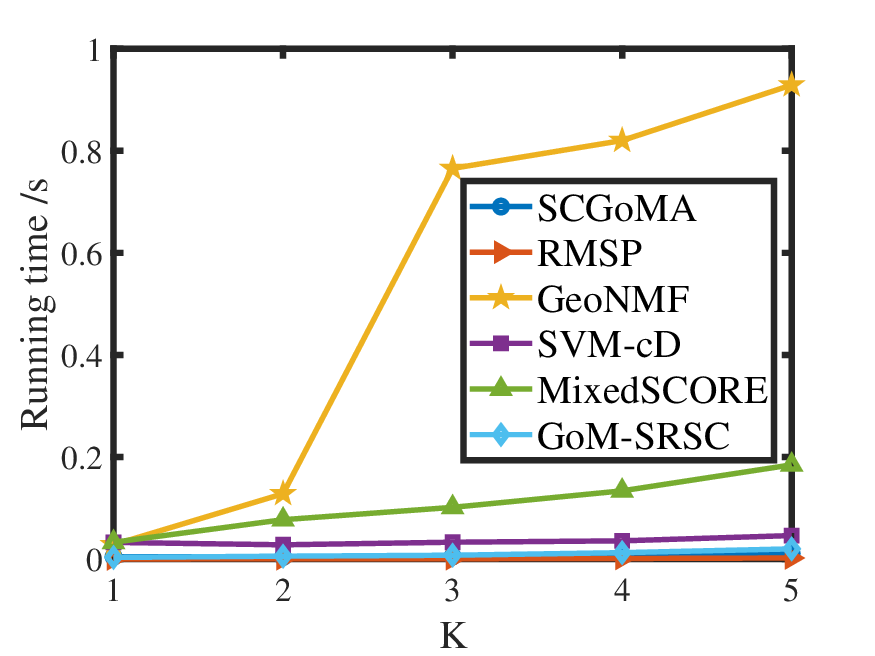}}
\subfigure[Experiment 3(c)]{\includegraphics[width=0.33\textwidth]{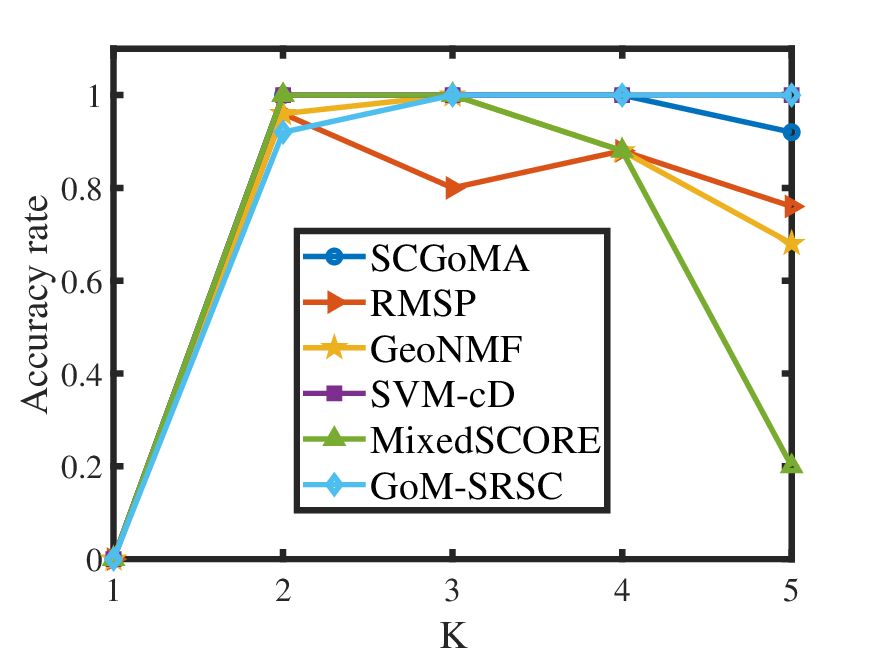}}
}
\resizebox{\columnwidth}{!}{
\subfigure[Experiment 3(d)]{\includegraphics[width=0.33\textwidth]{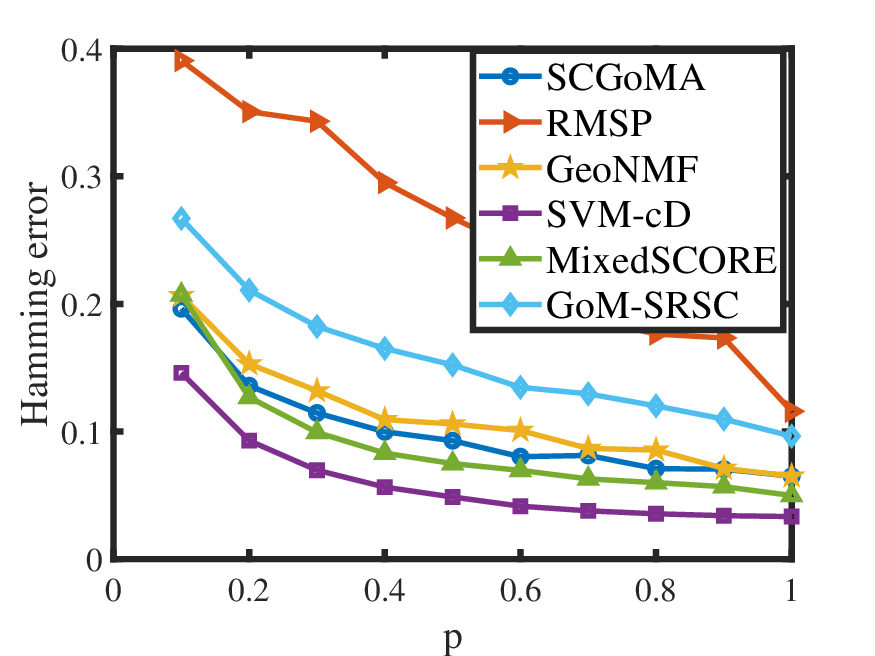}}
\subfigure[Experiment 3(d)]{\includegraphics[width=0.33\textwidth]{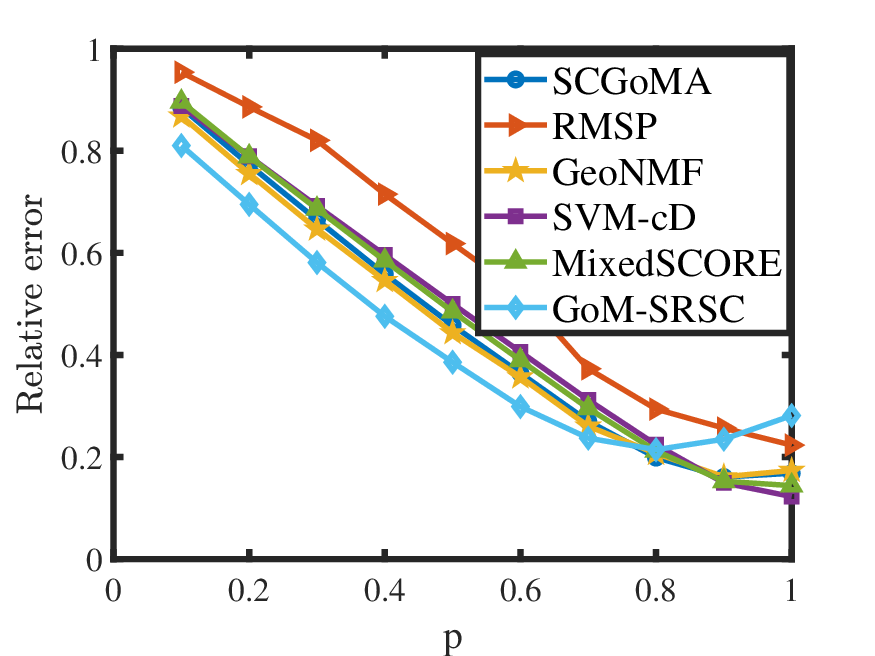}}
\subfigure[Experiment 3(d)]{\includegraphics[width=0.33\textwidth]{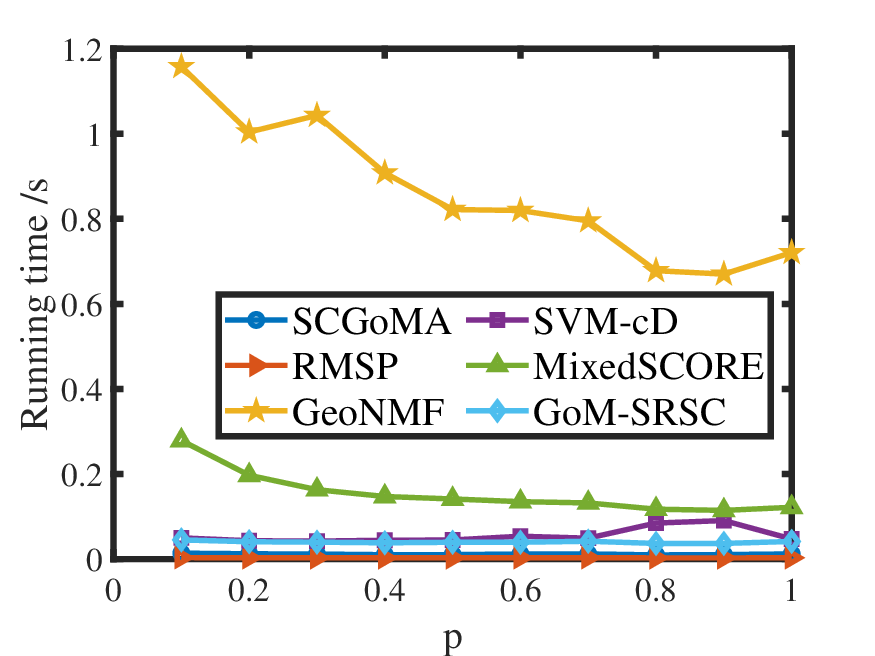}}
\subfigure[Experiment 3(d)]{\includegraphics[width=0.33\textwidth]{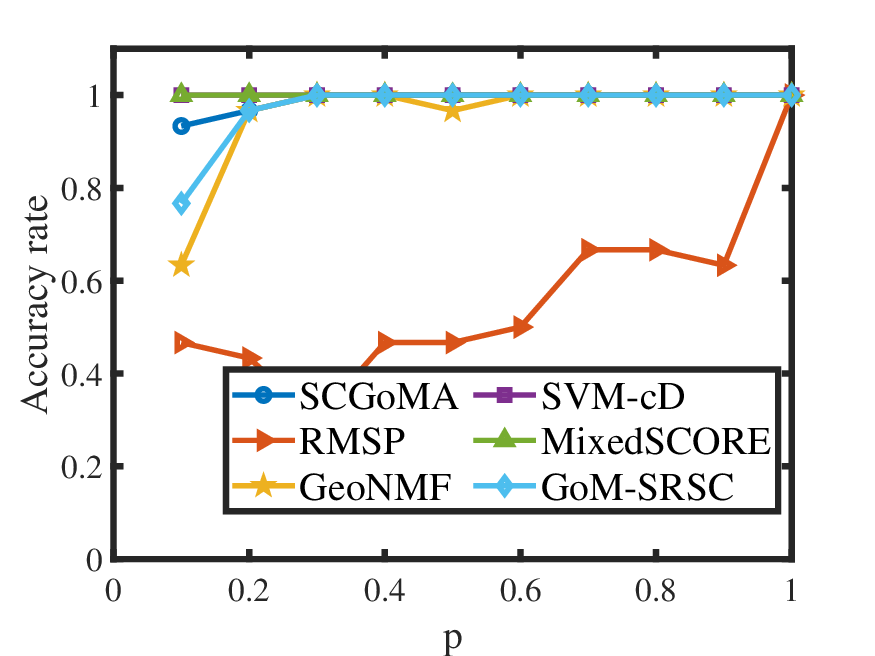}}
}
\caption{Normal distribution.}
\label{S3} %% label for entire figure
\end{figure}
\subsubsection{Signed responses}
When $\mathbb{P}(R(i,j)=1)=\frac{1+R_{0}(i,j)}{2}$ and $\mathbb{P}(R(i,j)=-1)=\frac{1-R_{0}(i,j)}{2}$ for $i\in[N]$, $j\in[J]$:

\textbf{Experiment 4(a): changing $\rho$.} Let $N=800$, $p=1$, and $\rho$ range in $\{0.5,~0.55,~0.6,~\ldots,~1\}$.

\textbf{Experiment 4(b): changing $N$.} Let $\rho=0.6$, $p=1$, and $N$ range in $\{800,~1600,~2400,~\ldots,6400\}$.

\textbf{Experiment 4(c): changing $K$.} Let $\rho=1$, $p=1$, and $K$ range in $[5]$.

\textbf{Experiment 4(d): changing $p$.} Let $N=800$, $\rho=1$, and $p$ range in $\{0.1,~0.2,~\ldots,~1\}$.

Fig.~\ref{S4} presents the results, revealing that as $\rho, N$, and $p$ increase, the accuracies of SCGoMA, RMSP, GeoNMF, MixedSCORE, and GoM-SRSC improve. This aligns with previous our analysis. The performance trends observed are comparable to those in Experiment 3, and we refrain from duplicating the analysis here.
\begin{figure}
\centering
\resizebox{\columnwidth}{!}{
\subfigure[Experiment 4(a)]{\includegraphics[width=0.33\textwidth]{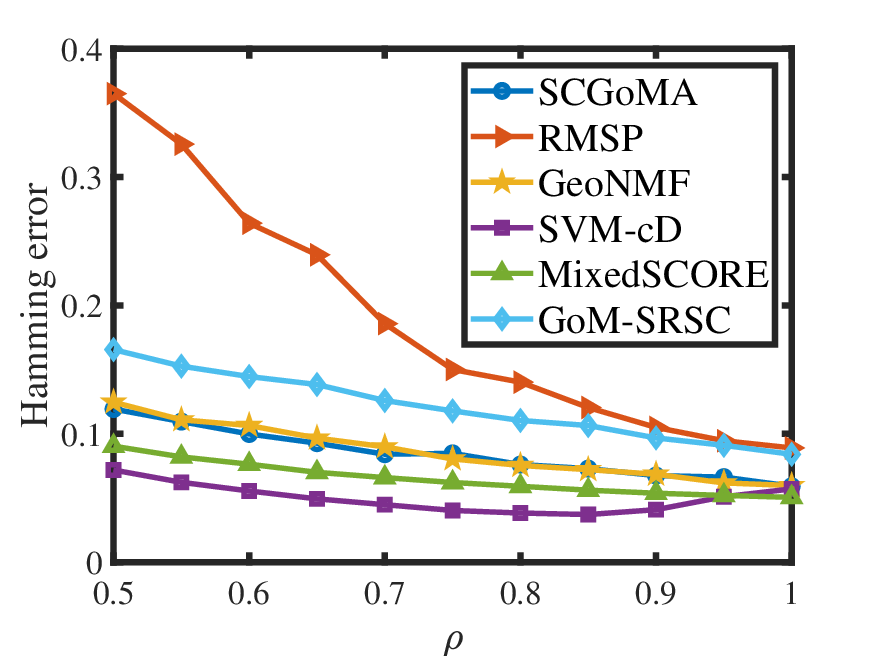}}
\subfigure[Experiment 4(a)]{\includegraphics[width=0.33\textwidth]{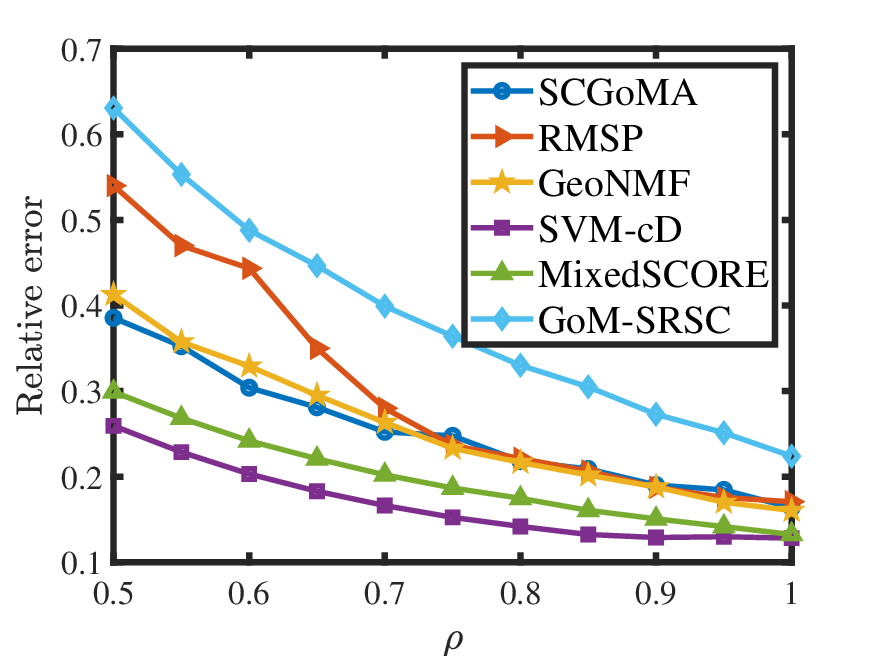}}
\subfigure[Experiment 4(a)]{\includegraphics[width=0.33\textwidth]{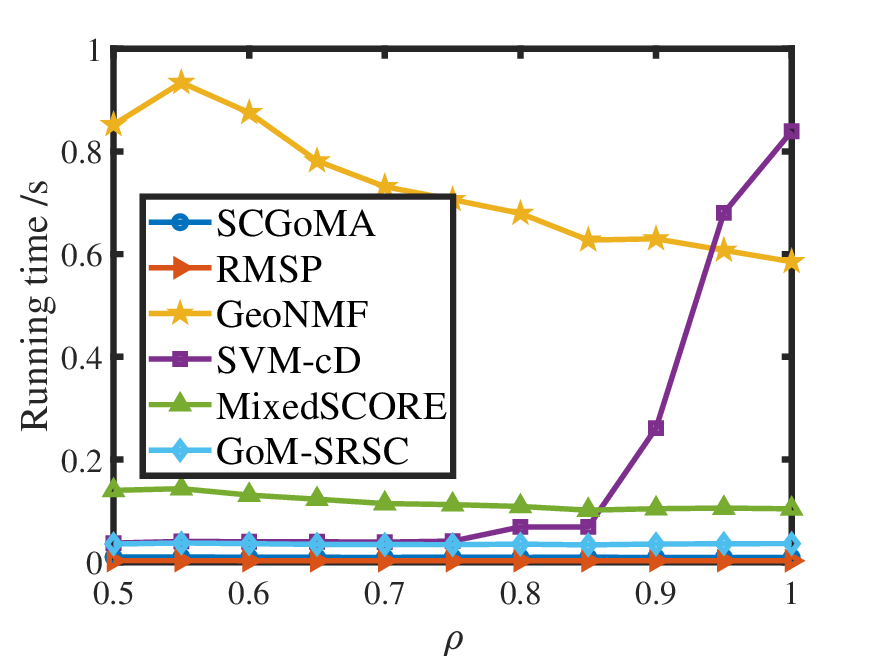}}
\subfigure[Experiment 4(a)]{\includegraphics[width=0.33\textwidth]{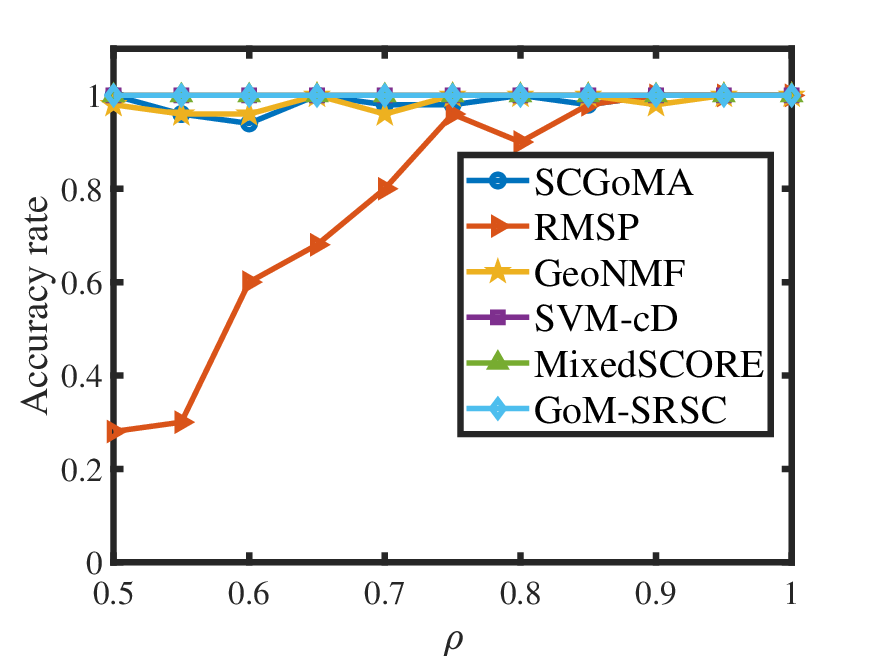}}
}
\resizebox{\columnwidth}{!}{
\subfigure[Experiment 4(b)]{\includegraphics[width=0.33\textwidth]{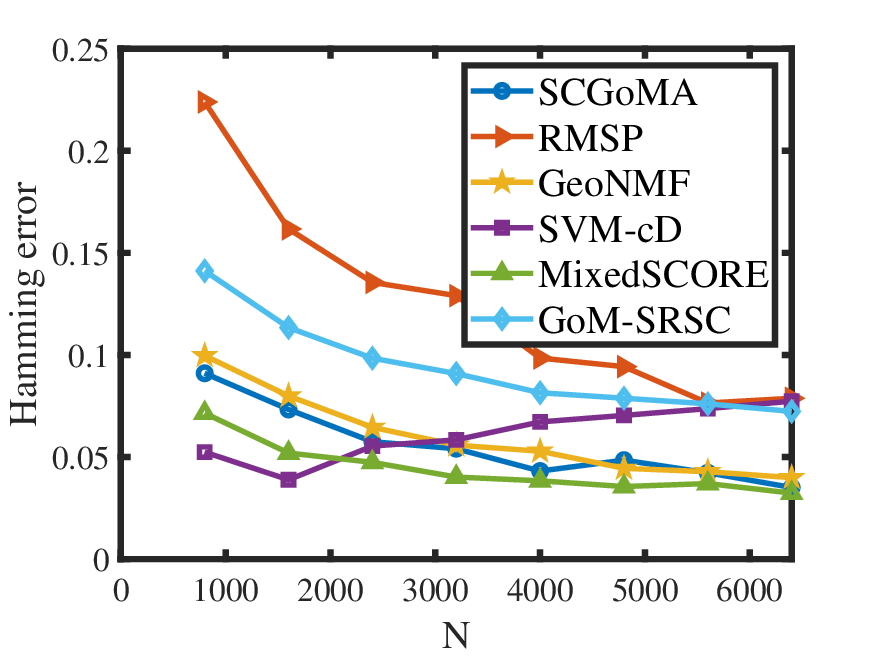}}
\subfigure[Experiment 4(b)]{\includegraphics[width=0.33\textwidth]{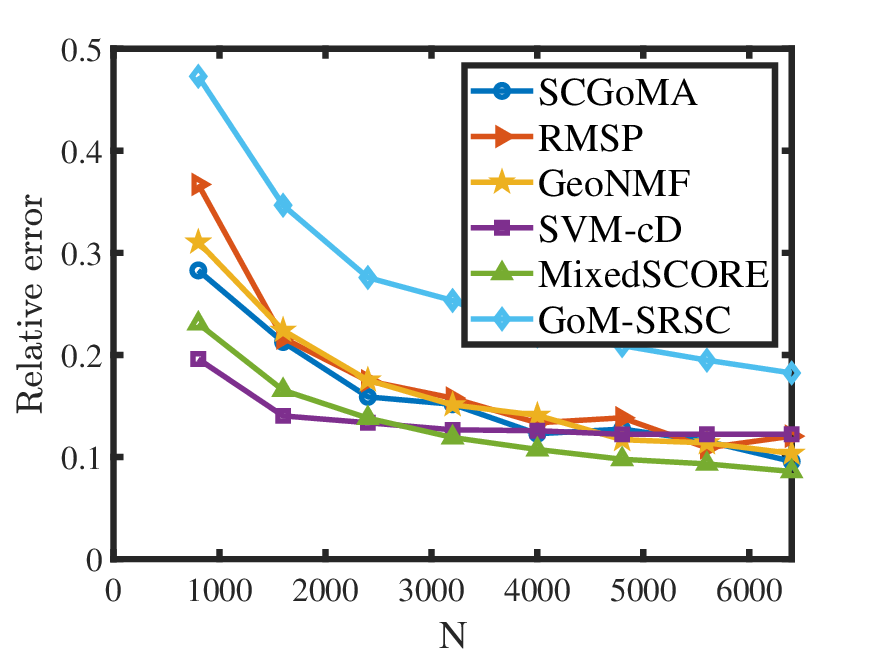}}
\subfigure[Experiment 4(b)]{\includegraphics[width=0.33\textwidth]{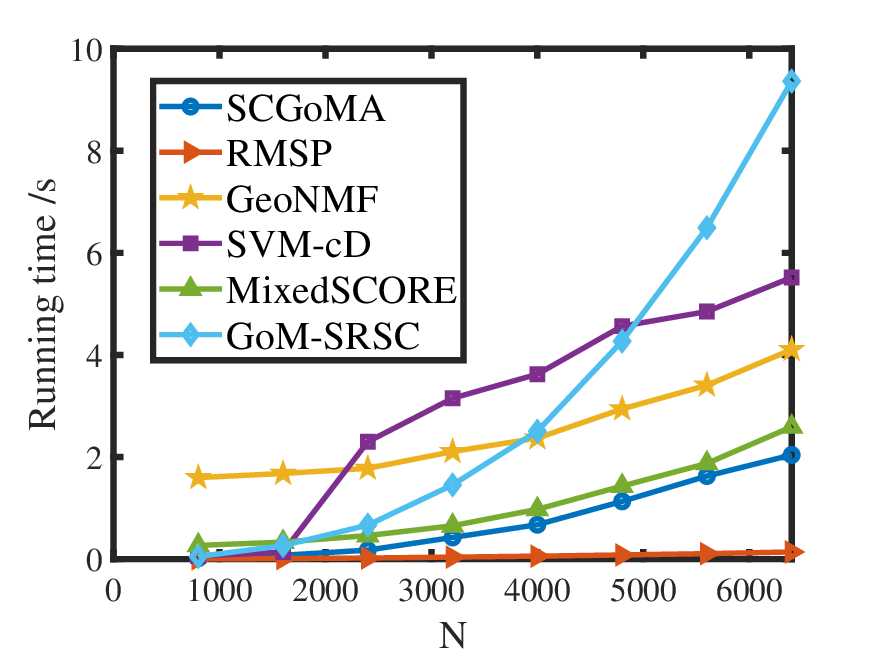}}
\subfigure[Experiment 4(b)]{\includegraphics[width=0.33\textwidth]{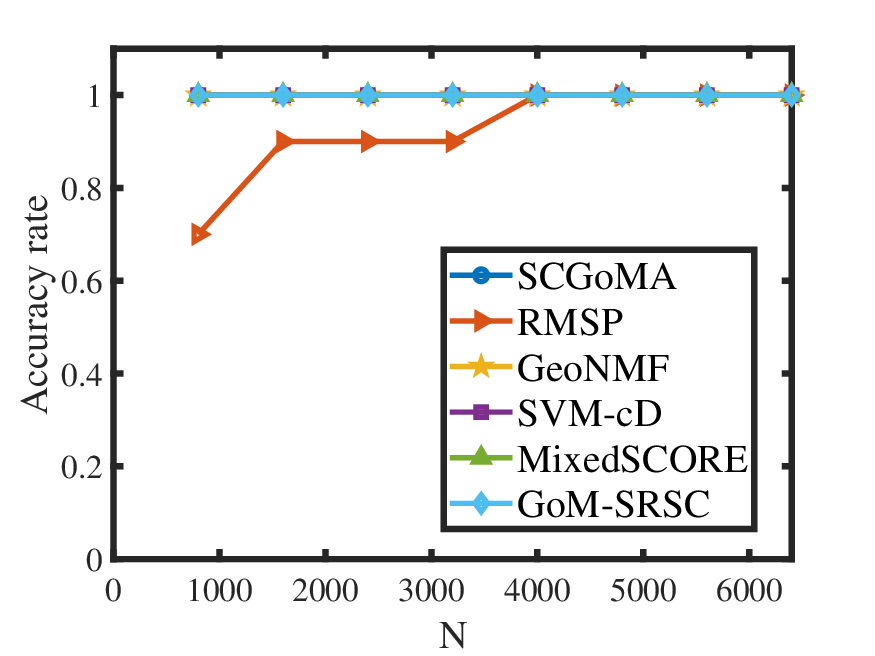}}
}
\resizebox{\columnwidth}{!}{
\subfigure[Experiment 4(c)]{\includegraphics[width=0.33\textwidth]{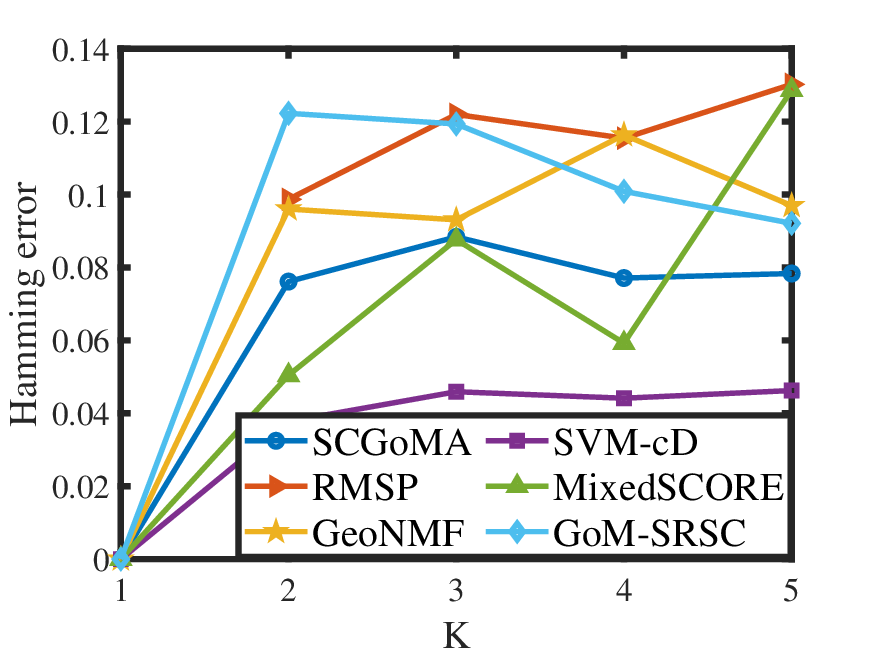}}
\subfigure[Experiment 4(c)]{\includegraphics[width=0.33\textwidth]{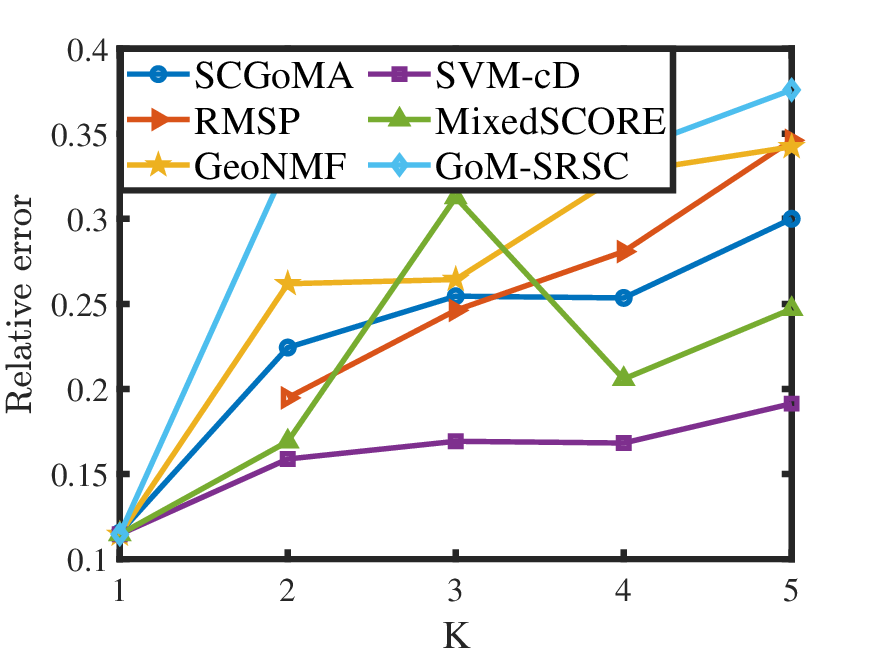}}
\subfigure[Experiment 4(c)]{\includegraphics[width=0.33\textwidth]{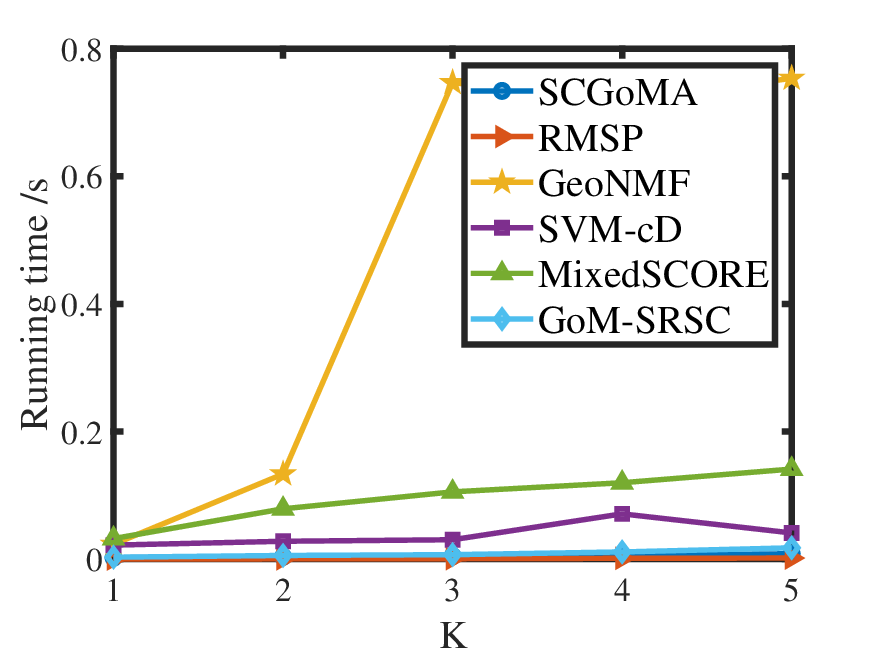}}
\subfigure[Experiment 4(c)]{\includegraphics[width=0.33\textwidth]{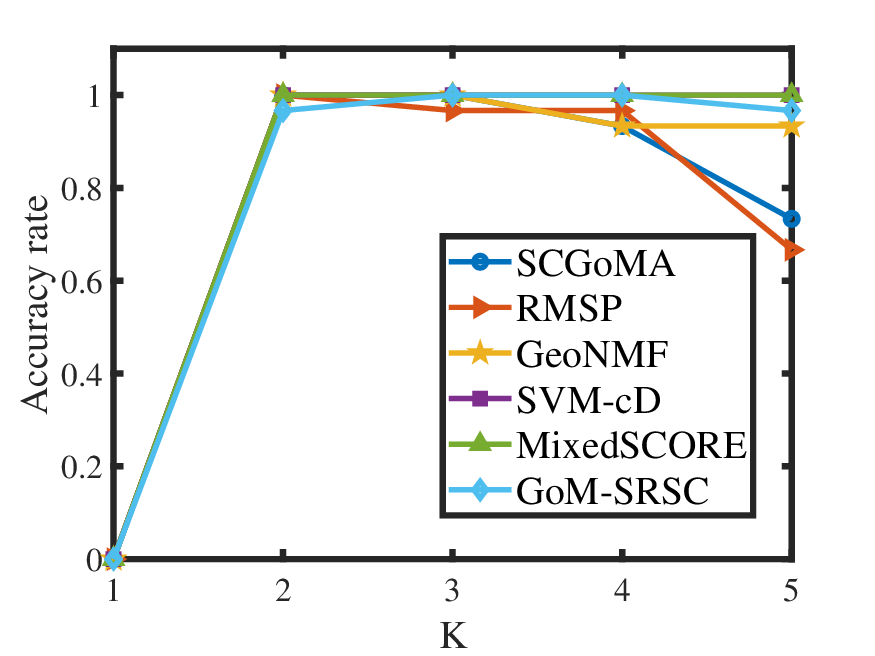}}
}
\resizebox{\columnwidth}{!}{
\subfigure[Experiment 4(d)]{\includegraphics[width=0.33\textwidth]{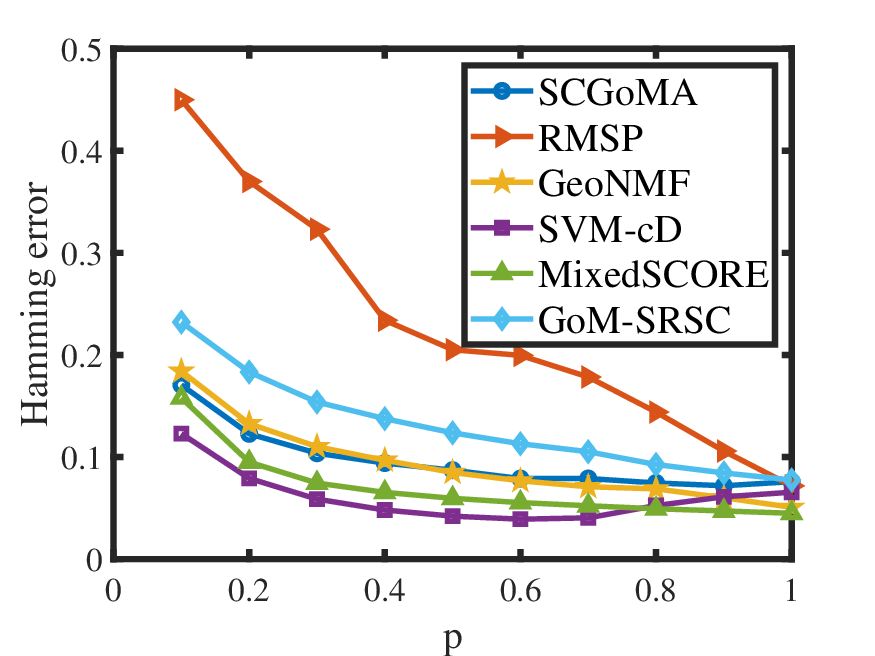}}
\subfigure[Experiment 4(d)]{\includegraphics[width=0.33\textwidth]{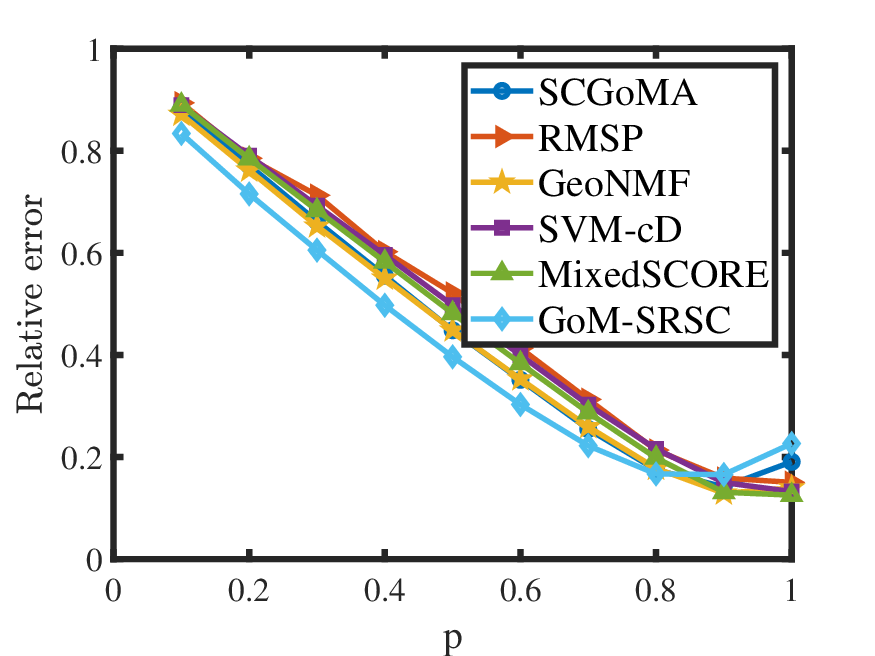}}
\subfigure[Experiment 4(d)]{\includegraphics[width=0.33\textwidth]{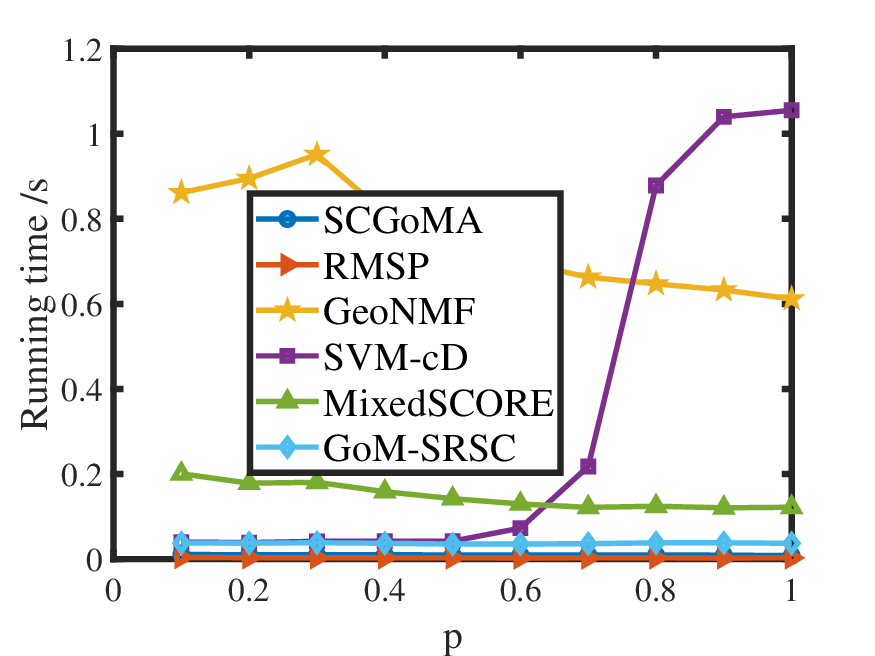}}
\subfigure[Experiment 4(d)]{\includegraphics[width=0.33\textwidth]{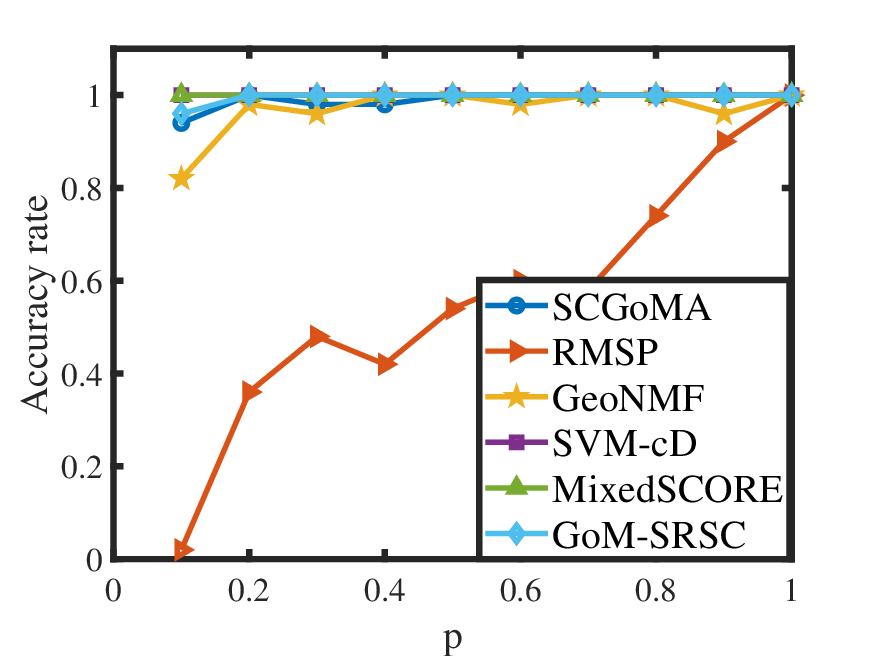}}
}
\caption{Signed responses.}
\label{S4} %% label for entire figure
\end{figure}
\section{Real data analysis}\label{sec7}
In this section, we apply the aforementioned methods to several real-world categorical datasets. Define $\xi=\frac{\mathrm{Number~of~no~responses}}{NJ}$ to describe data's sparsity. After removing subjects that have no response to all items and items that do not receive any response, Table \ref{realdata} summarizes the basic information of these datasets. The results show that Advogato, Wikipedia elections, Filmtrust, and Bitcoin Alpha are much sparser than TMAS. Meanwhile, the first four datasets can be downloaded from \url{http://konect.cc/networks/} while TMAS can be downloaded from \url{https://openpsychometrics.org/_rawdata/}.

\begin{table}[h!]
\tiny
	\centering
	\caption{Basic information of real-world datasets analyzed in this paper, where TMAS is short for Taylor Manifest Anxiety Scale data. Here, $i\in[N]$ and $j\in[J]$.}
	\label{realdata}
\setlength{\tabcolsep}{2pt} % 调整列间距
	\begin{tabular}{cccccccccccc}
\hline
Datasets&Subject meaning&Items meaning&Responses meaning&$N$&$J$&$R(i,j)$'s range&$\xi$\\
\hline
Advogato \citep{massa2009bowling}&Users&Users&Trust&5775&6540&$\{0,~0.6,~0.8,~1\}$&0.9986\\
Wikipedia elections \citep{leskovec2010governance}&Users&Users&Vote&5842&2355&$\{-1,~0,~1\}$&0.9926\\
Filmtrust \citep{guo2016novel}&Users&Films&Rating&1508&2071&$\{0,~0.5,~\ldots,~4.5\}$&0.9886\\
Bitcoin Alpha \citep{kumar2016edge}&Users&Users&Trust&3270&3727&$\{-10,~-9,~\ldots,~10\}$&0.9980\\
TMAS \citep{taylor1985taylor}&Individuals&Statements&Answers&5401&50&$\{0,~1,~2\}$&0.0080\\
\hline
\end{tabular}
\end{table}

Given the unknown true mixed memberships and number of latent classes in these datasets, we apply SCGoMA, GeoNMF, SVM-cD, MixedSCORE, and GoM-SRSC to the observed response matrices to compute the fuzzy weighted modularity in Equation (\ref{Qfwm}). RMSP is not considered here as it fails to output for these datasets. The results are presented in Table \ref{realdataQfwm}. Our observations are as follows:

\begin{itemize}
  \item For the Advogato dataset, GeoNMF achieves the highest modularity score, while the other four methods yield comparable modularity scores. Based on these results, we tend to believe that the estimated number of latent classes should be $3$, as determined by GeoNMF.
  \item For the Wikipedia elections dataset, SCGoMA outperforms the other four methods, with SVM-cD and MixedSCORE demonstrating similar performance. GeoNMF and GoM-SRSC return lower modularity scores than SCGoMA, SVM-cD, and MixedSCORE. For this dataset, we suggest that the estimated number of latent classes should be $K=2$, as estimated by SCGoMA.
  \item For the Filmtrust dataset, SVM-cD exhibits the best performance, with the other four methods showing comparable results. Therefore, we propose that $K$ should be $2$, as estimated by SVM-cD.
  \item For the Bitcoin Alpha dataset, SVM-cD achieves the highest modularity score, with SCGoMA yielding a comparable score. Both methods outperform GeoNMF, MixedSCORE, and GoM-SRSC. For this dataset, we recommend $K=3$, as estimated by SVM-cD.
  \item For the TMAS dataset, all methods estimate the number of latent classes as $2$ and demonstrate comparable performance.
\end{itemize}

\begin{table}[h!]
\footnotesize
	\centering
	\caption{$(\hat{K}_{\mathcal{M}}, Q_{\mathcal{M}}(\hat{K}_{\mathcal{M}}))$ for real data analyzed in this paper, with the largest $Q_{\mathcal{M}}(\hat{K}_{\mathcal{M}})$ in bold.}
	\label{realdataQfwm}
	\begin{tabular}{cccccccccccc}
\hline
Dataset&SCGoMA&GeoNMF&SVM-cD&MixedSCORE&GoM-SRSC\\
\hline
Advogato&(5,~0.1030)&(3,~\textbf{0.1079})&(3,~0.1047)&(4,~0.1031)&(5,~0.1033)\\
Wikipedia elections&(2,~\textbf{0.1934})&(8,~0.0456)&(3,~0.1714)&(3,~0.1629)&(4,~0.0184)\\
Filmtrust&(3,~0.0205)&(2,~0.0290)&(2,~\textbf{0.0415})&(2,~0.0395)&(5,~0.0199)\\
Bitcoin Alpha&(4,~0.1564)&(9,~0.1006)&(3,~\textbf{0.1738})&(3,~0.1193)&(11,~0.0306)\\
TMAS&(2,~0.0013)&(2,~0.0032)&(2,~\textbf{0.0041})&(2,~0.0039)&(2,~0.0013)\\
\hline
\end{tabular}
\end{table}

For each dataset, we compare the running time of each method using the estimated $K$ from the above analysis. The results are presented in Table \ref{realdataTime}. We see that SCGoMA consistently outperforms the other methods in terms of running time, achieving the smallest average running time for all datasets. GeoNMF, SVM-cD, MixedSCORE, and GoM-SRSC exhibit varying performance, with GeoNMF and GoM-SRSC generally being slower.

\begin{table}[h!]
\footnotesize
	\centering
	\caption{The running time (in seconds) of each method for the real datasets analyzed in this paper, with the smallest running time in bold. The number of latent classes for each dataset corresponds to the one determined after Table \ref{realdataQfwm}. The running times reported are the average of 20 independent repetitions for each method.}
	\label{realdataTime}
	\begin{tabular}{cccccccccccc}
\hline
Dataset&SCGoMA&GeoNMF&SVM-cD&MixedSCORE&GoM-SRSC\\
\hline
Advogato&\textbf{1.9861s}&3.3160s&7.6530s&4.9007s&9.4734s\\
Wikipedia elections&\textbf{0.6329s}&1.9236s&1.3812s&2.1633s& 5.8805s\\
Filmtrust&\textbf{0.0823s}&0.6838s&0.7412s&0.2109s&0.2187s\\
Bitcoin Alpha&\textbf{0.5410s}&1.4054s&0.5434s&1.3323s&1.5160s\\
TMAS&\textbf{0.0225s}&4.7885s&0.6968s&1.7605s&4.0447s\\
\hline
\end{tabular}
\end{table}

Given the results presented in Section \ref{sec6} and Table \ref{realdataQfwm}, which demonstrate that SCGoMA's performance is comparable to that of GeoNMF, SVM-cD, MixedSCORE, and GoM-SRSC, our subsequent analysis focuses exclusively on the performance of our SCGoMA approach. We apply it to the datasets listed in Table \ref{realdata}, using the value of $K$ determined for each dataset based on Table \ref{realdataQfwm}. By running SCGoMA on each dataset with the specified $K$, we obtain matrices $\hat{\Pi} \in [0,1]^{N \times K}$ and $\hat{\Theta} \in \mathbb{R}^{J \times K}$, which enable further analysis.

Our primary interest lies in the mixed memberships of subjects. We classify a subject $i$ as highly mixed if $\mathrm{max}_{k \in [K]} \hat{\Pi}(i,k) \leq 0.6$ and as highly pure if $\mathrm{max}_{k \in [K]} \hat{\Pi}(i,k) \geq 0.9$. We define $\omega_{\mathrm{mixed}}$ as the proportion of highly mixed subjects and $\omega_{\mathrm{pure}}$ as the proportion of highly pure subjects, calculated as $\omega_{\mathrm{mixed}} = \frac{\text{Number of highly mixed subjects}}{N}$ and $\omega_{\mathrm{pure}} = \frac{\text{Number of highly pure subjects}}{N}$, respectively. Additionally, we introduce the index $\eta = \frac{\mathrm{min}_{k \in [K]} \sum_{i \in [N]} \hat{\Pi}(i,k)}{\mathrm{max}_{k \in [K]} \sum_{i \in [N]} \hat{\Pi}(i,k)}$, which ranges in $(0,1]$. This index serves as a measure of the balance in the distribution of subjects across latent classes. A higher value of $\eta$ indicates a more balanced distribution of subjects among the latent classes, while a lower value suggests a more unbalanced distribution, with some latent classes having significantly more subjects than others (indicating that subjects are more likely to belong to these latent classes).

The results in Table \ref{realdata3indices} reveal notable differences in the mixed membership structures among the datasets. The Wikipedia elections dataset has the lowest proportion of highly mixed subjects ($\omega_{\mathrm{mixed}} = 4.23\%$) and the highest proportion of highly pure subjects ($\omega_{\mathrm{pure}} = 81.63\%$), along with a highly balanced distribution of subjects across latent classes ($\eta = 0.8779$). In contrast, the Bitcoin Alpha dataset exhibits the highest proportion of highly mixed subjects ($\omega_{\mathrm{mixed}} = 21.22\%$) and a relatively balanced proportion of highly pure subjects ($\omega_{\mathrm{pure}} = 50.86\%$), but a highly unbalanced distribution of subjects across latent classes ($\eta = 0.2328$). The Advogato, Filmtrust, and TMAS datasets show intermediate values for these indices, with TMAS having an almost equal proportion of highly mixed and highly pure subjects ($\omega_{\mathrm{mixed}} = 17.86\%$ and $\omega_{\mathrm{pure}} = 17.76\%$). These variations highlight the diverse mixed membership structures present in different datasets.

\begin{table}[h!]
\footnotesize
	\centering
	\caption{$\omega_{\mathrm{mixed}}$, $\omega_{\mathrm{pure}}$, and $\eta$ for real data in this paper.}
	\label{realdata3indices}
	\begin{tabular}{cccccccccccc}
\hline
Dataset&$\omega_{\mathrm{mixed}}$&$\omega_{\mathrm{pure}}$&$\eta$\\
\hline
Advogato&16.35\%&49.14\%&0.6164\\
Wikipedia elections&4.23\%&81.63\%&0.8779\\
Filmtrust&10.41\%&34.81\%&0.4428\\
Bitcoin Alpha&21.22\%&50.86\%&0.2328\\
TMAS&17.86\%&17.76\%&0.4425\\
\hline
\end{tabular}
\end{table}
\subsection{The TMAS data}
Given that TMAS is a psychological test dataset comprising only 50 statements (i.e., items), we provide some additional analysis of this data in this subsection.

\textbf{Background.} The TMAS data contains 50 true-false statements (see panel (b) of Fig.~\ref{TMASHeatmap} for details). For this dataset, 1 indicates true, 2 indicates false, 0 means not answered. The response matrix $R$ is of dimension $5401\times50$ and takes values in the set $\{0,~1,~2\}$.

\textbf{Analysis.} Results in Table \ref{realdataQfwm} suggest that the estimated number of latent classes for the TMAS data is 2. Applying the SCGoMA method to $R$ with $K=2$ obtains the $5401\times 2$ estimated membership matrix $\hat{\Pi}$ and the $50\times2$ estimated item parameter matrix $\hat{\Theta}$.

\textbf{Results.} For convenience, we use class 1 and class 2 to represent the two estimated latent classes. To better understand  $\hat{\Pi}$ and $\hat{\Theta}$, we plot the membership vectors of the top 50 subjects in panel (a) of Fig.~\ref{TMASHeatmap} and display the heatmap of $\hat{\Theta}$ in panel (b) of Fig.~\ref{TMASHeatmap}, where we only present a subset of the subjects because the number of subjects $N=5401$ is large. Based on Equation (\ref{Rij}), we have the following observation: for subject $i\in[N]$, given its membership score $\Pi(i,k)$ on the $k$-th latent class, if $\Theta(j,k)$ is larger, then the expected value of $R(i,j)$ is larger for the $j$-th item, i.e., $R(i,j)$ tends to be larger if $\Theta(j,k)$ is larger. Recall that in the response matrix $R$ for the TMAS data, 1 represents true and 2 represents false. The above observation suggests that for subject $i\in[N]$, item $j\in[J]$, and class $k\in[K]$, the response for the $i$-th subject to the $j$-th item tends to be false for a larger $\hat{\Theta}(j,k)$ and tends to be true for a smaller $\hat{\Theta}(j,k)$.

Based on the above analysis and carefully examining panel (b) of Fig.~\ref{TMASHeatmap}, class 1 can be interpreted as individuals with a positive mindset towards life, while class 2 can be interpreted as individuals with a negative mindset towards life. Panel (b) of Fig.~\ref{TMASHeatmap} suggests that individuals belonging to class 1 tend to possess positive characteristics such as energy, gentleness, relaxation, concentration, strength, etc.,  while individuals belonging to class 2 tend to have an opposite mindset compared to those in class 1. Although the mindsets of class 1 and class 2 are opposite, for any individual $i\in[N]$, it has different weights towards the two classes, i.e., the $i$-th individual has different extent for negative and positive mindset towards life. For example, from the left panel of Fig.~\ref{TMASHeatmap}, we find that the membership score of subject 2 is $(0.6968, 0.3032)$ which can be interpreted as subject 2 having a 69.86\% probability of possessing a positive mindset towards life and a 30.32\% probability of possessing a negative mindset towards life. Subject 7 is an individual who holds a negative mindset towards life as his/her membership score is $(0.0377, 0.9623)$. In fact, the responses of subject 7 to all items (recorded in the $7$-th row of $R$) are $R(7,:)=(2,1,2,2,1,2,1,1,1,1,1,2,2,1,2,1,1,2,2,2,1,1,1,1,1,1,1,1,2,1,1,2,2,1,1,1,1,2,2,\\1,1,1,1,1,1,1,1,1,1,2)$, which are consistent with our analysis after checking the statement of each item in panel (b) of Fig.~\ref{TMASHeatmap}. For subject 21, he/she is positive because his/her membership score is $(0.9047,0.0953)$. In fact, the responses of subject 21 to all items are $R(21,:)=(1,2,1,1	,2,2,2,2,2,2,2,1,2,2,1,2,2,1,2,1,2,1,2,2,2,2,2,2,1,2,2,1,1,2,2,2,1,1,1,\\2,2,2,2,1,2,2,2,2,2,1)$, which also support our analysis. Meanwhile, it is easy to see that there are 44 opposite responses between subjects 7 and 21, and this supports our finding that subjects 7 and 21 have almost completely different mindsets towards life.

By analyzing panel (a) of Fig.~\ref{TMASHeatmap}, we find that the majority of individuals in the top 50 subjects have a more negative mindset towards life than a positive mindset. This finding also holds for all the 5401 individuals because $\sum_{i=1}^{5401}\hat{\Pi}(i,1)=1656.9$ and $\sum_{i=1}^{5401}\hat{\Pi}(i,2)=3744.1$. Furthermore, we say that an individual has a highly positive mindset towards life if $\hat{\Pi}(i,1)\geq0.9$ and a highly negative mindset towards life if $\hat{\Pi}(i,2)\geq0.9$. We are also interested in the number of individuals who have a highly positive (or negative) mindset towards life. For the TMAS data, we find that 51 subjects have a highly positive mindset towards life and 910 subjects have a highly negative mindset towards life, i.e., the number of individuals with a highly negative mindset towards life is much larger than that of individuals with a highly positive mindset towards life. The reason behind this phenomenon is profound and meaningful for social science and psychological science and it is of independent interest.
\begin{figure}
\centering
\subfigure[Heatmap of $\hat{\Pi}$]{\includegraphics[width=0.2235\textwidth]{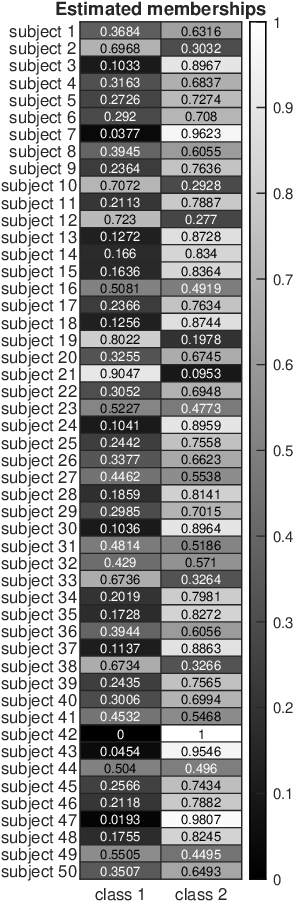}}
\subfigure[Heatmap of $\hat{\Theta}$]{\includegraphics[width=0.767\textwidth]{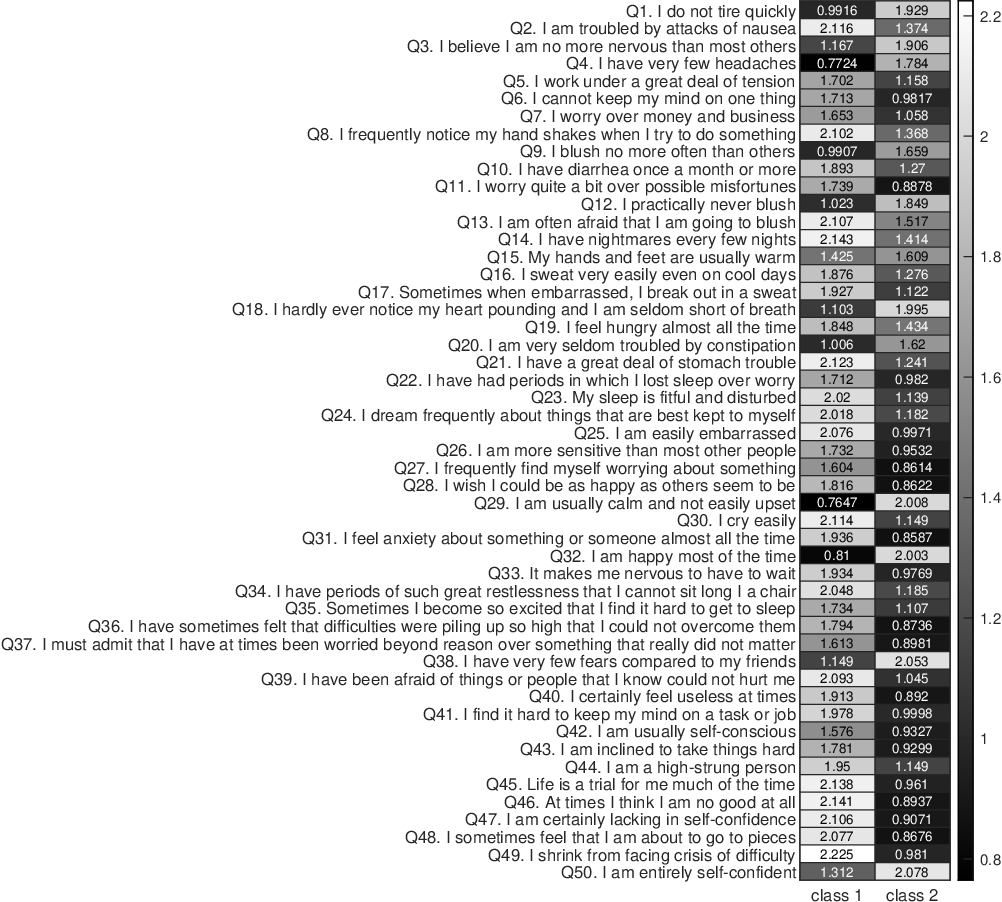}}
\caption{Numerical results for the TMAS data.}
\label{TMASHeatmap} %% label for entire figure
\end{figure}
\section{Conclusion}\label{sec8}
This paper presents the Weighted Grade of Membership (WGoM) model, a novel statistical model in the grade of membership analysis for categorical data with weighted responses. The proposed model allows each subject to belong to multiple latent classes and provides a generative framework for weighted response matrices based on arbitrary distribution. The Grade of Membership model, a well-known paradigm in the literature, is a special case of our WGoM framework. To the best of our knowledge, our WGoM is the first model that enables weighted responses in grade of membership analysis for categorical data. An efficient SVD-based algorithm is developed for estimating the mixed membership matrix and the item parameter matrix under the WGoM framework. We establish the rate of convergence of our method by considering the scaling parameter. It is shown that the proposed method enjoys consistent estimation under WGoM. We provide several instances to demonstrate the generality of our model and analyze the behaviors of our method when the observed weighted response matrices are generated from different distributions under WGoM. We also propose a method for estimating the number of latent classes in categorical data with weighted responses by maximizing the fuzzy weighted modularity. Extensive experiments are conducted to verify our theoretical findings and evaluate the effectiveness of our methods.

The contribution of this paper is fourfold: (1) We develop a novel WGoM model that surpasses the limitations of the Grade of Membership model, offering a more flexible framework for analyzing categorical data with weighted responses, thereby opening new opportunities for practical applications in various research areas; (2) We present an efficient SVD-based algorithm for estimating the parameters in WGoM; (3) We conduct a comprehensive evaluation of our method through theoretical analysis, simulation studies, and real-world applications, demonstrating its effectiveness both theoretically and empirically; and (4) We propose an efficient tool for determining $K$ in categorical data with weighted responses.

Future work may explore an extension of WGoM to handle a more complex scenario that includes additional covariates. It is worth noting that our approach for estimating $K$ involves maximizing fuzzy weighted modularity, which can be time-consuming for large-scale categorical data or when dealing with a large candidate value $K_{C}$. Developing a method with theoretical guarantees that can effectively determine $K$ for large-scale categorical data under WGoM is an interesting and challenging future direction. Furthermore, developing more efficient algorithms to enhance the computational efficiency of the proposed method would be a valuable contribution. Finally, the current model is static, assuming that the latent class memberships and item parameters do not change over time. Extending the model to a dynamic setting, where the parameters evolve over time, could provide insights into how latent structures change and evolve in real-world applications.

\bmhead{Acknowledgements} Qing's work was sponsored by the Scientific Research Foundation of Chongqing University of Technology (Grant No. 2024ZDR003), the Science and Technology Research Program of Chongqing Municipal Education Commission (Grant No. KJQN202401168), and the Natural Science Foundation of Chongqing, China (Grant No: CSTB2023NSCQ-LZX0048).
%\bmhead{Author Contributions} Huan Qing: Conceptualization, Methodology, Investigation, Software, Formal analysis, Data curation, Writing-original draft, Writing-reviewing \& editing, Funding acquisition.
\section*{Declarations}
\textbf{Conflict of interest} The author declares no conflict of interest.
\begin{appendices}
\section{Main notations}
We list main notations used in this paper in Table \ref{tab:notations}.
\begin{table}[htp!]
\centering
\caption{Table of the main notations.}
\label{tab:notations}
\begin{tabular}{|c|l|l|}
\hline
\textbf{Symbol} & \textbf{Description} \\
\hline
$N$ & Number of subjects \\
$J$ & Number of items \\
$\mathbb{R}$ & Set of real numbers \\
$R \in \mathbb{R}^{N \times J}$ & Observed weighted response matrix \\
$[m]$ & Set $\{1,~2,~\ldots,~m\}$ for any integer $m > 0$ \\
$I_{m \times m}$ & $m \times m$ identity matrix \\
$|\cdot|$ & Absolute value of a scalar \\
$\|\cdot\|_q$ & $l_q$-norm of a vector \\
$\|\cdot\|$ & Spectral norm of a matrix \\
$\|\cdot\|_F$ & Frobenius norm of a matrix \\
$\|\cdot\|_{2 \rightarrow \infty}$ & Maximum $\ell_2$ norm among all rows of a matrix\\
$\lambda_k(\cdot)$ & $k$-th largest eigenvalue (in magnitude) of a matrix \\
$\sigma_k(\cdot)$ & $k$-th largest singular value of a matrix \\
$\kappa(\cdot)$ & Condition number of a matrix \\
$\mathrm{rank}(\cdot)$ & Rank of a matrix \\
$e_i$ & Standard basis vector with $i$-th entry being 1 \\
$\text{max}(0, \cdot)$ & Nonnegative part of a matrix \\
$\mathbb{E}(\cdot)$ & Expectation of a random variable \\
$\mathbb{P}(\cdot)$ & Probability of a random variable \\
$a = O(b)$ & $a$ and $b$ are of the same order for any $a$ and $b$ \\
$K$ & Number of latent classes \\
$\Pi \in [0, 1]^{N \times K}$ & Mixed membership matrix \\
$\Theta \in \mathbb{R}^{J \times K}$ & Item parameter matrix \\
$\mathcal{I}$ & Index set of pure subjects \\
$R_0 \in \mathbb{R}^{N \times J}$ & $\Pi \Theta'$ \\
$\mathcal{F}$ & Arbitrary distribution \\
$U, \Sigma, V$ & Compact SVD of $R_0$, i.e., $R_0 = U \Sigma V'$ \\
$X \in \mathbb{R}^{K \times K}$ & Corner matrix $X = U(\mathcal{I}, :)$ \\
$Z \in [0, 1]^{N \times K}$ & $U X^{-1}$ \\
$\hat{U}, \hat{\Sigma}, \hat{V}$ & Top $K$ SVD of $R$ \\
$\hat{R}$ & Top $K$ SVD approximation of $R$, i.e., $\hat{R} = \hat{U} \hat{\Sigma} \hat{V}'$ \\
$\hat{\mathcal{I}}$ & Index set of estimated pure subjects \\
$\hat{Z} \in [0, +\infty)^{N \times K}$ & $\text{max}(0, \hat{U} \hat{U}^{-1}(\hat{\mathcal{I}}, :))$ \\
$\hat{\Pi} \in [0, 1]^{N \times K}$ & Estimated mixed membership matrix \\
$\hat{\Theta} \in \mathbb{R}^{J \times K}$ & Estimated item parameter matrix \\
$\rho$ & Scaling parameter $\rho = \text{max}_{j \in [J], k \in [K]} |\Theta(j, k)|$ \\
$B \in \mathbb{R}^{J \times K}$ & Matrix such that $\Theta = \rho B$ \\
$\tau$ & $\text{max}_{i \in [N], j \in [J]} |R(i, j) - R_0(i, j)|$ \\
$\text{Var}(R(i, j))$ & Variance of $R(i, j)$ \\
$\gamma$ & $\frac{\text{max}_{i \in [N], j \in [J]} \text{Var}(R(i, j))}{\rho}$ \\
$\mathcal{P} \in \mathbb{R}^{K \times K}$ & Permutation matrix \\
$p \in [0, 1]$ & Sparsity parameter \\
$\mathcal{B} \in \{0, 1\}^{N \times J}$ & Binary matrix \\
$A \in \mathbb{R}^{N \times N}$ & $R R'$ \\
$\mathcal{M}$ & Specific method \\
$A_{+}$ & $\text{max}(0, A)$ \\
$A_{-}$ & $\text{max}(0, -A)$ \\
$d_{+}$ & Vector with $d_{+}(i) = \sum_{j \in [N]} A_{+}(i, j)$ for $i \in [N]$ \\
$d_{-}$ & Vector with $d_{-}(i) = \sum_{j \in [N]} A_{-}(i, j)$ for $i \in [N]$ \\
$m_{+}$ & $\sum_{i \in [N]} d_{+}(i) / 2$ \\
$m_{-}$ & $\sum_{i \in [N]} d_{-}(i) / 2$ \\
$Q_{+}$ & Fuzzy modularity computed from $A_{+}$ \\
$Q_{-}$ & Fuzzy modularity computed from $A_{-}$ \\
$Q_{\mathcal{M}}(k)$ & Fuzzy weighted modularity computed by method $\mathcal{M}$ with $k$ latent classes \\
\hline
\end{tabular}
\end{table}

\section{Proofs under WGoM}
\subsection{Proof of Theorem \ref{ExistDisF}}
\begin{proof}
First, we consider the case when $Q=2$. For this case, we let $\mathbb{P}(R(i,j)=a_{1})=p_{1}$ and $\mathbb{P}(R(i,j)=a_{2})=p_{2}$. $p_{1}$ and $p_{2}$ are two probabilities which should satisfy the following conditions:
\begin{align}\label{Q2p1p2}
p_{1}+p_{2}=1,~0\leq p_{1}\leq 1,~0\leq p_{2}\leq1.
\end{align}
To make Equation (\ref{RFR0}) hold (i.e., $\mathbb{E}(R(i,j))=R_{0}(i,j)$), we need
\begin{align}\label{Q2R0}
  a_{1}p_{1}+a_{2}p_{2}=R_{0}(i,j).
\end{align}
Combining Equation (\ref{Q2p1p2}) and Equation (\ref{Q2R0}) gives
\begin{align*}
p_{1}=\frac{a_{2}-R_{0}(i,j)}{a_{2}-a_{1}},~p_{2}=\frac{R_{0}(i,j)-a_{1}}{a_{2}-a_{1}},
\end{align*}
where $R_{0}(i,j)\in[a_{1},a_{2}]$. Thus, for the case $Q=2$, there exists only one discrete distribution $\mathcal{F}$ satisfying Equation (\ref{RFR0}), and the exact form of $\mathcal{F}$ is
\begin{align*}
\mathbb{P}(R(i,j)=a_{1})=\frac{a_{2}-R_{0}(i,j)}{a_{2}-a_{1}},~\mathbb{P}(R(i,j)=a_{2})=\frac{R_{0}(i,j)-a_{1}}{a_{2}-a_{1}},
\end{align*}
where $R_{0}(i,j)\in[a_{1},a_{2}]$.

Next, we consider the case when $Q\geq3$. For this case, we let $\mathbb{P}(R(i,j)=a_{q})=p_{q}$ for $q\in[Q]$, where $p_{q}$ is a probability. Since the summation of all probabilities should be 1, $\{p_{q}\}^{Q}_{q=1}$ should satisfy
\begin{align}\label{Qp}
\sum_{q=1}^{Q}p_{q}=1,~0\leq p_{q}\leq 1 \mathrm{~for~}q\in[Q].
\end{align}
To make Equation (\ref{RFR0}) hold, $\{p_{q}\}^{Q}_{q=1}$ should satisfy
\begin{align}\label{QR0}
\sum_{q=1}^{Q}a_{q}p_{q}=R_{0}(i,j).
\end{align}
By analyzing Equation (\ref{Qp}) and Equation (\ref{QR0}), we find that the $Q$ unknown elements $\{p_{q}\}^{Q}_{q=1}$ should satisfy 2 equalities and $Q$ inequalities. Therefore, there must exist $\{p_{q}\}^{Q}_{q=1}$ satisfying Equation (\ref{Qp}) and Equation (\ref{QR0}).

Now, we construct $Q$ solutions of $\{p_{q}\}^{Q}_{q=1}$ that satisfy Equation (\ref{Qp}) and Equation (\ref{QR0}). Let $p_{2}=p_{3}=\ldots=p_{Q}\equiv y$. Equation (\ref{Qp}) gives
\begin{align}\label{pQp1y}
p_{1}+(Q-1)y=1,~0\leq p_{1}\leq1,~0\leq y\leq1.
\end{align}
Equation (\ref{QR0}) gives
\begin{align}\label{pQp2y}
a_{1}p_{1}+(a_{2}+a_{3}+\ldots+a_{Q})y=R_{0}(i,j).
\end{align}
Combining Equation (\ref{pQp1y}) and Equation (\ref{pQp2y}) gives
\begin{align*}
 p_{1}=\frac{a_{2}+a_{3}+\ldots+a_{Q}-(Q-1)R_{0}(i,j)}{\sum_{q=1}^{Q}a_{q}-Qa_{1}},~p_{2}=p_{3}=\ldots=p_{Q}=y=\frac{R_{0}(i,j)-a_{1}}{\sum_{q=1}^{Q}a_{q}-Qa_{1}},
\end{align*}
where $a_{1}\leq R_{0}(i,j)\leq\frac{a_{2}+a_{3}+\ldots+a_{Q}}{Q-1}$.

Therefore, the exact form of a $\mathcal{F}$ satisfying Equation (\ref{RFR0}) is
\begin{align*}
&\mathbb{P}(R(i,j)=a_{1})=\frac{a_{2}+a_{3}+\ldots+a_{Q}-(Q-1)R_{0}(i,j)}{\sum_{q=1}^{Q}a_{q}-Qa_{1}},\\
&\mathbb{P}(R(i,j)=a_{q})=\frac{R_{0}(i,j)-a_{1}}{\sum_{q=1}^{Q}a_{q}-Qa_{1}} \mathrm{~for~}q=2,3,\ldots,Q,
\end{align*}
where $R_{0}(i,j)\in[a_{1},\frac{a_{2}+a_{3}+\ldots+a_{Q}}{Q-1}]$. Since $a_{1}<~a_{2}<~\ldots<~a_{Q}$, we have $a_{1}<\frac{\sum_{q=2}^{Q}a_{q}}{Q-1}$, which implies that $[a_{1}, \frac{a_{2}+a_{3}+\ldots+a_{Q}}{Q-1}]$, the range of $R_{0}(i,j)$, is not empty. Sure, there are $Q$ choices such that $p_{q}$ is fixed while $p_{1}=p_{2}=\ldots=p_{q-1}=p_{q+1}=\ldots=p_{Q}$ for $q\in[Q]$. Therefore, there are at least $Q$ distinct discrete distributions $\mathcal{F}$ satisfying Equation (\ref{RFR0}).
\end{proof}
\subsection{Proof of Proposition \ref{idWGoM}}
\begin{proof}
Point (a) of Theorem 2 in \citep{chen2023spectral} always holds provided that each latent class has at least one pure subject and $\Theta$ is a rank-$K$ matrix, thus it guarantees WGoM's identifiability since WGoM requires $\Pi$ to satisfy Condition \ref{pure} and $\Theta$'s rank to be $K$. Note that as stated by point (b) of Theorem 2 in \citep{chen2023spectral}, WGoM is also identifiable if $\Theta$'s rank is $(K-1)$ and $\Theta$ satisfies some extra conditions in point (b) of Theorem 2 in \citep{chen2023spectral}. However, we only consider the case $\mathrm{rank}(\Theta)=K$ in this paper mainly for the convenience of our further theoretical analysis.
\end{proof}
\subsection{Proof of Lemma \ref{UVWGoM}}
\begin{proof}
Since $R_{0}=\Pi\Theta'=U\Sigma V'$, we have $U=R_{0}V\Sigma^{-1}=\Pi\Theta'V\Sigma^{-1}=\Pi X$, where we set $X$ as $\Theta'V\Sigma^{-1}$. Since $U=\Pi X$ and we have assumed that $\Pi(\mathcal{I},:)=I_{K\times K}$, we have $U(\mathcal{I},:)=\Pi(\mathcal{I},:)X=X$, i.e., $X=U(\mathcal{I},:)$.
\end{proof}
\subsection{Basic properties of $R_{0}$}
The following lemmas are useful for building our main theoretical result.
\begin{lem}\label{P2}
Under $WGoM(\Pi,\Theta,\mathcal{F})$, for $i\in[N]$, $j\in[J]$, we have
\begin{align*}
\sqrt{\frac{1}{K\lambda_{1}(\Pi'\Pi)}}\leq\|U(i,:)\|_{F}\leq \sqrt{\frac{1}{\lambda_{K}(\Pi'\Pi)}} \mathrm{~and~} 0\leq\|V(j,:)\|_{F}\leq\frac{K^{1.5}\kappa(\Pi)}{\sqrt{\lambda_{K}(B'B)}}.
\end{align*}
\end{lem}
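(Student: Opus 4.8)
The plan is to leverage the factorization $U=\Pi X$ supplied by Lemma \ref{UVWGoM} together with the orthonormality relations $U'U=I_{K\times K}$ and $V'V=I_{K\times K}$, and to control the singular values of $R_0$ through the decomposition $\Theta=\rho B$ with $\max_{j,k}|B(j,k)|=1$.

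First I would pin down the singular values of the corner matrix $X=U(\mathcal{I},:)$. Since $U=\Pi X$ and $U'U=I_{K\times K}$, we get $I_{K\times K}=X'(\Pi'\Pi)X$, which forces $X$ to be invertible and yields $XX'=(\Pi'\Pi)^{-1}$; hence $\sigma_1(X)=\lambda_K(\Pi'\Pi)^{-1/2}$ and $\sigma_K(X)=\lambda_1(\Pi'\Pi)^{-1/2}$. Writing $U(i,:)=\Pi(i,:)X$ and using $\sigma_K(X)\|\Pi(i,:)\|_F\le\|U(i,:)\|_F\le\sigma_1(X)\|\Pi(i,:)\|_F$, the bound for $U$ reduces to the elementary estimate $\frac{1}{\sqrt{K}}\le\|\Pi(i,:)\|_F\le 1$, which follows from $\Pi(i,:)\ge 0$, $\|\Pi(i,:)\|_1=1$, and Cauchy--Schwarz.

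For the rows of $V$, the lower bound $\|V(j,:)\|_F\ge 0$ is trivial. For the upper bound, from $R_0=\Pi\Theta'=U\Sigma V'$ I would write $V=R_0'U\Sigma^{-1}=\Theta\Pi'U\Sigma^{-1}$, so that $V(j,:)=\Theta(j,:)\Pi'U\Sigma^{-1}$ and $\|V(j,:)\|_F\le\|\Theta(j,:)\|_F\,\|\Pi'U\|\,\|\Sigma^{-1}\|$. Here $\|\Theta(j,:)\|_F=\rho\|B(j,:)\|_F\le\rho\sqrt{K}$ since $|B(j,k)|\le 1$; $\|\Pi'U\|\le\|\Pi\|\,\|U\|=\sqrt{\lambda_1(\Pi'\Pi)}$ because $U$ has orthonormal columns; and $\|\Sigma^{-1}\|=1/\sigma_K(R_0)$.

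The only mildly nontrivial step is the lower bound on $\sigma_K(R_0)$. Using $R_0=\rho\,\Pi B'$ with $\Pi$ of rank $K$ (from Equation (\ref{DefinePI}) and Condition \ref{pure}) and $B$ of rank $K$ (from $\mathrm{rank}(\Theta)=K$), I would argue that $\sigma_K(\Pi B')^2=\lambda_{\min}(B\Pi'\Pi B')$ equals the smallest eigenvalue of $(\Pi'\Pi)^{1/2}B'B(\Pi'\Pi)^{1/2}$, and the operator inequality $(\Pi'\Pi)^{1/2}B'B(\Pi'\Pi)^{1/2}\succeq\lambda_K(B'B)\,(\Pi'\Pi)\succeq\lambda_K(B'B)\lambda_K(\Pi'\Pi)\,I_{K\times K}$ gives $\sigma_K(R_0)\ge\rho\sqrt{\lambda_K(\Pi'\Pi)\lambda_K(B'B)}$. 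Substituting this yields $\|V(j,:)\|_F\le\sqrt{K}\,\sqrt{\lambda_1(\Pi'\Pi)/\lambda_K(\Pi'\Pi)}/\sqrt{\lambda_K(B'B)}=\sqrt{K}\,\kappa(\Pi)/\sqrt{\lambda_K(B'B)}$, which is at most $K^{1.5}\kappa(\Pi)/\sqrt{\lambda_K(B'B)}$, as claimed. I expect this eigenvalue-product step --- verifying that $\Pi'\Pi$ and $B'B$ are positive definite and that the product inequality is legitimate --- to be the part demanding the most care, although each ingredient is standard.
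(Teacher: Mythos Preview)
Your argument is correct and follows essentially the same route as the paper: for $U$ you use $U=\Pi X$ together with $XX'=(\Pi'\Pi)^{-1}$ (the paper packages this as Lemma \ref{P3}), and for $V$ you write $V=\Theta\Pi'U\Sigma^{-1}$ and bound $\sigma_K(R_0)$ from below by $\rho\,\sigma_K(\Pi)\sigma_K(B)$ (the paper packages this as Lemma \ref{P4}). The only notable difference is that you use spectral norms in the $V$ step where the paper uses Frobenius norms ($\|\Sigma^{-1}\|_F$ and $\|\Pi\|_F$), so you actually obtain the sharper bound $\sqrt{K}\,\kappa(\Pi)/\sqrt{\lambda_K(B'B)}$ before relaxing to $K^{1.5}$; the paper's extra $K$ factor comes from those two Frobenius-norm estimates.
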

\begin{proof}
For $U$, since $U=\Pi X$ by Lemma \ref{UVWGoM}, we have
\begin{align*}
&\mathrm{min}_{i\in[N]}\|e'_{i}U\|^{2}_{F}=\mathrm{min}_{i\in[N]}e'_{i}UU'e_{i}=\mathrm{min}_{i\in[N]}\Pi(i,:)XX'\Pi'(i,:)\\
&=\mathrm{min}_{i\in[N]}\|\Pi(i,:)\|^{2}_{F}\frac{\Pi(i,:)}{\|\Pi(i,:)\|_{F}}XX'\frac{\Pi'(i,:)}{\|\Pi(i,:)\|_{F}}\geq \mathrm{min}_{i\in[N]}\|\Pi(i,:)\|^{2}_{F}\mathrm{min}_{\|y\|_{F}=1}y'XX'y\\
&=\mathrm{min}_{i\in[N]}\|\Pi(i,:)\|^{2}_{F}\lambda_{K}(XX')\overset{\mathrm{By~Lemma~}\ref{P3}}{=}\frac{\mathrm{min}_{i\in[N]}\|\Pi(i,:)\|^{2}_{F}}{\lambda_{1}(\Pi'\Pi)}\geq \frac{1}{K\lambda_{1}(\Pi'\Pi)},
\end{align*}
where the last inequality holds since $\|\Pi(i,:)\|_{F}\geq \|\Pi(i,:)\|_{1}/\sqrt{K}=1/\sqrt{K}$ for $i\in[N]$ and $e_{i}$ is a $N\times 1$ vector with $i$-th entry being 1 while the other entries being zero. Meanwhile,
\begin{align*}
\mathrm{max}_{i\in[N]}\|e'_{i}U\|^{2}_{F}&=\mathrm{max}_{i\in[N]}\|\Pi(i,:)\|^{2}_{F}\frac{\Pi(i,:)}{\|\Pi(i,:)\|_{F}}XX'\frac{\Pi'(i,:)}{\|\Pi(i,:)\|_{F}}\\
&\leq\mathrm{max}_{i\in[N]}\|\Pi(i,:)\|^{2}_{F}\mathrm{max}_{\|y\|_{F}=1}y'XX'y\\
&=\mathrm{max}_{i\in[N]}\|\Pi(i,:)\|^{2}_{F}\lambda_{1}(XX')\\
&\overset{\mathrm{By~Lemma~}\ref{P3}}{=}\frac{\mathrm{max}_{i\in[N]}\|\Pi(i,:)\|^{2}_{F}}{\lambda_{K}(\Pi'\Pi)}\\
&\leq \frac{1}{\lambda_{K}(\Pi'\Pi)}.
\end{align*}

For $V$, recall that $R_{0}=\Pi\Theta'=U\Sigma V', \Theta=\rho B, \mathrm{max}_{j\in[J],k\in[K]}|B(j,k)|=1$, and $U'U=I_{K\times K}$, we have $V=\Theta\Pi'U\Sigma^{-1}$ which gives that
\begin{align*}
\mathrm{max}_{j\in[J]}\|V(j,:)\|_{F}&=\mathrm{max}_{j\in[J]}\rho\|B(i,:)\Pi'U\Sigma^{-1}\|_{F}\leq \mathrm{max}_{j\in[J]}\rho\|B(i,:)\Pi'U\|_{F}\|\Sigma^{-1}\|_{F}\\
&=\mathrm{max}_{j\in[J]}\rho\|B(i,:)\Pi'\|_{F}\|\Sigma^{-1}\|_{F}\leq\mathrm{max}_{j\in[J]}\rho\|B(i,:)\Pi'\|_{F}\frac{\sqrt{K}}{\sigma_{K}(R_{0})}\\
&\overset{\mathrm{By~Lemma~}\ref{P4}}{\leq}\mathrm{max}_{j\in[J]}\|B(i,:)\Pi'\|_{F}\frac{\sqrt{K}}{\sigma_{K}(\Pi)\sigma_{K}(B)}\\
&\leq\mathrm{max}_{j\in[J]}\|B(i,:)\|_{F}\|\Pi\|_{F}\frac{\sqrt{K}}{\sigma_{K}(\Pi)\sigma_{K}(B)}\\
&\leq\frac{K\|\Pi\|_{F}}{\sigma_{K}(\Pi)\sigma_{K}(B)}\leq\frac{K^{1.5}\|\Pi\|}{\sigma_{K}(\Pi)\sigma_{K}(B)}=\frac{K^{1.5}\kappa(\Pi)}{\sigma_{K}(B)},
\end{align*}
where we have used the fact that $\|Y\|_{F}\leq\sqrt{\mathrm{rank}(Y)}\|Y\|$ for any matrix $Y$. The lower bound of $\|V(j,:)\|_{F}$ is 0 because $\|B(i,:)\|_{F}$ can be set as 0.
\end{proof}
\begin{lem}\label{P3}
Under $WGoM(\Pi, \Theta,\mathcal{F})$, we have
\begin{align*}	
\lambda_{1}(XX')=\frac{1}{\lambda_{K}(\Pi'\Pi)}\mathrm{~and~}\lambda_{K}(XX')=\frac{1}{\lambda_{1}(\Pi'\Pi)}.
\end{align*}
\end{lem}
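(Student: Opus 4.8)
The plan is to read off the relationship between $XX'$ and $\Pi'\Pi$ directly from the factorization $U=\Pi X$ supplied by Lemma \ref{UVWGoM}, together with the orthonormality $U'U=I_{K\times K}$ coming from the compact SVD $R_{0}=U\Sigma V'$. First I would note that $X$ is a nonsingular $K\times K$ matrix: since $\mathrm{rank}(U)=K$ and $U=\Pi X$, we have $K=\mathrm{rank}(U)\le\mathrm{rank}(X)\le K$, so $\mathrm{rank}(X)=K$ (this also matches the observation in Section \ref{sec3} that the corner matrix $U(\mathcal{I},:)=X$ is full rank). Likewise $\Pi'\Pi$ is a symmetric positive definite $K\times K$ matrix because $\mathrm{rank}(\Pi)=K$, hence invertible.

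Next I would substitute $U=\Pi X$ into $U'U=I_{K\times K}$ to get $X'\,\Pi'\Pi\,X=I_{K\times K}$. Using the invertibility of $X$, this rearranges to $\Pi'\Pi=(X')^{-1}X^{-1}=(XX')^{-1}$, i.e. $XX'=(\Pi'\Pi)^{-1}$. Finally, since $XX'$ is the inverse of the symmetric positive definite matrix $\Pi'\Pi$, its eigenvalues are exactly the reciprocals of those of $\Pi'\Pi$, in reversed order, so $\lambda_{1}(XX')=\frac{1}{\lambda_{K}(\Pi'\Pi)}$ and $\lambda_{K}(XX')=\frac{1}{\lambda_{1}(\Pi'\Pi)}$, which is the claim.

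This is essentially a one-line linear-algebra computation, so there is no real obstacle; the only points needing a word of justification are the nonsingularity of $X$ and of $\Pi'\Pi$, both of which follow immediately from the rank conditions $\mathrm{rank}(\Pi)=\mathrm{rank}(\Theta)=K$ already assumed in $WGoM(\Pi,\Theta,\mathcal{F})$.
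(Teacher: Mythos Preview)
Your proposal is correct and follows essentially the same route as the paper's own proof: both start from $U=\Pi X$ and $U'U=I_{K\times K}$ to obtain $X'\Pi'\Pi X=I_{K\times K}$, invert to get $\Pi'\Pi=(XX')^{-1}$, and then read off the eigenvalue relations. The only difference is that you spell out the reasons for the nonsingularity of $X$ and of $\Pi'\Pi$ a bit more explicitly than the paper does.
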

\begin{proof}
Recall that $U=\Pi X$ by Lemma \ref{UVWGoM} and $U'U=I_{K\times K}$, we have $I_{K\times K}=X'\Pi'\Pi X$. Because $X$ is a nonsingular matrix, we have $\Pi'\Pi=(XX')^{-1}$, which gives
\begin{align*}
\lambda_{1}(XX')=\frac{1}{\lambda_{K}(\Pi'\Pi)},\lambda_{K}(XX')=\frac{1}{\lambda_{1}(\Pi'\Pi)}.
\end{align*}
\end{proof}
\begin{lem}\label{P4}
Under $WGoM(\Pi,\Theta,\mathcal{F})$, we have
\begin{align*} \sigma_{K}(R_{0})\geq\rho\sigma_{K}(\Pi)\sigma_{K}(B)\mathrm{~and~} \sigma_{1}(R_{0})\leq\rho\sigma_{1}(\Pi)\sigma_{1}(B).
\end{align*}
\end{lem}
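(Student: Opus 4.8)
The plan is to recall that $R_{0}=\Pi\Theta'=\rho\,\Pi B'$, so $R_{0}$ is a product of the three matrices $\rho I$, $\Pi$, and $B'$, and then exploit the standard submultiplicativity-type inequalities for singular values of matrix products. For the lower bound on $\sigma_{K}(R_{0})$, I would use the fact that for any conformable matrices $Y$ and $Z$ with $Y$ of full column rank, $\sigma_{\min}(YZ)\geq\sigma_{\min}(Y)\,\sigma_{\min}(Z)$ when the product has rank equal to the common dimension; here $\Pi$ has rank $K$, $B'$ has rank $K$ (since $\mathrm{rank}(\Theta)=\mathrm{rank}(B)=K$), and $\mathrm{rank}(R_{0})=K$, so applying this twice gives $\sigma_{K}(\Pi B')\geq\sigma_{K}(\Pi)\sigma_{K}(B')=\sigma_{K}(\Pi)\sigma_{K}(B)$, and then scaling by $\rho$ yields $\sigma_{K}(R_{0})=\rho\,\sigma_{K}(\Pi B')\geq\rho\,\sigma_{K}(\Pi)\sigma_{K}(B)$. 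For the upper bound, I would simply use $\sigma_{1}(YZ)\leq\sigma_{1}(Y)\sigma_{1}(Z)$ (i.e., $\|YZ\|\leq\|Y\|\|Z\|$), applied to $R_{0}=\rho\,\Pi B'$, to get $\sigma_{1}(R_{0})=\rho\,\|\Pi B'\|\leq\rho\,\|\Pi\|\|B'\|=\rho\,\sigma_{1}(\Pi)\sigma_{1}(B)$.

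A cleaner way to package the lower bound, which avoids fussing over rank hypotheses for the intermediate product, is to argue via Courant–Fischer directly. Writing $\sigma_{K}(R_{0})=\min_{\|x\|_{2}=1}\|R_{0}'x\|_{2}$ restricted to the row space, one can instead use the min–max characterization $\sigma_{K}(R_{0})^{2}=\lambda_{K}(R_{0}R_{0}')=\lambda_{K}(\rho^{2}\Pi B'B\Pi')$ and bound $y'\Pi B'B\Pi' y\geq\lambda_{K}(B'B)\,\|\Pi'y\|_{2}^{2}\geq\lambda_{K}(B'B)\lambda_{K}(\Pi'\Pi)\|y\|_{2}^{2}$ for $y$ in the appropriate $K$-dimensional subspace; taking square roots gives $\sigma_{K}(R_{0})\geq\rho\sqrt{\lambda_{K}(B'B)}\sqrt{\lambda_{K}(\Pi'\Pi)}=\rho\,\sigma_{K}(\Pi)\sigma_{K}(B)$. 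The analogous chain of inequalities with maxima in place of minima delivers $\sigma_{1}(R_{0})\leq\rho\,\sigma_{1}(\Pi)\sigma_{1}(B)$.

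The only real subtlety — and the step I would be most careful about — is the lower bound: the inequality $\sigma_{\min}(YZ)\geq\sigma_{\min}(Y)\sigma_{\min}(Z)$ fails in general when $Y$ and $Z$ are not square, and one must check that the relevant quadratic form is being minimized over the correct $K$-dimensional subspace so that $\|\Pi'y\|_{2}$ is genuinely bounded below by $\sigma_{K}(\Pi)\|y\|_{2}$. Since $\mathrm{rank}(\Pi)=\mathrm{rank}(B)=K$ (guaranteed by Condition \ref{pure} together with Equation (\ref{DefinePI}), and by $\mathrm{rank}(\Theta)=K$), this is legitimate, but it is worth stating explicitly. Everything else is routine: the factorization $R_{0}=\rho\Pi B'$, the identity $\sigma_{k}(\Theta)$-vs-$\sigma_{k}(B)$ relation $\sigma_{k}(\Theta)=\rho\sigma_{k}(B)$, and the elementary operator-norm submultiplicativity.
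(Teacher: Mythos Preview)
Your proposal is correct and matches the paper's proof: the upper bound via $\|\Pi B'\|\leq\|\Pi\|\|B\|$ is identical, and your second lower-bound argument (through $\sigma_{K}(R_{0})^{2}=\lambda_{K}(\rho^{2}\Pi B'B\Pi')$ and Rayleigh-quotient bounds) is essentially the paper's route. The paper's only refinement is to invoke the identity $\lambda_{K}(\Pi\Theta'\Theta\Pi')=\lambda_{K}(\Pi'\Pi\,\Theta'\Theta)$ (nonzero eigenvalues of $Y_{1}Y_{2}$ equal those of $Y_{2}Y_{1}$) before bounding, which sidesteps the $K$-dimensional-subspace subtlety you flagged and makes the inequality $\lambda_{K}(\Pi'\Pi\,\Theta'\Theta)\geq\lambda_{K}(\Pi'\Pi)\lambda_{K}(\Theta'\Theta)$ immediate.
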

\begin{proof}
For $\sigma_{K}(R_{0})$, we have
\begin{align*}
\sigma^{2}_{K}(R_{0})&=\lambda_{K}(R_{0}R'_{0})=\lambda_{K}(\Pi \Theta'\Theta\Pi')=\lambda_{K}(\Pi'\Pi\Theta'\Theta)\geq \lambda_{K}(\Pi'\Pi)\lambda_{K}(\Theta'\Theta)\\
&=\lambda_{K}(\Pi'\Pi)\lambda_{K}(\rho^{2}B'B)=\rho^{2}\sigma^{2}_{K}(\Pi)\sigma^{2}_{K}(B),
\end{align*}
where we have frequently used the fact the nonzero eigenvalues of $Y_{1}Y_{2}$ are equal to that of $Y_{2}Y_{1}$ for any two matrices $Y_{1}$ and $Y_{2}$.

For $\sigma_{1}(R_{0})$, we have
\begin{align*}
\sigma_{1}(R_{0})=\|R_{0}\|=\rho\|\Pi B'\|\leq \rho \|\Pi\|\|B\|=\rho\sigma_{1}(\Pi)\sigma_{1}(B).
\end{align*}
\end{proof}
\subsection{Proof of Theorem \ref{mainWGoM}}
\begin{proof}
Similar to Theorem 3.1 \citep{mao2021estimating}, we obtain the row-wise eigenspace error first. Define  $H_{\hat{U}}$ as $H_{\hat{U}}=\hat{U}'U$. Let $H_{\hat{U}}=U_{H_{\hat{U}}}\Sigma_{H_{\hat{U}}}V'_{H_{\hat{U}}}$ be the top $K$ SVD of $H_{\hat{U}}$. Set $\mathrm{sgn}(H_{\hat{U}})=U_{H_{\hat{U}}}V'_{H_{\hat{U}}}$. Define $H_{\hat{V}}$ as $H_{\hat{V}}=\hat{V}'V$, let $H_{\hat{V}}=U_{H_{\hat{V}}}\Sigma_{H_{\hat{V}}}V'_{H_{\hat{V}}}$ be the top K SVD of $H_{\hat{V}}$, and set $\mathrm{sgn}(H_{\hat{V}})=U_{H_{\hat{V}}}V'_{H_{\hat{V}}}$. For $i\in[N]$, $j\in[J]$, the following statements are true:
\begin{itemize}
  \item $\mathbb{E}(R(i,j)-R_{0}(i,j))=0$ under WGoM.
  \item $\mathbb{E}((R(i,j)-R_{0}(i,j))^{2})=\mathrm{Var}(R(i,j))\leq \rho\gamma$.
  \item Set $\mu=\mathrm{max}(\frac{N\|U\|^{2}_{2\rightarrow\infty}}{K},\frac{J\|V\|^{2}_{2\rightarrow\infty}}{K})$ as the incoherence parameter introduced in Definition 3.1. of \citep{chen2021spectral}. Based on Lemma \ref{P2} and Condition \ref{c1}, we have $\frac{N\|U\|^{2}_{2\rightarrow\infty}}{K}\leq\frac{N}{K\lambda_{K}(\Pi'\Pi)}=O(1)$ and $\frac{J\|V\|^{2}_{2\rightarrow\infty}}{K}\leq\frac{JK^{3}\kappa^{2}(\Pi)}{K\lambda_{K}(B'B)}=O(1)$. Therefore, we have $\mu=O(1)$.
  \item Let $c_{b}=\frac{\tau}{\sqrt{\rho\gamma \mathrm{max}(N,J)/(\mu \mathrm{log}(N+J))}}$. By $\mu=O(1)$ and Assumption \ref{a1}, we have $c_{b}\leq O(1)$.
  \item By Lemma \ref{P4} and Condition \ref{c1}, we have $\kappa(R_{0})=\frac{\sigma_{1}(R_{0})}{\sigma_{K}(R_{0})}\leq\kappa(\Pi)\kappa(B)=O(1)$.
\end{itemize}

By the first four statements, we see that all conditions of Theorem 4.4 \citep{chen2021spectral} are satisfied. Then, according to Theorem 4.4. \citep{chen2021spectral}, when $\sigma_{K}(R_{0})\gg \sqrt{\rho\gamma(N+J)\mathrm{log}(N+J)}$, with probability at least $1-O(\frac{1}{(N+J)^{5}})$, we have
\begin{align*}
&\mathrm{max}(\|\hat{U}\mathrm{sgn}(H_{\hat{U}})-U\|_{2\rightarrow\infty},\|\hat{V}\mathrm{sgn}(H_{\hat{V}})-V\|_{2\rightarrow\infty})\\
&\leq C\frac{\sqrt{\rho\gamma K\mathrm{log}(N+J)})+\sqrt{\rho\gamma K}\kappa(R_{0})\sqrt{\frac{\mathrm{max}(N,J)}{\mathrm{min}(N,J)}\mu}}{\sigma_{K}(R_{0})}\\
&=O(\frac{\sqrt{\rho\gamma K\mathrm{log}(N+J)}}{\sigma_{K}(R_{0})})\leq O(\frac{\sqrt{\rho\gamma K\mathrm{log}(N+J)}}{\rho\sigma_{K}(\Pi)\sigma_{K}(B)}),
\end{align*}
where the last inequality is held by Lemma \ref{P4}.
Set $\varpi=\mathrm{max}(\|\hat{U}\hat{U}'-UU'\|_{2\rightarrow\infty},~\|\hat{V}\hat{V}'-VV'\|_{2\rightarrow\infty})$ as the row-wise eigenspace error. Because $\|\hat{U}\hat{U}'-UU'\|_{2\rightarrow\infty}\leq2\|U-\hat{U}\mathrm{sgn}(H_{\hat{U}})\|_{2\rightarrow\infty}$ and $\|\hat{V}\hat{V}'-VV'\|_{2\rightarrow\infty}\leq2\|V-\hat{V}\mathrm{sgn}(H_{\hat{V}})\|_{2\rightarrow\infty}$, we have
\begin{align}\label{rowwise}	
\varpi=O(\frac{\sqrt{\rho\gamma K\mathrm{log}(N+J)}}{\rho\sigma_{K}(\Pi)\sigma_{K}(B)}).
\end{align}

The per-subject error rates of the membership matrix $\Pi$ are the same as that of mixed membership community detection. By the proofs of Theorem 3.2 \citep{mao2021estimating} and Theorem 1 \citep{qing2023bipartite}, we know that with probability at least $1-O(\frac{1}{(N+J)^{5}})$, there exists a $K\times K$ permutation matrix $\mathcal{P}$ such that for $i\in[N]$,
\begin{align}\label{generalPi}	\|e'_{i}(\hat{\Pi}-\Pi\mathcal{P})\|_{1}=O(\varpi\kappa(\Pi'\Pi)K\sqrt{\lambda_{1}(\Pi'\Pi)}).
\end{align}

By Equations (\ref{rowwise})-(\ref{generalPi}) and Condition \ref{c1}, we have
\begin{align}\label{boundPihat}	\|e'_{i}(\hat{\Pi}-\Pi\mathcal{P})\|_{1}=O(\varpi\kappa(\Pi'\Pi)K\sqrt{\lambda_{1}(\Pi'\Pi)})=O(\sqrt{\frac{\gamma\mathrm{log}(N+J)}{\rho J}}).
\end{align}

If we do not consider Condition \ref{c1}, though we can always obtain the bound of $\|e'_{i}(\hat{\Pi}-\Pi\mathcal{P})\|_{1}$ using Equation (\ref{generalPi}), we can not get the transparent form provided in Equation (\ref{boundPihat}) since we fail to simplify $\varpi$ without Condition \ref{c1} when we use Theorem 4.4 of \citep{chen2021spectral} to obtain $\varpi$'s upper bound.

Next, we consider the theoretical upper bound of $\|\hat{\Theta}-\Theta\mathcal{P}\|$. First, we have
\begin{align}
\|\hat{\Theta}-\Theta\mathcal{P}\|&=\|\hat{R}'\hat{\Pi}(\hat{\Pi}'\hat{\Pi})^{-1}-R'_{0}\Pi(\Pi'\Pi)^{-1}\mathcal{P}\|\notag\\
&=\|(\hat{R}'-R'_{0})\hat{\Pi}(\hat{\Pi}'\hat{\Pi})^{-1}+R'_{0}(\hat{\Pi}(\hat{\Pi}'\hat{\Pi})^{-1}-\Pi(\Pi'\Pi)^{-1}\mathcal{P})\|\notag\\
&\leq\|(\hat{R}'-R'_{0})\hat{\Pi}(\hat{\Pi}'\hat{\Pi})^{-1}\|+\|R'_{0}(\hat{\Pi}(\hat{\Pi}'\hat{\Pi})^{-1}-\Pi(\Pi'\Pi)^{-1}\mathcal{P})\|\notag\\
&\leq\|\hat{R}-R_{0}\|\|\hat{\Pi}(\hat{\Pi}'\hat{\Pi})^{-1}\|+\|R_{0}\|\hat{\Pi}(\hat{\Pi}'\hat{\Pi})^{-1}-\Pi(\Pi'\Pi)^{-1}\mathcal{P}\|\notag\\
&\leq2\|R-R_{0}\|\|\hat{\Pi}(\hat{\Pi}'\hat{\Pi})^{-1}\|+\|R_{0}\|\hat{\Pi}(\hat{\Pi}'\hat{\Pi})^{-1}-\Pi(\Pi'\Pi)^{-1}\mathcal{P}\|\notag\\
&=2\|R-R_{0}\|\|\hat{\Pi}(\hat{\Pi}'\hat{\Pi})^{-1}\|+\|\rho \Pi B'\|\hat{\Pi}(\hat{\Pi}'\hat{\Pi})^{-1}-\Pi(\Pi'\Pi)^{-1}\mathcal{P}\|\notag\\
&\leq2\|R-R_{0}\|\|\hat{\Pi}(\hat{\Pi}'\hat{\Pi})^{-1}\|+\rho \|\Pi\|\|B\|\|\hat{\Pi}(\hat{\Pi}'\hat{\Pi})^{-1}-\Pi(\Pi'\Pi)^{-1}\mathcal{P}\|\notag\\
&=\frac{2\|R-R_{0}\|}{\sigma_{K}(\hat{\Pi})}+\rho\sigma_{1}(\Pi)\sigma_{1}(B)\|\hat{\Pi}(\hat{\Pi}'\hat{\Pi})^{-1}-\Pi(\Pi'\Pi)^{-1}\mathcal{P}\|\label{boundThetahat1}.
\end{align}

The upper bound of $\|\hat{\Theta}-\Theta\mathcal{P}\|$ can be obtained immediately as long as we can obtain the lower bound of $\sigma_{K}(\hat{\Pi})$, the upper bounds of $\|R-R_{0}\|$ and $\|\hat{\Pi}(\hat{\Pi}'\hat{\Pi})^{-1}-\Pi(\Pi'\Pi)^{-1}\mathcal{P}\|$. We bound the three terms subsequently.

For $\sigma_{K}(\hat{\Pi})$, Weyl's inequality for singular values \citep{weyl1912das} gives us:
\begin{align}\label{BoundUsed}
|\sigma_{K}(\hat{\Pi})-\sigma_{K}(\Pi)|&\leq\|\hat{\Pi}-\Pi\mathcal{P}\|\notag\\
&\leq \sqrt{N}\mathrm{~max}_{i\in[N]}\|e'_{i}(\hat{\Pi}-\Pi\mathcal{P})\|_{1}\notag\\
&\overset{\mathrm{By~Equation~}(\ref{boundPihat})}{=}O(\sqrt{\frac{\gamma N\mathrm{log}(N+J)}{\rho J}}).
\end{align}

Then we have $\sigma_{K}(\hat{\Pi})\geq\sigma_{K}(\Pi)-O(\sqrt{\frac{\gamma N\mathrm{log}(N+J)}{\rho J}})$. Note that $\sigma_{K}(\Pi)=O(\sqrt{\frac{N}{K}})$ by Condition \ref{c1} and $\frac{\gamma\mathrm{log}(N+J)}{\rho J}\ll1$ should hold to make the error rate in Equation (\ref{boundPihat}) close to zero, we have $\sigma_{K}(\Pi)-O(\sqrt{N}\sqrt{\frac{\gamma\mathrm{log}(N+J)}{\rho J}})=O(\sqrt{N/K})$, i.e.,
\begin{align}\label{boundsigmaK}
\sigma_{K}(\hat{\Pi})\geq O(\sqrt{N/K}).
\end{align}

For the term $\|R-R_{0}\|$, when Assumption \ref{a1} holds, Lemma 2 of \citep{qing2023community} says that, with probability at least $1-o((N+J)^{-5})$, we have
\begin{align}\label{boundRR0}
\|R-R_{0}\|=O(\sqrt{\gamma\rho\mathrm{max}(N,J)\mathrm{log}(N+J)}).
\end{align}

For $\|\hat{\Pi}(\hat{\Pi}'\hat{\Pi})^{-1}-\Pi(\Pi'\Pi)^{-1}\mathcal{P}\|$, by Condition \ref{c1},  Equation (\ref{BoundUsed}), and Equation (\ref{boundsigmaK}), we have
\begin{align}
\|\hat{\Pi}(\hat{\Pi}'\hat{\Pi})^{-1}-\Pi(\Pi'\Pi)^{-1}\mathcal{P}\|&=\|(\hat{\Pi}-\Pi\mathcal{P})(\hat{\Pi}'\hat{\Pi})^{-1}+\Pi(\mathcal{P}(\hat{\Pi}'\hat{\Pi})^{-1}-(\Pi'\Pi)^{-1}\mathcal{P})\|\notag\\
&\leq\|\hat{\Pi}-\Pi\mathcal{P}\|\|(\hat{\Pi}'\hat{\Pi})^{-1}\|+\|\Pi\|\|\mathcal{P}(\hat{\Pi}'\hat{\Pi})^{-1}-(\Pi'\Pi)^{-1}\mathcal{P}\|\notag\\
&=\frac{\|\hat{\Pi}-\Pi\mathcal{P}\|}{\sigma^{2}_{K}(\hat{\Pi})}+\sigma_{1}(\Pi)\|\mathcal{P}(\hat{\Pi}'\hat{\Pi})^{-1}-(\Pi'\Pi)^{-1}\mathcal{P}\|\notag\\
&\leq\frac{\|\hat{\Pi}-\Pi\mathcal{P}\|_{F}}{\sigma^{2}_{K}(\hat{\Pi})}+\sigma_{1}(\Pi)\|\mathcal{P}(\hat{\Pi}'\hat{\Pi})^{-1}-(\Pi'\Pi)^{-1}\mathcal{P}\|\notag\\
&\leq\frac{\|\hat{\Pi}-\Pi\mathcal{P}\|_{F}}{\sigma^{2}_{K}(\hat{\Pi})}+\sigma_{1}(\Pi)(\|(\hat{\Pi}'\hat{\Pi})^{-1}\|+\|(\Pi'\Pi)^{-1}\|)\notag\\
&=\frac{\|\hat{\Pi}-\Pi\mathcal{P}\|_{F}}{\sigma^{2}_{K}(\hat{\Pi})}+\sigma_{1}(\Pi)(\frac{1}{\sigma^{2}_{K}(\hat{\Pi})}+\frac{1}{\sigma^{2}_{K}(\Pi)})\notag\\
&=O(\sqrt{\frac{\gamma\mathrm{log}(N+J)}{\rho NJ}})+O(\sqrt{\frac{1}{N}})=O(\sqrt{\frac{1}{N}}).\label{BoundInvPi}
\end{align}

By Condition \ref{c1}, Equations (\ref{boundThetahat1}), (\ref{boundsigmaK}), (\ref{boundRR0}), and (\ref{BoundInvPi}), we have
\begin{align*}
\|\hat{\Theta}-\Theta\mathcal{P}\|&=O(\sqrt{\frac{\gamma\rho\mathrm{max}(N,J)\mathrm{log}(N+J)}{N}})+O(\rho\sqrt{J}).
\end{align*}

Since $\|\hat{\Theta}-\Theta\mathcal{P}\|_{F}\leq\sqrt{K}\|\hat{\Theta}-\Theta\mathcal{P}\|$, by Condition \ref{c1}, we have
\begin{align*}
\|\hat{\Theta}-\Theta\mathcal{P}\|_{F}=O(\sqrt{\frac{\gamma \rho\mathrm{max}(N,J)\mathrm{log}(N+J)}{N}})+O(\rho\sqrt{J}).
\end{align*}
Since $\|\Theta\|_{F}\geq\|\Theta\|=\rho\sigma_{1}(B)=O(\rho\sqrt{\frac{J}{K}})=O(\rho\sqrt{J})$ by Condition \ref{c1}, we have
\begin{align*}
\frac{\|\hat{\Theta}-\Theta\mathcal{P}\|_{F}}{\|\Theta\|_{F}}=O(\sqrt{\frac{\gamma\mathrm{max}(N,J)\mathrm{log}(N+J)}{\rho NJ}}).
\end{align*}
\begin{rem}
By Lemma \ref{P4} and Condition \ref{c1}, we have $\sigma_{K}(R_{0})\geq O(\rho\sqrt{NJ})$. Since $\rho\sqrt{NJ}\gg\sqrt{\rho\gamma(N+J)\mathrm{log}(N+J)}=O(\sqrt{\rho\gamma N\mathrm{log}(N+J)})$ by Condition \ref{c1} $\Leftrightarrow\rho J\gg\gamma\mathrm{log}(N+J)$, we see that the inequality $\sigma_{K}(R_{0})\gg \sqrt{\rho\gamma(N+J)\mathrm{log}(N+J)}$ (this inequality is required when we use Theorem 4.4. \citep{chen2021spectral}) holds naturally as long as the bound in Equation (\ref{boundPihat}) is sufficiently small, i.e., $\rho J\gg\gamma\mathrm{log}(N+J)$.
\end{rem}
\end{proof}
\section{Extra instances}\label{ExtraInstances}
\begin{Ex}\label{Poisson}
Let $\mathcal{F}$ be a \texttt{Poisson distribution}, we have $R(i,j)\sim \mathrm{Poisson}(R_{0}(i,j))$. Sure, $\mathbb{E}(R(i,j))=R_{0}(i,j)$ holds. For this case, we have the below results.
\begin{itemize}
  \item $R(i,j)$ is a nonnegative integer, $B(i,j)]\in[0,1]$, and $\rho\in(0,+\infty)$ since the mean of Poisson distribution can be set as any positive value. Similar to the Binomial distribution, $R(i,j)$ can represent counts or frequencies of a response. The sorting aspect here is the same: higher values indicate a greater or more frequent response.
  \item $\tau$'s upper bound is unknown since we never know the  exact upper bound of $R(i,j)$ when we generate $R$ from Poisson distribution under WGoM; $\gamma\leq1$ since $\gamma=\frac{\mathrm{max}_{i\in[N],j\in[J]}\mathrm{Var}(R(i,j))}{\rho}=\frac{\mathrm{max}_{i\in[N],j\in[J]}R_{0}(i,j)}{\rho}\leq1$.
  \item Let $\gamma=1$, Assumption \ref{a1} is $\rho\geq\frac{\tau^{2}\mathrm{log}(N+j)}{\mathrm{max}(N,J)}$; Error bounds in Theorem \ref{mainWGoM} are the same as Instance \ref{Bernoulli}.
\end{itemize}
\end{Ex}
\begin{Ex}\label{Exponential}
Let $\mathcal{F}$ be a \texttt{Exponential distribution} such that $R(i,j)\sim \mathrm{Exponential}(\frac{1}{R_{0}(i,j)})$. Sure, $\mathbb{E}(R(i,j))=R_{0}(i,j)$ holds. For this case, we have the following conclusions.
\begin{itemize}
\item $R(i,j)\in(0,+\infty)$, $B(i,j)\in(0,1]$, and $\rho\in(0,+\infty)$. Similar to the Uniform distribution, though $R$'s elements are continuous, they can represent a sorting aspect where higher values correspond to stronger or more intense or more positive responses.
\item $\tau$ is unknown and $\gamma\leq\rho$ since $\gamma=\frac{\mathrm{max}_{i\in[N],j\in[J]}\mathrm{Var}(R(i,j))}{\rho}=\frac{\mathrm{max}_{i\in[N],j\in[J]}R^{2}_{0}(i,j)}{\rho}\leq\rho$.
\item Let $\gamma=\rho$, Assumption \ref{a1} is $\rho^{2}\geq\frac{\tau^{2}\mathrm{log}(N+J)}{\mathrm{max}(N, J)}$; Error bounds in Theorem \ref{mainWGoM} are the same as Instance \ref{Uniform}. Thus increasing $\rho$ has almost no influence on the error rates of the proposed method.
\end{itemize}
\end{Ex}
\begin{Ex}
Suppose that the elements of $R$ range in $\{-2,1,1.5\}$ (although real-world responses may not take values like $\{-2,~1,~1.5\}$, we still consider this case to emphasize the generality of our WGoM). We aim at constructing a discrete distribution $\mathcal{F}$ satisfying Equation (\ref{RFR0}), where $R\in\{-2,~1,~1.5\}^{N\times J}$ can be generated from distribution $\mathcal{F}$ under our WGoM. Next, we discuss how to construct $\mathcal{F}$. Let $\mathbb{P}(R(i,j)=-2)=p_{1}$, $\mathbb{P}(R(i,j)=1)=p_{2}$, and $\mathbb{P}(R(i,j)=1.5)=p_{3}$, where $p_{q}\geq0$ for $q\in[3]$. Because the summation of all probabilities should be 1, we have
\begin{align}\label{DisP}
p_{1}+p_{2}+p_{3}=1.
\end{align}
To make Equation (\ref{RFR0}) hold, we need
\begin{align}\label{DisR0}
\mathbb{E}(R(i,j))=-2p_{1}+p_{2}+1.5p_{3}\equiv R_{0}(i,j).
\end{align}
Combine Equation (\ref{DisP}) with Equation (\ref{DisR0}), by simple calculation, we have
\begin{align}\label{p2p3}
p_{2}=3-7p_{1}-2R_{0}(i,j),~p_{3}=6p_{1}+2R_{0}(i,j)-2.
\end{align}
For simplicity, assume that $p_{1}=R_{0}(i,j)$. Then Equation (\ref{p2p3}) gives
\begin{align}\label{p1p2p3}
p_{1}=R_{0}(i,j),~p_{2}=3-9R_{0}(i,j),~p_{3}=8R_{0}(i,j)-2.
\end{align}
Because $p_{1}\in[0,1]$, $p_{2}\in[0,1]$, and $p_{3}\in[0,1]$, $R_{0}(i,j)$ should satisfy $R_{0}(i,j)\in[\frac{1}{4},\frac{1}{3}]$. As a result, the discrete distribution $\mathcal{F}$ should be specified as
\begin{align}\label{DisF}
&\mathbb{P}(R(i,j)=-2)=R_{0}(i,j),\notag\\ &\mathbb{P}(R(i,j)=1)=3-9R_{0}(i,j),\notag\\ &\mathbb{P}(R(i,j)=1.5)=8R_{0}(i,j)-2,
\end{align}
where $R_{0}(i,j)$'s range is $[\frac{1}{4},\frac{1}{3}]$. When $\mathcal{F}$ is set as Equation (\ref{DisF}), we can generate a $R\in\{-2,~1,~1.5\}^{N\times J}$ from our model WGoM. Equation (\ref{DisF}) also gives that $\mathrm{Var}(R_{0}(i,j))=-R^{2}_{0}(i,j)+13R_{0}(i,j)-\frac{3}{2}$. Note that for this case, since $R_{0}(i,j)\in[\frac{1}{4},\frac{1}{3}]$ and $R_{0}=\Pi\Theta'$, we have $\Theta(j,k)\in[\frac{1}{4},\frac{1}{3}]$ for $j\in[J], k\in[K]$. Therefore, it's meaningless to set $\Theta=\rho B$ where $\mathrm{max}_{j\in[J],k\in[K]}|B(j,k)|=1$. Instead, we should use $\Theta$ to replace $\rho B$ in the proof of Theorem \ref{mainWGoM}, which gives that Assumption \ref{a1} is a lower bound requirement of $N$ (and $J$) and there is no need to consider $\rho$ and $\gamma$ anymore. Furthermore, we set $p_{1}, p_{2}, p_{3}$ in Equation (\ref{p1p2p3}) as a solution of Equations (\ref{DisP}) and (\ref{DisR0}), actually, there exist other solutions satisfying Equations (\ref{DisP}) and (\ref{DisR0}) by Theorem \ref{ExistDisF}. For example, we can also set $\mathcal{F}$ as
\begin{align*}
&\mathbb{P}(R(i,j)=-2)=-R_{0}(i,j), \\ &\mathbb{P}(R(i,j)=1)=3+5R_{0}(i,j), \\ &\mathbb{P}(R(i,j)=1.5)=-4R_{0}(i,j)-2,
\end{align*}
where $R_{0}(i,j)\in[-\frac{3}{5},-\frac{1}{2}]$. Therefore, there are many possible solutions to make Equations (\ref{DisP}) and (\ref{DisR0}) hold.
\end{Ex}
\begin{Ex}
Suppose that $R$'s entries range in $\{-2,1\}$. By Theorem \ref{ExistDisF}, $\mathcal{F}$ should be defined as
\begin{align*}
\mathbb{P}(R(i,j)=-2)=\frac{1-R_{0}(i,j)}{3},~\mathbb{P}(R(i,j)=1)=\frac{2+R_{0}(i,j)}{3},
\end{align*}
where $R_{0}(i,j)\in[-2,1]$. We also have $\mathrm{Var}(R(i,j))=-R^{2}_{0}(i,j)-R_{0}(i,j)+2$, which ranges in $[0,\frac{9}{4}]$. Since $R_{0}=\Pi\Theta'$, we see that $\Theta(j,k)\in[-2,1]$ for $j\in[J]$, $k\in[K]$. Because $\Theta\in[-2,1]^{J\times K}$, if we set $\Theta=\rho B$ such that $\mathrm{max}_{j\in[J],k\in[K]}|B(j,k)|=1$, we see that $\rho$'s range is $(0,2]$ when all entries of $B$ are negative and $\rho$'s range is $(0,1]$ when $B$'s entries are nonnegative. We also have $\tau\leq4$ and $\gamma=\frac{9}{4\rho}$. Setting $\gamma=\frac{9}{4\rho}$ in Theorem \ref{mainWGoM}, we see that increasing $\rho$ improves SCGoMA's performance.
\end{Ex}
\end{appendices}

%%===========================================================================================%%
%% If you are submitting to one of the Nature Portfolio journals, using the eJP submission   %%
%% system, please include the references within the manuscript file itself. You may do this  %%
%% by copying the reference list from your .bbl file, paste it into the main manuscript .tex %%
%% file, and delete the associated \verb+\bibliography+ commands.                            %%
%%===========================================================================================%%

\bibliography{refWGoM}% common bib file
%% if required, the content of .bbl file can be included here once bbl is generated
%%\input sn-article.bbl

\end{document}